\ifdef{\Bbbk}{}{\usepackage{amssymb}}
\title{Efficient Normalization of Linear Temporal Logic}
\author{Javier Esparza}
\affiliation{\institution{Technical University of Munich}
\department{Fakultät für Informatik}
\streetaddress{Boltzmannstrasse 3}
\city{Garching bei München}
\postcode{85748}
\country{Germany}}
\email{esparza@in.tum.de}
\author{Rubén Rubio}
\affiliation{\institution{Universidad Complutense de Madrid}
\department{Facultad de Informática}
\streetaddress{Calle Profesor José García Santesmases 9}
\city{Madrid}
\postcode{28040}
\country{Spain}}
\email{rubenrub@ucm.es}
\author{Salomon Sickert}
\affiliation{\institution{The Hebrew University}
\department{School of Computer Science and Engineering}
\streetaddress{Rothberg Family Buildings, The Edmond J. Safra Campus}
\city{Jerusalem}
\postcode{91904}
\country{Israel}}
\email{salomon.sickert@mail.huji.ac.il}
\tikzset{state/.style={
  rectangle,
  rounded corners,
  draw=black,
  minimum height=2em,
  minimum width=2em,
  align=center}
}
\tikzset{every picture/.append style={initial text=}}
\tikzset{accepting/.style = {double}}
\tikzset{>=stealth}
\tikzset{parallel above/.append style={transform canvas={yshift= 1mm}}}
\tikzset{parallel below/.append style={transform canvas={yshift=-1mm}}}
\tikzset{parallel right/.append style={transform canvas={xshift= 1.3mm}}}
\tikzset{parallel left/.append style={transform canvas={xshift=-1.3mm}}}
\newcommand{\lang}[0]{\ensuremath{\mathcal{L}}}
\newcommand\mayrenewcommand[2]{\ifdefined#1\expandafter\renewcommand \else\expandafter\newcommand\fi{#1}{#2}}
\newcommand{\true}{{\ensuremath{\mathbf{t\hspace{-0.5pt}t}}}}
\newcommand{\false}{{\ensuremath{\mathbf{ff}}}}
\newcommand{\F}{{\ensuremath{\mathbf{F}}}}
\mayrenewcommand{\G}{{\ensuremath{\mathbf{G}}}}
\newcommand{\X}{{\ensuremath{\mathbf{X}}}}
\mayrenewcommand{\U}{\ensuremath{\mathbin{\mathbf{U}}}}
\newcommand{\W}{\ensuremath{\mathbin{\mathbf{W}}}}
\newcommand{\M}{\ensuremath{\mathbin{\mathbf{M}}}}
\newcommand{\R}{\ensuremath{\mathbin{\mathbf{R}}}}
\renewcommand{\P}{{\ensuremath{\mathbf{P}}}}
\newcommand{\Q}{{\ensuremath{\mathbf{Q}}}}
\newcommand{\subf}{\textit{sf}\,}
\newcommand{\setmu}{\ensuremath{M}}
\newcommand{\setnu}{\ensuremath{N}}
\newcommand{\eval}[2]{{  {#1}{\llangle{#2}\rrangle} }}
\newcommand{\evalgf}[2]{{  {#1}\llbracket{#2}\rrbracket    }}
\newcommand{\flatten}[2]{{#1\langle#2\rangle}}
\newcommand{\mubasis}{B_{\G\F}}
\newcommand{\nubasis}{B_{\F\G}}
\newcommand{\Univ}{\mathcal{U}}
\renewcommand{\flat}[2]{#1\langle #2 \rangle}
\newcommand{\con}{C}
\newcommand{\context}[2]{\langle {#1} \!\mid\! {#2} \rangle}
\newcommand{\flatn}[3]{#1\langle #2 \!\mid \! #3 \rangle}
\newcommand{\llangle}{\langle\hspace{-0.075cm}\langle}
\newcommand{\rrangle}{\rangle\hspace{-0.075cm}\rangle}
\newcommand{\flatnn}[3]{#1\llangle #2 \!\mid \! #3 \rrangle}
\newcommand{\langn}[2]{L_{#1 | #2}}
\newcommand{\strict}[1]{{#1}\mathchoice{\!\!\downarrow}{\!\!\downarrow}{\downarrow}{\downarrow}}
\newcommand\N{\ensuremath{\mathbb{N}}}
\newcommand{\trans}[1]{\overset{#1}{\longrightarrow}}
\newcommand{\Op}{\mathbin{op}}
\newcommand{\aww}{\textnormal{AWW}}
\newcommand{\alw}{\textnormal{A1W}}
\newcommand{\dbw}{\textnormal{DBW}}
\newcommand{\dcw}{\textnormal{DCW}}
\newcommand{\drw}{\textnormal{DRW}}
\newcommand{\A}{\textbf{A}}
\newcommand{\GF}{\ensuremath{\mathbf{G\hspace{-0.030cm}F}\,}}
\newcommand{\FG}{\ensuremath{\mathbf{F\hspace{-0.030cm}G}\,}}
\newcommand{\fnf}{1-form}
\newcommand{\snf}{1-2-form}
\newcommand{\nnodes}[1]{|#1|}
\newcommand{\gfba}[1]{n_{\text{lim}}({#1})}
\newcommand{\ubw}[1]{n_{u}({#1})}
\newcommand{\hole}{[ \quad ]}
\newcommand{\rank}[1]{\mathit{rank}({#1})}
\newif\ifarxiv
\renewcommand\@copyrightpermission{This is the authors' version of the work submitted to the Journal of the ACM.}
\begin{abstract}
In the mid 80s, Lichtenstein, Pnueli, and Zuck proved a classical theorem stating that every formula of Past LTL (the extension of LTL with past operators) is equivalent to a formula of the form $\bigwedge_{i=1}^n \G\F \varphi_i \vee \F\G \psi_i $, where $\varphi_i$ and $\psi_i$ contain only past operators. Some years later, Chang, Manna, and Pnueli built on this result to derive a similar normal form for LTL. Both normalization procedures have a non-elementary worst-case blow-up, and follow an involved path from formulas to counter-free automata to star-free regular expressions and back to formulas. We improve on both points. We present direct and purely syntactic normalization procedures for LTL, yielding a normal form very similar to the one by Chang, Manna, and Pnueli, that exhibit only a single exponential blow-up. As an application, we derive a simple algorithm to translate LTL into deterministic Rabin automata. The algorithm normalizes the formula, translates it into a special very weak alternating automaton, and applies a simple determinization procedure, valid only for these special automata.

\end{abstract}
\keywords{Linear Temporal Logic, Normal Form, Weak Alternating Automata, Deterministic Automata}
\begin{document}
\maketitle
\section{Introduction}

In the late 1970s, Amir Pnueli introduced Linear Temporal Logic (LTL) into
computer science as a framework for specifying and verifying concurrent programs \cite{Pnueli77,Pnueli81}, 
a contribution that earned him the 1996 Turing Award.  During the 1980s and the early 1990s, Pnueli et al.\ proceeded to study the properties expressible in LTL. In 1985, Lichtenstein, Pnueli and Zuck introduced a classification of LTL properties \cite{LPZ85}, later described in detail by Manna and Pnueli in two famous monographs \cite{MannaP89,MannaPnueli91}, where also gave it its current name,  the \emph{safety-progress} hierarchy (see also \cite{PitermanP18} for a brief account).  The safety-progress hierarchy consists of a \emph{safety} class of formulas, and five \emph{progress} classes. The classes are defined semantically in terms of their models, and the largest class, called the \emph{reactivity} class in \cite{MannaP89,MannaPnueli91}, contains all properties expressible in LTL.  Manna and Pnueli provide syntactic characterizations of each class. In particular, they state a fundamental theorem showing that every reactivity property is expressible as a conjunction of formulas of the form $\G\F\varphi \vee \F\G\psi$, where $\F \chi$ and $\G \chi$ mean that $\chi$ holds at some and at every point in the future, respectively, and  $\varphi, \psi$ only contain past operators. Manna and Pnueli call this the normal form for Past LTL.

Technically, the works above consider Past LTL, an extension of LTL with past operators.
In 1992, Chang, Manna, and Pnueli presented a different and very elegant characterization of the safety-progress hierarchy in terms of standard LTL, the logic containing only the future operators $\X$ (next), $\U$ (until), and $\W$ (weak until) \cite{ChangMP92}.  They showed that every reactivity formula is equivalent to an LTL formula in negation normal form, such that every path through the syntax tree contains at most one alternation of $\U$ and $\W$. We call this fundamental result the Normalization Theorem. In the notation of \cite{CernaP03,PelanekS05,SickertE20}, which mimics the definition of the $\Sigma_i$, $\Pi_i$, and $\Delta_i$ classes of the arithmetical and polynomial hierarchies, they proved that every LTL formula is equivalent to a $\Delta_2$-formula.

While these normal forms for LTL have had large conceptual impact in model checking, automatic synthesis, and deductive verification (see e.g.\ \cite{PitermanP18} for a recent survey), the normalization \emph{procedures} for LTL have had none. The reason is that many of the known procedures are not direct, meaning that they require to translate formulas into automata and back, all have high complexity, and their correctness proofs are involved.  Let us elaborate on this. 
\begin{itemize}
\item As mentioned above, the normal form for Past LTL is first stated by Lichtenstein, Pnueli, and Zuck in page 208 of \cite{LPZ85}. It is a consequence of Theorem 2, whose proof  is not given in the paper it is only said that the proof is based on many previous results, including results from five different papers \cite{BuechiWeak,MP71,Choueka74,Thomas81,GabbayPSS80}.
\item In three chapters of her PhD thesis~\cite{Zuck86}, Zuck gives a detailed description of the normalization procedure of \cite{LPZ85}. First, Zuck translates the initial Past LTL formula into a counter-free automaton, then applies the Krohn-Rhodes cascade decomposition of finite automata and other results to translate the automaton into a star-free regular expression, and finally translates this expression into a reactivity formula with a non-elementary blow-up. 
\item The normal form for LTL is stated in section 4.2, page 296, of Manna and Pnueli's monograph \cite{DBLP:books/daglib/0077033}, and in  their PODC 1990  survey paper \cite{MannaP89} (page 399 of the proceedings). The proof is said to be out of scope and no normalization procedure is presented.
\item In \cite{ChangMP92}, Chang, Manna and Pnueli state the Normalization Theorem for LTL. They give a rough sketch of the normalization procedure, and do not give a proof. The normalization procedure uses the translation of \cite{LPZ85,Zuck86} from star-free regular expressions to Past LTL as a subroutine, and so it is at least as complex as the one of \cite{LPZ85,Zuck86}. 
\item A normalization procedure for LTL of elementary complexity can be obtained by combining work by Maler and Pnueli on the Krohn-Rhodes decomposition \cite{MP90,MP94,Mal10} with a recent result by Boker, Lehtinen and Sickert on translating automata back into LTL (when possible) \cite{BokerLS22}. The procedure has three steps. First, the  formula is translated into a deterministic $\omega$-regular automaton using e.g.\ Safra's double-exponential construction \cite{Safra88}. Second, this automaton is translated into an equivalent deterministic and counter-free automaton, using the single-exponential construction of \cite{MP90}. Finally, this automaton is translated into a $\Delta_2$-formula using the triple-exponential construction of \cite{BokerLS22}. This combination of constructions yields an elementary normalization procedure, which moreover produces an equivalent formula at the smallest possible level of the safety-progress hierarchy. However, the procedure is indirect and has high complexity. In particular, while future work may improve the blow-up of the last two steps, the double exponential bound on the translation of LTL into deterministic $\omega$-automata is tight. Therefore, any procedure that constructs a deterministic $\omega$-automaton as intermediate step must have at least double-exponential complexity.
\item In \cite{Reynolds00} and \cite{Guelev08}, Reynolds and Guelev give direct proofs of the normalization theorem for Past LTL that do not require to translate formulas into automata. Both proofs rely on different versions of Gabbay's famous separation theorem, stating that every formula of Past LTL is equivalent to a Boolean combination of past and future formulas \cite{Gabbay89,GabbayHR94}. Gabbay's theorem is proved by means of equivalence preserving syntactic transformations, and so are the proofs of 
\cite{Reynolds00,Guelev08}. However, the only known upper bound on the blow-up in the size of the formula produced by Gabbay's separation procedure is non-elementary. Oliveira and Rasga give a double-exponential separation algorithm for a fragment of Past LTL that restricts the nesting of the since and until operators \cite{OR20}.
\end{itemize}

It is remarkable that, despite the prominence of the safety-bounded hierarchy in the work of Manna and Pnueli, the complexity of normalization procedures has not been studied further, even though no lower bound for the blow-up it involves is known. In particular, and contrary to the case of propositional and first-order logic, where efficient normalization algorithms for conjunctive and clausal normal form are essential part of SAT or first-order theorem provers, normalization has not been used in LTL to obtain more efficient algorithms for satisfiability, model-checking, or synthesis tasks.  This paper contains three main results that, in our opinion, completely change this situation:
\begin{enumerate}
\item A simple proof of the Normalization Theorem for LTL. The proof is direct (it does not require any knowledge of automa\-ta or regular expressions), gives a clear intuitive explanation of why normalization is possible, and yields a closed form for the normalized formula with \emph{single-exponential} blow-up. 
\item An efficient normalization algorithm. The normalization procedure given in (1) has exponential \emph{best-case} complexity, and is not goal-oriented, in the sense that it does not only concentrate on those parts of the formula that do not belong to $\Delta_2$. We provide a normalization algorithm consisting of six rewrite rules that solves these problems. In particular, the rewrite rules can be applied locally to ``offending subformulas''.
\item A novel translation of LTL into deterministic Rabin automata (\drw) that exploits the results of (1) and (2). The translation normalizes the formula and then transforms it into an alternating Rabin automaton with at most one alternation between accepting and non-accepting states (\alw{1}). This automaton is determinized by means of a novel, dedicated algorithm  for \alw{1}, with better properties than Safra's construction \cite{Safra88}. In particular, the states of the \drw\ are pairs of sets of states of the \alw{1}, instead of trees of sets of states, as would be the case with Safra's construction. This simpler state structure leads to smaller \drw.
\end{enumerate}

At the heart of our first result is a novel technique, interesting in its own,  called \emph{contextual normalization}. Loosely speaking, in order to normalize a formula $\varphi$ interpreted on infinite words over some given alphabet $\Sigma$, the technique constructs a finite cover of the set $\Sigma^\omega$. This is achieved by carefully selecting a set of formulas $B$, called a \emph{basis}. The cover contains one set of words $\langn{C}{B}$ for every $C \subseteq B$, defined as the set of the words satisfying all formulas of $C$ and none of $B \setminus C$.  For every $C \subseteq B$, we find a formula $\flatn{\varphi}{\con}{B}$ of $\Delta_2$ equivalent to $\varphi$ over all words of $\langn{C}{B}$ (that is, every word of  $\langn{C}{B}$ satisfies either both formulas or none). Intuitively, $\flatn{\varphi}{\con}{B}$ is equivalent to $\varphi$ in the context of $C$. Then we ``patch together'' these formulas to obtain a formula equivalent to $\varphi$ on all words.

The second result shows that LTL formulas can also be normalized by means of a rewrite system, just as one brings a Boolean formula in CNF; the only difference is the need for \emph{contextual} rewrite rules, specifying that a subformula of a formula can only be rewritten if the formula has a certain form.

The third result shows that, on top of the central role it plays in the work of Manna and Pnueli, normalization can lead to novel algorithms for questions in the theory and applications of LTL that continue to be investigated today. In particular, efficient translations from LTL to automata exhibiting different degrees of nondeterminism are being intensely studied, due to their applications to controller synthesis and probabilistic model checking, among others (see e.g.\ \cite{EsparzaKS20,CasaresCF21,CasaresDMRS22,DuretLutzRCRAS22,JohnJBK21,JohnJBK22}). The translation of (3) based on normalization has already become part of the \textsc{Owl} library for $\omega$-automata \cite{KretinskyMS18} and it is used in the \textsc{Strix} synthesis tool \cite{MeyerSL18}. 

The paper is structured as follows. Section \ref{sec:prelims} introduces the syntax and semantics of LTL. Section \ref{sec:hierarchy} introduces Manna and Pnueli's safety-progress hierarchy---following the notation of Cern{\'{a}} and
 Pel{\'{a}}nek \cite{CernaP03}---and recalls the Normalization Theorem presented in \cite{LPZ85,MannaP89,ChangMP92}.
 Section \ref{sec:firstproof} presents our novel proof of the theorem based on contextual equivalence, and in particular Theorem \ref{thm:normthm} on page \pageref{thm:normthm}, our first normalization procedure. Section \ref{sec:main} describes a normalizing rewrite system for LTL consisting of the rules presented in Tables \ref{tab:allrules} and \ref{tab:allrulesRM} in pages \pageref{tab:allrules} and \pageref{tab:allrulesRM}, respectively. Section \ref{sec:LTLtoDRW} introduces a translation from LTL to deterministic Rabin automata based on normalization, summarized in  \Cref{thm:mainLTLDRW} in page \pageref{thm:mainLTLDRW}. Section \ref{sec:hierarchy-automata} shows a tight correspondence between the classes of the safety-progress hierarchy and weak alternating automata. Section \ref{sec:concl} contains some conclusions.

\paragraph{Remark.} This paper is a revised and extended version of previous work by the authors published in \cite{SickertE20,henzingerarticle}. Results  (1) and (3) above were first presented in \cite{SickertE20}, and result (2) in \cite{henzingerarticle}.

 \section{Preliminaries}
\label{sec:prelims}

Let $\Sigma$ be a finite alphabet. A \emph{word} $w$ over $\Sigma$ is an infinite sequence of letters $a_0 a_1 a_2 \dots$ with $a_i \in \Sigma$ for all $i \geq 0$. A \emph{finite word} is a finite sequence of letters. The set of all words (finite words) is denoted $\Sigma^\omega$ ($\Sigma^*$). We let $w[i]$ (starting at $i=0$) denote the $i$-th letter of a word $w$. The finite infix $w[i]w[i+1]\dots w[j - 1]$ is abbreviated with $w_{ij}$ and the infinite suffix $w[i] w[i+1] \dots$ with $w_{i}$. We denote the infinite repetition of a finite word $a_0 \dots a_n$ by $(a_0 \dots a_n)^\omega = a_0 \dots a_n a_0 \dots a_n a_0 \dots$. A set of (finite or infinite) words is called a language.

\subsection{Syntax and semantics of Linear Temporal Logic}
\label{sec:syse}

Formulas of Linear Temporal Logic (LTL) over a finite set $Ap$ of atomic propositions are constructed by the following syntax:
\begin{align}
\label{syntax}
\varphi \Coloneqq \; & \true \mid \false \mid a \mid \neg \varphi \mid \varphi \wedge \varphi \mid \varphi\vee\varphi 
                      \mid \X\varphi \mid \varphi\U\varphi \mid \varphi\W\varphi \mid \varphi\R\varphi \mid \varphi\M\varphi 
\end{align}
\noindent where $a \in Ap$ is an atomic proposition and $\X$, $\U$, $\W$, $\R$, and $\M$ 
are the next, (strong) until, weak until, (weak) release, and strong release operators, respectively. Further, we use the standard abbreviations 
$\F \varphi \coloneqq \true \, \U \, \varphi$ (eventually) and $\G \varphi \coloneqq \varphi \, \W \, \false$ (always). The \emph{size} of a formula is the number of nodes of its syntax tree.

Formulas are interpreted on words over the alphabet $\Sigma \coloneq 2^{Ap}$. Let $w$ be such a word  and let $\varphi$ be a formula. The satisfaction relation $w \models \varphi$ is inductively defined as the smallest relation satisfying:
{\arraycolsep=1.8pt\[\begin{array}[t]{lclclcl}
w \models \true               & &  \mbox{ for every $w$ } & &                                                   \\
w \not \models \false         & &  \mbox{ for every $w$ }                                                      \\
w \models a & \mbox{ if{}f }    & a \in w[0]                                                     \\
w \models  \neg \varphi   & \mbox{ if{}f } & w \not\models \varphi                                   \\
w \models \varphi \wedge \psi & \mbox{ if{}f } & w \models \varphi \text{ and } w \models \psi   \\
w \models \varphi \vee \psi   & \mbox{ if{}f } & w \models \varphi \text{ or } w \models \psi    \\
\end{array} \qquad
\begin{array}[t]{lcl}
w \models \X \varphi      & \mbox{ if{}f } & w_1 \models \varphi \\
w \models \varphi \U \psi & \mbox{ if{}f } & \exists k. \, w_k \models \psi \text{ and } \forall j < k. \, w_j \models \varphi \\
w \models \varphi \M \psi & \mbox{ if{}f } & \exists k. \, w_k \models \varphi \text{ and } \forall j \leq k. \, w_j \models \psi \\
w \models \varphi \R \psi & \mbox{ if{}f } & \forall k. \, w_k \models \psi \text{ or } w \models \varphi\M \psi \\
w \models \varphi \W \psi & \mbox{ if{}f } & \forall k. \, w_k \models \varphi \text{ or } w \models \varphi\U \psi
\end{array}\]}We let $\lang(\varphi) \coloneqq \{ w \in \Sigma^\omega : w \models \varphi\}$ denote the language of $\varphi$.
Two formulas $\varphi$ and $\psi$ are \emph{equivalent}, denoted $\varphi \equiv \psi$, if $\lang(\varphi) = \lang(\psi)$.
We overload the definition of $\models$ and write $\varphi \models \psi$ as a shorthand for $\lang(\varphi) \subseteq \lang(\psi)$.

\paragraph{Monotonic and dual operators.} An operator $\P$  of arity $k \geq 0$ is  \emph{monotonic} if $(\varphi_1 \models \psi_1) \wedge \cdots \wedge (\varphi_k \models \psi_k)$ implies
$\P(\varphi_1, \ldots, \varphi_k) \models \P(\psi_1, \ldots, \psi_k)$. It is easy to see that all operators of the syntax, with the exception of negation, are monotonic.  Two operators $\P$ and $\Q$ of arity $k \geq 0$ are \emph{dual} if $\P(\varphi_1, \ldots, \varphi_k) \equiv \neg{\Q(\neg{\varphi}_1, \ldots, \neg{\varphi}_k)}$ and $\Q(\varphi_1, \ldots, \varphi_k) \equiv \neg{\P(\neg{\varphi}_1, \ldots, \neg{\varphi}_k)}$ for all formulas $\varphi_1, \ldots, \varphi_k$. It is easy to see that $\true$ and $\false$, $\vee$ and $\wedge$, $\U$ and $\R$, and $\W$ and $\M$ are dual, and $\X$ is self dual. 

\paragraph{Negation normal form.} A formula is in \emph{negation normal form} if negations appear only in front of atomic propositions. In other words, the syntax of formulas for negation normal form is obtained by substituting $\neg \varphi$ for $\neg a$ in (\ref{syntax}).
 
The following result is folklore:
\begin{proposition}
\label{prop:nnf}
Every formula of size $n$ has an equivalent formula in negation normal form of size $O(n)$. Further, for every formula $\varphi$ in negation normal form of size $n$ there exists a formula $\overline{\varphi}$ in negation normal form of size $n$ such that $\neg \varphi \equiv \overline{\varphi}$. 
\end{proposition}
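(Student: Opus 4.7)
The plan is to handle the two claims of the proposition by one shared construction. I will define, by mutual structural recursion, two maps $\mathit{nnf}(\cdot)$ and $\overline{\mathit{nnf}}(\cdot)$ from arbitrary LTL formulas to NNF formulas, so that $\mathit{nnf}(\varphi) \equiv \varphi$ and $\overline{\mathit{nnf}}(\varphi) \equiv \neg \varphi$. On constants and atoms I set $\mathit{nnf}(\true) = \true$, $\overline{\mathit{nnf}}(\true) = \false$ (symmetrically for $\false$), $\mathit{nnf}(a) = a$, $\overline{\mathit{nnf}}(a) = \overline{a}$. On negations I cross over: $\mathit{nnf}(\neg \varphi) = \overline{\mathit{nnf}}(\varphi)$ and $\overline{\mathit{nnf}}(\neg \varphi) = \mathit{nnf}(\varphi)$. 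On each binary operator I recurse into the children and, for the $\overline{\mathit{nnf}}$ case, swap the operator with its dual; explicitly, $\overline{\mathit{nnf}}(\varphi \wedge \psi) = \overline{\mathit{nnf}}(\varphi) \vee \overline{\mathit{nnf}}(\psi)$, $\overline{\mathit{nnf}}(\varphi \U \psi) = \overline{\mathit{nnf}}(\varphi) \R \overline{\mathit{nnf}}(\psi)$, $\overline{\mathit{nnf}}(\varphi \W \psi) = \overline{\mathit{nnf}}(\varphi) \M \overline{\mathit{nnf}}(\psi)$, and symmetrically for $\vee$, $\R$, $\M$; and $\overline{\mathit{nnf}}(\X\varphi) = \X \overline{\mathit{nnf}}(\varphi)$ since $\X$ is self-dual.

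The correctness statement $\mathit{nnf}(\varphi) \equiv \varphi$ and $\overline{\mathit{nnf}}(\varphi) \equiv \neg \varphi$ is proved by a simultaneous induction on the structure of $\varphi$, where the binary cases invoke exactly the dualities listed in the preceding paragraph of the paper (``$\true$ and $\false$, $\vee$ and $\wedge$, $\U$ and $\R$, $\W$ and $\M$ are dual, and $\X$ is self dual''). That both maps produce formulas in negation normal form is immediate from the definition, since the only place a $\neg$ could appear is in the $\overline{\mathit{nnf}}(a)$ case, which by definition is the NNF literal $\overline{a}$.

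For the size bound I prove, again by simultaneous structural induction, that $|\mathit{nnf}(\varphi)| \leq |\varphi|$ and $|\overline{\mathit{nnf}}(\varphi)| \leq |\varphi|$. The binary cases are trivial because duals have the same arity, so one root node is produced on each side. The $\neg$ case is where the bound is actually strict: the translation consumes the negation node and returns $\overline{\mathit{nnf}}(\varphi)$, whose size is at most $|\varphi| = |\neg\varphi| - 1$. This yields $|\mathit{nnf}(\varphi)| = O(n)$, proving the first claim.

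The second claim then drops out: if $\varphi$ is already in NNF, define $\overline{\varphi} \coloneqq \overline{\mathit{nnf}}(\varphi)$. Because $\varphi$ contains no $\neg$-nodes and no occurrences of $\overline{a}$ followed by further structure (atoms are leaves), every recursive case in the definition of $\overline{\mathit{nnf}}$ simply swaps one operator for its dual and recurses on the children, so the syntax tree of $\overline{\varphi}$ is isomorphic to that of $\varphi$ and in particular $|\overline{\varphi}| = |\varphi| = n$. I do not anticipate a real obstacle here; the only pitfall is to make sure every operator in the grammar (including both $\W$/$\M$ and $\U$/$\R$) has its dual available so that no negation is ever re-introduced, which the dualities stated in the text guarantee.
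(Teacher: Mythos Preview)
Your proposal is correct and follows essentially the same approach as the paper: push negations inward using the dualities between $\wedge/\vee$, $\U/\R$, $\W/\M$, and self-duality of $\X$, and then observe that on an NNF input this dualization merely relabels each node without changing the tree shape. The paper's proof is terser but identical in content; your mutual-recursion presentation just makes the size argument more explicit. One tiny omission: for the second claim your $\overline{\mathit{nnf}}$ must also handle the NNF leaf $\overline{a}$ (returning $a$), but this is an obvious one-line addition.
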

\begin{proof}
Formulas are put in negation normal form by ``pushing negations'' across all operators using the duality relations, e.g.\ $\neg(\varphi_1 \U \varphi_2)$ is replaced by $\neg \varphi_1 \R \neg \varphi_2$. These transformations only increase the size of the formula by one unit, and so the formula in negation normal form has size $O(n)$. The formula $\overline{\varphi}$ is defined inductively using the duality relations, e.g.\ 
$\overline{\varphi_1 \U \varphi_2} \coloneq \overline{\varphi_1} \R \overline{\varphi_2}$, $\overline{\X\varphi} \coloneq \X \overline{\varphi}$, etc.; these transformations preserve the size of the formula. 
\end{proof}

\begin{quote}
{\bf Convention}: In the rest of the paper we assume, without explicit mention, that formulas are in negation normal form. Abusing language, we call $\overline{\varphi}$ the negation of $\varphi$.
\end{quote}

 \section{The Safety-Progress Hierarchy}
\label{sec:hierarchy}

We recall the definition of the safety-progress hierarchy, the hierarchy of temporal properties studied by Manna and Pnueli \cite{MannaP89}. We follow the  formulation of {\v{C}}ern{\'{a}} and Pel{\'{a}}nek \cite{CernaP03}. The definition of the hierarchy formalizes the intuition that e.g.\ a safety property is
violated by an execution  if{}f one of its finite prefixes is ``bad'' or, equivalently, satisfied by an execution if{}f all its finite prefixes belong to a language of good prefixes.  In the ensuing sections we describe structures that have a direct correspondence to this hierarchy and in this sense the hierarchy provides a map to navigate the results of this paper.

\begin{definition}[\cite{MannaP89,CernaP03}]Let $P \subseteq \Sigma^\omega$ be a property over $\Sigma$.
\begin{itemize}
\item $P$ is a safety property if there exists a language of finite words $L \subseteq \Sigma^*$ such that $w \in P$ if{}f all finite prefixes of $w$ belong to $L$.
\item $P$ is a guarantee property if there exists a language of finite words $L \subseteq \Sigma^*$ such that $w \in P$ if{}f there exists a finite prefix of $w$ which belongs to $L$.
\item $P$ is an obligation property if it can be expressed as a positive Boolean combination of safety and guarantee properties.
\item $P$ is a recurrence property if there exists a language of finite words $L \subseteq \Sigma^*$ such that $w \in P$ if{}f infinitely many prefixes of $w$ belong to $L$.
\item $P$ is a persistence property if there exists a language of finite words $L \subseteq \Sigma^*$ such that $w \in P$ if{}f all but finitely many prefixes of $w$ belong to $L$.
\item $P$ is a reactivity property if $P$ can be expressed as a positive Boolean combination of recurrence and persistence properties.
\end{itemize}
\end{definition}

The inclusions between these classes are shown in \Cref{fig:temporal_hierarchy}. 
Chang, Manna, and Pnueli give in \cite{ChangMP92} a syntactic characterization of the classes in terms of the following fragments of LTL:

\begin{definition}[Adapted from \cite{CernaP03}]
\label{def:future_hierarchy}
We define the following classes of LTL formulas:
\begin{itemize}
	\item The class $\Sigma_0 = \Pi_0 = \Delta_0$ is the least set of formulas containing $\true$, $\false$, all atomic propositions and their negations, and is closed under the application of conjunction and disjunction.
	\item The class $\Sigma_{i+1}$ is the least set of formulas containing $\Pi_i$ that is closed under the application of conjunction, disjunction, and the $\X$, $\U$, and $\M$ operators.
	\item The class $\Pi_{i+1}$ is the least set of formulas containing $\Sigma_i$ that is closed under the application of conjunction, disjunction, and the $\X$, $\R$, and $\W$ operators.
	\item The class $\Delta_{i+1}$ is the least set of formulas containing $\Sigma_{i+1}$ and $\Pi_{i+1}$ that is closed under the application of conjunction and disjunction.
\end{itemize}
\end{definition}

Observe the behavior of the classes under negation.  Given a set of formulas $F$, let $\overline{F} = \{\overline{\varphi} \mid \varphi \in F \}$. By the definition of $\overline{\phi}$ we have:
\begin{proposition}
\label{prop:dualityhierarchy}
For every $i \geq 0$ we have $\overline{\Sigma}_i=\Pi_i$, $\overline{\Pi}_i = \Sigma_i$, and $\overline{\Delta}_i = \Delta_i$.
\end{proposition}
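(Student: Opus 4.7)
The plan is to proceed by induction on $i$, using the syntactic definition of $\overline{\varphi}$ from \Cref{prop:nnf} together with the duality pairs noted earlier: $\wedge/\vee$, $\U/\R$, $\W/\M$, and the self-duality of $\X$. The proof should be essentially bookkeeping, because $\Sigma_{i+1}$ and $\Pi_{i+1}$ are defined using exactly paired sets of operators; the point is to verify that this pairing lines up with how $\overline{\cdot}$ pushes negations down through the syntax tree.

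For the base case $i=0$, I would observe that $\Sigma_0 = \Pi_0 = \Delta_0$ is the De Morgan closure of literals: since $\overline{a}$ is again a literal and $\overline{\cdot}$ swaps $\wedge$ and $\vee$, a straightforward structural induction on $\varphi \in \Sigma_0$ shows $\overline{\varphi} \in \Sigma_0$. Combined with the involution $\overline{\overline{\varphi}} = \varphi$ (immediate from the inductive definition of $\overline{\cdot}$), this gives $\overline{\Sigma}_0 = \Sigma_0 = \Pi_0 = \Delta_0$.

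For the inductive step, assume $\overline{\Sigma}_i = \Pi_i$ and $\overline{\Pi}_i = \Sigma_i$. I would prove $\overline{\Sigma}_{i+1} \subseteq \Pi_{i+1}$ by structural induction on the derivation of $\varphi \in \Sigma_{i+1}$. If $\varphi \in \Pi_i$, then by the hypothesis $\overline{\varphi} \in \Sigma_i \subseteq \Pi_{i+1}$. For the closure operators, the duality cases are precisely: $\overline{\varphi_1 \wedge \varphi_2} = \overline{\varphi_1} \vee \overline{\varphi_2}$, $\overline{\X\varphi_1} = \X\overline{\varphi_1}$, $\overline{\varphi_1 \U \varphi_2} = \overline{\varphi_1} \R \overline{\varphi_2}$, and $\overline{\varphi_1 \M \varphi_2} = \overline{\varphi_1} \W \overline{\varphi_2}$, each of which produces a formula in $\Pi_{i+1}$ by the inner induction hypothesis, since $\Pi_{i+1}$ is closed under $\wedge, \vee, \X, \R, \W$. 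The symmetric argument gives $\overline{\Pi}_{i+1} \subseteq \Sigma_{i+1}$, and the reverse inclusions follow from $\overline{\overline{\varphi}} = \varphi$: any $\psi \in \Pi_{i+1}$ equals $\overline{\overline{\psi}}$ with $\overline{\psi} \in \Sigma_{i+1}$, so $\psi \in \overline{\Sigma}_{i+1}$.

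Finally, for the $\Delta$ case, any $\varphi \in \Delta_{i+1}$ is built from $\Sigma_{i+1} \cup \Pi_{i+1}$ using only $\wedge$ and $\vee$; pushing negation through these De Morgan dualities produces a formula built from $\overline{\Sigma}_{i+1} \cup \overline{\Pi}_{i+1} = \Pi_{i+1} \cup \Sigma_{i+1}$ using $\wedge$ and $\vee$, hence again in $\Delta_{i+1}$. The converse again follows by involution. There is no real obstacle here; the only thing to watch is that the inductive definition of $\overline{\varphi}$ must match the operator pairings in \Cref{def:future_hierarchy} exactly, so I would want to double-check that $\U/\R$ and $\M/\W$ are placed on the correct sides to keep $\Sigma_{i+1}$ closed under $\U$ and $\M$ while $\Pi_{i+1}$ is closed under $\R$ and $\W$, which is exactly the pairing stated in the definition.
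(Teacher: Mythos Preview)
Your proof is correct and is precisely the argument the paper has in mind: the paper states the proposition as an immediate consequence of the inductive definition of $\overline{\varphi}$ and gives no further details, and your structural induction on $i$ (with the inner induction on the derivation of $\varphi$ in each class) is exactly how one unpacks that claim. There is nothing to add.
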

In particular, \Cref{prop:dualityhierarchy} shows that a formula $\varphi$ is equivalent to a formula of $\Delta_2$ if{}f $\overline{\varphi}$ is. 

The following result, a corollary of the proof of \cite[Thm. 8]{ChangMP92}, shows that the safety-progress hierarchy and the syntactic hierarchy of Definition \ref{def:future_hierarchy} coincide:

\begin{theorem}[Adapted from \cite{CernaP03}]\label{thm:hierarchy:correspondence}
A property that is specifiable in LTL is a guarantee (safety, obligation, persistence, recurrence, reactivity, respectively) property if and only if it is specifiable by a formula from the class $\Sigma_1$, $(\Pi_1$, $\Delta_1$, $\Sigma_2$, $\Pi_2$, $\Delta_2$, respectively$).$
\end{theorem}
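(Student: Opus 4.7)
The plan is to prove both directions of each of the six equivalences: syntactic-to-semantic containment (every formula of $\Sigma_i/\Pi_i/\Delta_i$ defines a property of the corresponding semantic class), and semantic-to-syntactic (every LTL-specifiable property of the given semantic class has an equivalent formula in the corresponding syntactic fragment).

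For the easy direction, I would argue by simultaneous structural induction on the class hierarchy of \Cref{def:future_hierarchy}. The base case $\Sigma_0=\Pi_0=\Delta_0$ is immediate: such formulas depend only on $w[0]$, so their languages are both safety and guarantee. For the inductive step on $\Sigma_1$ I would use the reformulation that $P$ is a guarantee property iff $P = L\cdot\Sigma^\omega$ for some $L\subseteq\Sigma^*$, and then exhibit a suitable $L$ for each constructor: $L_{\X\varphi}=\Sigma\cdot L_\varphi$, while $L_{\varphi\U\psi}$ and $L_{\varphi\M\psi}$ are obtained as concatenations built from $L_\varphi$ and $L_\psi$; conjunction and disjunction correspond to intersection and union of good-prefix languages. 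The $\Pi_1$ (safety) case is dual via bad-prefix languages. For $\Sigma_2$ (persistence) and $\Pi_2$ (recurrence) the inductive hypothesis already gives that the embedded $\Pi_1$ and $\Sigma_1$ subformulas describe safety and guarantee properties, so it remains to show that persistence (resp.\ recurrence) is closed under $\X,\U,\M$ (resp.\ $\X,\R,\W$) and positive Boolean combinations. Closure of the $\Delta_i$ classes under positive Boolean combinations is immediate from their definition, and \Cref{prop:dualityhierarchy} handles transitions between $\Sigma_i$ and $\Pi_i$ subformulas cleanly.

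For the harder direction I would appeal to the Normalization Theorem (\Cref{thm:normthm}) to first place the given LTL formula in $\Delta_2$, i.e.\ as a positive Boolean combination of $\Sigma_2$ and $\Pi_2$ formulas. Given the semantic information that the language lies, say, in the recurrence class, I would then show that each $\Sigma_2$ disjunct either contributes nothing to the language or can be absorbed into a $\Pi_2$ equivalent, so that the whole formula is equivalent to a pure $\Pi_2$ formula; analogous class-by-class refinements yield the cases for safety, guarantee, obligation, and persistence. These absorption arguments hinge on the semantic characterizations of the hierarchy combined with the monotonicity of the LTL operators recalled in \Cref{sec:prelims}.

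The main obstacle is the semantic-to-syntactic direction, where one must refine an arbitrary $\Delta_2$ normal form down into the narrower syntactic class singled out by the semantic property. The syntactic-to-semantic direction is essentially routine once the good-prefix/bad-prefix languages for the temporal operators are written out, the only mild subtlety being the explicit form of $L$ for $\U$ and $\M$. The backward direction was obtained historically by Chang, Manna, and Pnueli \cite{ChangMP92} through an indirect detour via counter-free automata; in our setting it can be carried out directly on top of \Cref{thm:normthm} together with the closure properties established in the first direction.
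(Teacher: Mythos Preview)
The paper does not prove this theorem; it is quoted from \cite{CernaP03} (building on \cite{ChangMP92}), so there is no ``paper's own proof'' to compare against. I therefore assess your plan on its merits.

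Your syntactic-to-semantic direction is a reasonable outline, although some of the closure properties you take for granted are not entirely routine. For instance, showing that the class of LTL-definable persistence languages is closed under $\U$ and $\M$ (needed for the $\Sigma_2$ step) requires producing, from the prefix languages $L_\varphi$ and $L_\psi$, a single prefix language witnessing persistence for $\varphi \U \psi$; this is doable but deserves at least a sentence of justification beyond ``it remains to show''.

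The hard direction, however, has a genuine gap. Invoking \Cref{thm:normthm} only settles the reactivity/$\Delta_2$ case: it gives you a $\Delta_2$ formula, and every LTL property is a reactivity property, so that equivalence is complete. For the other five cases you still have to descend from $\Delta_2$ to the narrower class, and your proposed ``absorption argument'' does not work as stated. Writing the normalized formula as a positive Boolean combination of $\Sigma_2$ and $\Pi_2$ pieces and then arguing that, because the whole language is (say) recurrence, each $\Sigma_2$ disjunct ``either contributes nothing or can be absorbed into a $\Pi_2$ equivalent'' is not justified: individual disjuncts can define non-recurrence languages while their union with the $\Pi_2$ part is recurrence, and there is no monotonicity principle that lets you replace such a $\Sigma_2$ disjunct by a $\Pi_2$ one without already knowing how to express the relevant fragment in $\Pi_2$. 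That replacement \emph{is} the content of the theorem at the lower levels. The classical route in \cite{ChangMP92,CernaP03} goes through automata precisely because this syntactic refinement is not available directly; your proposal offers no substitute mechanism. In short, the plan handles $\Delta_2$ but leaves $\Sigma_1,\Pi_1,\Delta_1,\Sigma_2,\Pi_2$ essentially unproved.
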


\begin{figure}
\begin{subfigure}[c]{0.51\columnwidth}
  \begin{center}
  \small
	\begin{tikzpicture}[x=1cm,y=0.75cm,outer sep=2pt]

    \node (padding1) at ( 0,0.3) {};
	\node (1) at ( 0, 0) {reactivity};
	\node (2) at ( 1,-1) {recurrence};
	\node (3) at (-1,-1) {persistence};
    \node (4) at ( 0,-2) {obligation};
	\node (5) at ( 1,-3) {safety};
	\node (6) at (-1,-3) {guarantee};
	\node (padding2) at ( 0,-3.2) {};

	\path
	(2) edge[draw=none] node[sloped]{$\supset$} (1)
    (3) edge[draw=none] node[sloped]{$\subset$} (1)
    
    (4) edge[draw=none] node[sloped]{$\subset$} (2)
    (4) edge[draw=none] node[sloped]{$\supset$} (3)
    
    (5) edge[draw=none] node[sloped]{$\supset$} (4)
    (6) edge[draw=none] node[sloped]{$\subset$} (4);

	\end{tikzpicture}
  \end{center}
\subcaption{Safety-progress hierarchy \cite{MannaP89}}
\label{fig:temporal_hierarchy}
\end{subfigure}\begin{subfigure}[c]{0.51\columnwidth}
  \begin{center}
  \small
	\begin{tikzpicture}[x=1cm,y=0.75cm,outer sep=2pt]

    \node (padding1) at ( 0,0.3) {};
	\node (1) at ( 0, 0) {$\Delta_2$};
	\node (2) at ( 1,-1) {$\Pi_2$};
	\node (3) at (-1,-1) {$\Sigma_2$};
    \node (4) at ( 0,-2) {$\Delta_1$};
	\node (5) at ( 1,-3) {$\Pi_1$};
	\node (6) at (-1,-3) {$\Sigma_1$};
    \node (padding2) at ( 0,-3.2) {};
    
    \path
	(2) edge[draw=none] node[sloped]{$\supset$} (1)
    (3) edge[draw=none] node[sloped]{$\subset$} (1)
    
    (4) edge[draw=none] node[sloped]{$\subset$} (2)
    (4) edge[draw=none] node[sloped]{$\supset$} (3)
    
    (5) edge[draw=none] node[sloped]{$\supset$} (4)
    (6) edge[draw=none] node[sloped]{$\subset$} (4);
	\end{tikzpicture}
 \end{center}
\subcaption{Syntactic-future hierarchy}
\label{fig:syntactic-future}
\end{subfigure}
\caption{Both hierarchies, side-by-side, indicating the correspondence of \Cref{thm:hierarchy:correspondence}} for properties specifiable in LTL.
\label{fig:hierarchies}
\end{figure}

Together with the result of \cite{LPZ85}, stating that every formula of LTL is equivalent to a reactivity formula, Chang, Manna, and Pnueli obtain:

\begin{theorem}[Normalization Theorem \cite{LPZ85,MannaP89,ChangMP92}]
Every LTL formula is equivalent to a formula of $\Delta_2$.
\end{theorem}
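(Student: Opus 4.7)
The plan is to follow the \emph{contextual normalization} scheme sketched in the introduction. First I would choose, based on the structure of $\varphi$, a finite \emph{basis} of formulas $B = B_{\GF} \cup B_{\FG}$, where $B_{\GF}$ contains formulas of the form $\GF \psi$ and $B_{\FG}$ formulas of the form $\FG \psi$, with $\psi$ ranging over suitable subformulas of $\varphi$ (e.g.\ the $\U$/$\M$-subformulas for $B_{\GF}$ and the $\W$/$\R$-subformulas for $B_{\FG}$). For every $C \subseteq B$, let $\langn{C}{B}$ be the set of words satisfying every formula in $C$ and no formula in $B \setminus C$. The crucial observation is that $\{\langn{C}{B}\}_{C \subseteq B}$ is a finite partition of $\Sigma^\omega$ into \emph{suffix-invariant} sets (every basis formula is preserved under taking suffixes), and each piece is defined by a $\Delta_2$-formula $\chi_C$, namely the conjunction of members of $C$ and the negations of members of $B \setminus C$, all of which are $\GF$ or $\FG$ literals and so live in $\Sigma_2 \cup \Pi_2$.

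Next, for each $C \subseteq B$, I would construct a formula $\flatn{\varphi}{C}{B} \in \Delta_2$ such that $w \models \varphi$ iff $w \models \flatn{\varphi}{C}{B}$ for every $w \in \langn{C}{B}$. The construction proceeds by induction on $\varphi$. Boolean and $\X$-cases are immediate, and the critical cases are the strong binary operators $\U$ and $\M$. On a word $w \in \langn{C}{B}$, the formula $\GF \varphi_2$ is either guaranteed (if $\GF \varphi_2 \in C$) or forbidden (otherwise); combined with suffix-invariance of $\langn{C}{B}$, this information allows us to replace $\varphi_1 \U \varphi_2$ either by a $\Pi_2$-formula built with $\W$ from the inductively normalized children (the ``eventually'' part being automatic) or by a $\Sigma_2$-formula that localizes the witness into a bounded future (since $\varphi_2$ stops holding after some point). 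The $\W$- and $\R$-cases are handled dually via \Cref{prop:dualityhierarchy}.

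Finally I would patch the pieces together using the cover:
\[
\varphi \; \equiv \; \bigvee_{C \subseteq B} \Bigl(\chi_C \wedge \flatn{\varphi}{C}{B}\Bigr).
\]
The equivalence holds because $\{\langn{C}{B}\}_C$ covers $\Sigma^\omega$ and the inner formula agrees with $\varphi$ on each piece; each disjunct is a conjunction of two $\Delta_2$-formulas and hence in $\Delta_2$, and a finite disjunction of $\Delta_2$-formulas is again in $\Delta_2$.

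The main obstacle will be the inductive step for strong until/weak until and their duals: producing the $\Sigma_2$ or $\Pi_2$ replacement requires more than a one-line rewrite, since the inductive hypothesis gives the children only as generic $\Delta_2$-formulas, and one must show that the outer temporal operator can be absorbed into a $\Sigma_2$- or $\Pi_2$-wrapper without introducing a fresh $\U$/$\W$ alternation. Choosing $B$ so that it contains exactly the ``critical'' recurrence- and persistence-formulas extracted from $\varphi$ is what makes this absorption possible and keeps the overall blow-up single-exponential in the size of $\varphi$.
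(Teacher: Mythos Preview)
Your overall strategy matches the paper's: choose a basis of $\GF$/$\FG$ formulas derived from the binary temporal subformulas of $\varphi$, find contextual normalizations, and patch them together. However, there is a concrete gap in your argument that the paper has to work around explicitly.

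You assert that each context formula $\chi_C$ is in $\Delta_2$ because it is a Boolean combination of ``$\GF$ or $\FG$ literals.'' This is false in general. The arguments $\psi$ of your basis formulas $\GF\psi$ and $\FG\psi$ are arbitrary subformulas of $\varphi$, not formulas of $\Sigma_1$ or $\Pi_1$. For instance, with $\varphi = ((a \W b) \U c) \W d$ one of the basis formulas is $\FG((a \W b) \U c)$, whose argument already mixes $\W$ and $\U$; this basis formula lies in $\Sigma_3$, not $\Sigma_2$. So $\chi_C$ is not in $\Delta_2$, and your final disjunction $\bigvee_C (\chi_C \wedge \flatn{\varphi}{C}{B})$ is not in $\Delta_2$ either. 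The paper handles exactly this problem by equipping $B$ with a well-founded subformula order and, in the patched formula, replacing each $\psi \in C$ not by $\psi$ itself but by a $\Delta_2$ formula $\flatnn{\psi}{C}{\strict{\psi}}$ that is equivalent to $\psi$ in context and is built only from strictly smaller basis formulas (Propositions~\ref{lem:mainlemma} and~\ref{lem:mainlemma-nu}). Your proposal is essentially the paper's \emph{Weak} Contextual Equivalence Lemma, which the paper explains is insufficient precisely for this reason.

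A secondary issue is your inductive construction of $\flatn{\varphi}{C}{B}$. You correctly flag as the ``main obstacle'' that the induction hypothesis only gives generic $\Delta_2$ children, and applying $\U$ or $\W$ on top may escape $\Delta_2$. The paper avoids this by not aiming for $\Delta_2$ in the induction but for the stronger target $\Sigma_2$: its $\flatten{\varphi}{C}$ keeps $\U$ and $\M$ homomorphically and rewrites $\psi_1 \W \psi_2$ as $\flatten{\psi_1}{C} \U (\flatten{\psi_2}{C} \vee \G(\eval{\psi_1}{C}))$, where $\eval{\psi_1}{C} \in \Pi_1$ is a separate auxiliary transform. Your sketch of the $\GF\varphi_2 \notin C$ case (``localize the witness into a bounded future'') does not correspond to any syntactic rewrite, since the bound depends on the word; the paper never needs such a step.
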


\section{A simple proof of the Normalization Theorem} \label{sec:firstproof}

We present a simple proof of the Normalization Theorem. \Cref{subsec:overview} introduces our proof technique, the \emph{Contextual Equivalence Lemma}.
The lemma shows that every formula $\varphi$ is normalizable if certain formulas related to $\varphi$ exist. \Cref{subsec:cnormFGGF,subsec:cnormforall} prove the existence of those formulas. Finally, \cref{subsec:fullnorm} instantiates the Contextual Equivalence Lemma with the formulas of 
\Cref{subsec:cnormFGGF,subsec:cnormforall}, which yield the theorem.

\subsection{Contextual equivalence}
\label{subsec:overview}

Consider the formula $\varphi = \F\G (a \U b)$. It does not belong to $\Delta_2$, because of the alternation $\F$-$\G$, followed by the alternation $\G$-$\U$. Let us see how to find an equivalent formula of $\Delta_2$. We consider the formula $\G\F b$, and argument (informally!) as follows:

\begin{itemize}
\item For words that do not satisfy $\G\F b$, $\varphi$ is equivalent to $\false$. Indeed, if $w \not\models \G\F b$, then $b$ holds for only finitely many suffixes of $w$, and so  $a \U b$ also holds for finitely many suffixes only. So  $w$ cannot satisfy $\F\G (a \U b)$.
\item For words that satisfy $\G\F b$, $\varphi$ is equivalent to $\F\G (a \W b)$. Indeed, if a word $w$ satisfies $\G\F b$, then every suffix $w_i$ satisfies $\F b$, and so $w_i \models a  \U b$ if{}f $w_i \models a \W b$ holds for every $i \geq 0$. 
\end{itemize}

We say that $\varphi$ is equivalent to $\F\G (a \W b)$ \emph{in the context of $\G\F b$}, and equivalent to $\false$ \emph{in the context of $\overline{\G\F b}$}. 
We get $\F\G (a \U b) \equiv (\G\F b  \wedge \F\G (a \W b)) \vee (\overline{\G\F b} \wedge \false) \equiv (\G\F b  \wedge \F\G (a \W b))$. Due to the elimination of $\U$, this is a formula of $\Delta_2$. 

We can proceed dually with the formula $\varphi = \G\F (a \W b)$. We consider the formula $\F\G a$, and argument as follows:

\begin{itemize}
\item In the context of $\F\G a$, $\varphi$ is equivalent to $\true$. Indeed, if $w \models \F\G a$, then there is a suffix $w_i$ such that $a \W b$ holds for all suffixes of $w_i$, and so  in particular for infinitely many suffixes.
\item In the context of $\overline{\F\G a}$, $\varphi$ is equivalent to $\G\F (a \U b)$. Recall that $a \W b \equiv  a \U b \vee \G a$. If a word $w$ does not satisfy $\F\G a$, then no suffix satisfies $\G a$; so $w$ satisfies $\varphi$ if{}f infinitely many prefixes satisfy $a \U b$. 
\end{itemize}
Now we have $\G\F (a \W b) \equiv (\F\G a  \wedge \true) \vee (\overline{\F\G a} \wedge \G\F (a \U b)) \equiv \F\G a \vee (\overline{\F\G a} \wedge \G\F (a \U b)) \equiv \F\G a \vee (\G\F (\neg a) \wedge \G\F (a \U b)) \equiv \F\G a \vee \G\F (a \U b)$. Due to the elimination of $\W$, this is a formula of $\Delta_2$. 

Our normalization strategy follows this pattern. Given a formula $\varphi$, we define a \emph{basis} of formulas $B$, and consider all \emph{contexts} $\context{\con}{B}$, where $\con \subseteq B$. Intuitively, a context $\context{\con}{B}$ corresponds to a region of the set of all words containing the words satisfying all formulas of $C$ and none of $B \setminus C$, or, equivalently, all dual formulas of the formulas in $B \setminus C$. For every context $\context{\con}{B}$ we find a formula $\flatn{\varphi}{\con}{B}$ of $\Delta_2$ equivalent to $\varphi$ over all words of the corresponding region. Then we ``patch together'' these formulas to obtain a formula of $\Delta_2$ that is equivalent to $\varphi$ everywhere.

We formalize this idea in two steps. First we state and prove the Weak Contextual Equivalence Lemma, which is enough to normalize simple formulas like $\F\G (a \U b)$ and $\G\F (a \W b)$. In the second step we state and prove the Contextual Equivalence Lemma, which presents a technique to normalize arbitrary formulas.

Fix a set of atomic propositions $Ap$, and let $\Univ$ be the set of all words over $2^{Ap}$.

\begin{definition}[Basis and equivalence under context]
\label{def:basis}
Let $B$ be a finite set of formulas over $Ap$, called a \emph{basis}. A \emph{context} is a partition of $B$ into a set $\con \subseteq B$ and the set $B \setminus \con$. We denote a context by $\context{\con}{B}$. The language $\langn{C}{B} \subseteq \Univ$ of $\context{\con}{B}$ is the set of words that satisfy every formula of $\con$ and no formula of $B \setminus \con$. Two formulas $\varphi_1, \varphi_2$ are \emph{equivalent under context $\context{\con}{B}$} if $(w \models \varphi_1 \Leftrightarrow w \models \varphi_2)$ holds for every $w \in \langn{C}{B}$.
\end{definition}

We collect some simple properties for later use:

\begin{lemma}
\label{rem}
We have:
\begin{enumerate}
\item $\langn{\emptyset}{\emptyset} = \Univ$. In particular, two formulas are equivalent if{}f they are equivalent under context $\context{\emptyset}{\emptyset}$.
\item Let $B \subseteq B'$. Then $\langn{\con'}{B'} \subseteq \langn{(\con'\cap B)}{B}$ for every $\con' \subseteq B'$ .\label{rem:lsubcont}
\item If two formulas are equivalent under context $\context{\con}{B}$, then they are also equivalent under any context $\context{\con'}{B'}$ such that $\con = \con' \cap B$ and $B \subseteq B'$.\label{rem:extend}
\item Let $\overline{B}:= \{\overline{\psi} \mid \psi \in B\}$ be the \emph{dual} basis of $B$. For every $C \subseteq B$, define $C_d:= \{\overline{\psi} \mid \psi \in B \setminus C\}$. Two formulas are equivalent under context $\context{\con}{B}$ if{}f they are equivalent under context $\context{C_d}{\overline{B}}$. \label{rem:dualequiv} \end{enumerate}
\end{lemma}
\begin{proof} We proceed as follows:
\begin{enumerate}
\item Follows immediately from the definition.
\item Let $w  \in \langn{\con'}{B'}$. By definition we have $w \models \psi$ for all $\psi \in \con'$ and $w \not\models \psi$ for all $\psi \in B' \setminus \con'$. We have $B \setminus (\con' \cap B) = B \setminus \con' \subseteq B' \setminus \con'$. So $w \models \psi$ for all $\psi \in \con' \cap B$ and $w \not\models \psi$ for all $\psi \in B \setminus (\con' \cap B)$. By definition, $w \in \langn{(\con' \cap B)}{B}$. 
\item Let $\varphi_1, \varphi_2$ be equivalent under $\context{\con}{B}$. We have
$w \models \varphi_1 \Leftrightarrow$ if{}f $w \models \varphi_2$ for every $w \in \langn{\con}{B}$.
By $B \subseteq B'$, $\con = \con' \cap B$, and (2), we get $\langn{\con'}{B'} \subseteq \langn{\con}{B}$, and so $\varphi_1, \varphi_2$ are equivalent under $\context{\con'}{B'}$.
\item It suffices to prove $\langn{C_d}{\overline{B}} = \langn{C}{B}$. This is an easy  consequence of $\overline{\psi} \equiv \neg \psi$. Indeed, for any $w \in \Univ$, we have
\begin{center}
$w \models \psi$ for all $\psi \in C$ 
\; if{}f  \; $w \not\models \overline{\psi}$ for all $\psi \in C$ 
\; if{}f  \;   $w \not\models \overline{\psi}$ for all $\psi \in B \setminus (B \setminus C)$ 
\; if{}f  \;   $w \not\models \overline{\psi}$ for all $\psi \in \overline{B} \setminus C_d$.
\end{center}
Similarly, $w \not\models \psi$ for all $\psi \in B\setminus C $ if{}f $w \models \overline{\psi}$ for all $\psi \in C_d$. Hence, $w \in \langn{C}{B}$ if{}f $w \in \langn{C_d}{\overline{B}}$.
\end{enumerate}
\end{proof}

\subsubsection{Weak Contextual Equivalence Lemma}

Intuitively, the Weak Contextual Equivalence Lemma shows how to patch together formulas $\flatn{\varphi}{\con}{B}$ that are equivalent to $\varphi$ for each context $\context{C}{B}$.

\begin{lemma}[Weak Contextual Equivalence Lemma]
\label{lem:conteq}
Let $\varphi$ be a formula, and let $B$ be a basis of formulas. Assume that for every context $\context{\con}{B}$ there exists a formula $\flatn{\varphi}{\con}{B}$ satisfying 
\begin{itemize}
\item[(i)] $\varphi$ and $\flatn{\varphi}{\con}{B}$ are equivalent under context $\context{\con}{B}$, and 
\item[(ii)] for every $C, C' \subseteq B$: if $C \subseteq C'$ then $\flatn{\varphi}{\con}{B} \models \flatn{\varphi}{C'}{B}$.  
\end{itemize}
Then we have
$$ \varphi \equiv \bigvee_{C \subseteq B} \bigg( \flatn{\varphi}{\con}{B} \wedge \bigwedge_{\psi \in C} \psi \bigg)\ .$$
\end{lemma}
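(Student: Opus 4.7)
The plan is to prove both inclusions of the language equivalence by, for each word $w \in \Univ$, identifying the \emph{canonical} subset $C^\ast \coloneqq \{\psi \in B : w \models \psi\}$. By the definition of the context language, this is the unique $C \subseteq B$ with $w \in \langn{C}{B}$, which effectively indexes the region containing $w$.

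For the direction $\varphi \models \bigvee_{C \subseteq B}\big(\flatn{\varphi}{C}{B} \wedge \bigwedge_{\psi \in C}\psi\big)$, I would take a word $w \models \varphi$ and set $C \coloneqq C^\ast$. Then $w \in \langn{C^\ast}{B}$, so by hypothesis (i) applied under the context $\context{C^\ast}{B}$ we obtain $w \models \flatn{\varphi}{C^\ast}{B}$; and by definition of $C^\ast$, $w$ satisfies every $\psi \in C^\ast$. Hence $w$ satisfies the disjunct indexed by $C^\ast$ in the RHS. This direction does not require hypothesis (ii) at all.

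For the converse direction, suppose $w$ satisfies the RHS: there exists $C \subseteq B$ with $w \models \flatn{\varphi}{C}{B}$ and $w \models \psi$ for all $\psi \in C$. The latter fact forces $C \subseteq C^\ast$, so by monotonicity hypothesis (ii) we get $\flatn{\varphi}{C}{B} \models \flatn{\varphi}{C^\ast}{B}$, and therefore $w \models \flatn{\varphi}{C^\ast}{B}$. Since $w \in \langn{C^\ast}{B}$, hypothesis (i) applied to the context $\context{C^\ast}{B}$ now yields $w \models \varphi$.

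The only subtle point, which I expect to be the main (mild) obstacle, is that a word $w$ may satisfy several disjuncts of the RHS simultaneously: whenever it witnesses the disjunct indexed by $C$, it witnesses the one indexed by $C^\ast$ too, but a priori nothing prevents it from witnessing one for some $C \subsetneq C^\ast$ as well. Hypothesis (ii) is precisely what is needed to ``lift'' an arbitrary witness $C$ up to the canonical $C^\ast$ where (i) can be applied. Without it, one would only know that $\flatn{\varphi}{C}{B}$ is equivalent to $\varphi$ on words of $\langn{C}{B}$, which is not the region containing~$w$. Once this lifting step is in place, both directions are immediate from the definition of $\langn{C^\ast}{B}$ and (i).
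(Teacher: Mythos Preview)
Your proof is correct and essentially identical to the paper's own argument: the paper also fixes $C_w \coloneqq \{\psi \in B : w \models \psi\}$ (your $C^\ast$), uses (i) at $C_w$ for the forward direction, and for the converse uses $C \subseteq C_w$ together with (ii) to lift to $C_w$ before applying (i). Your closing remark on the role of hypothesis (ii) is accurate and matches the paper's use of it.
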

\begin{proof}
Let $w \in \Univ$. Let $C_w$ be the (possibly empty) set of formulas of the basis $B$ satisfied by $w$.

Assume $w \models \varphi$.  We have 
$w \models \bigwedge_{\psi \in C_w} \psi$ and  $w \in \langn{C_w}{B}$ by definition of $C_w$, and so $w \models \flatn{\varphi}{C_w}{B}$ by property (i). So $w \models \flatn{\varphi}{C_w}{B} \wedge \bigwedge_{\psi \in C_w} \psi$,
and thus $w \models \bigvee_{C \subseteq B} \big( \flatn{\varphi}{\con}{B} \wedge \bigwedge_{\psi \in C} \psi \big)$.

Assume $w \models \bigvee_{C \subseteq B} \big( \flatn{\varphi}{\con}{B} \wedge \bigwedge_{\psi \in C} \psi \big)$.  Then there exists a context $C\subseteq B$ such that $w \models \flatn{\varphi}{\con}{B} \wedge \bigwedge_{\psi \in C} \psi$. By the definition of $C_w$ we have $C \subseteq C_w$, and so $w \models \flatn{\varphi}{C_w}{B}$ by property (ii). Since, by property (i),
$\varphi$ and $\flatn{\varphi}{C_w}{B}$ are equivalent under context $\context{C_w}{B}$, we get $w \models \varphi$.
\end{proof}

\begin{example}
Consider the formula $\varphi \coloneqq \F\G ( a \U b)$. It is easy to see that  $B \coloneqq \{ \G\F b \}$, $\flatn{\varphi}{\emptyset}{B} \coloneqq \false$ and $\flatn{\varphi}{B}{B} \coloneqq \F\G(a \W b)$ satisfy the conditions of the lemma. So we have 
$$\varphi \equiv (\flatn{\varphi}{\emptyset}{B} \wedge \true) \vee (\flatn{\varphi}{B}{B} \wedge \G\F b) =(\false \wedge \true) \vee (\F\G(a \W b) \wedge \G\F b) \equiv \F\G(a \W b) \wedge \G\F b \in \Delta_2$$
Consider now $\varphi \coloneqq \G\F ( a \W b)$. Taking $B \coloneqq \{ \F\G a \}$,  $\flatn{\varphi}{\emptyset}{B} \coloneqq \G\F (a \U b)$,
and $\flatn{\varphi}{B}{B} \coloneqq \true$ also satisfies the conditions, and we get
$$\varphi \equiv (\flatn{\varphi}{\emptyset}{B} \wedge \true) \vee (\flatn{\varphi}{B}{B} \wedge \F\G a) =(\G\F (a \U b) \wedge \true) \vee (\true \wedge \F\G a) \equiv \G\F (a \U b) \vee \F\G a \in \Delta_2$$
\end{example}

The Weak Contextual Equivalence Lemma only shows how to normalize formulas under the assumption that \emph{one already knows how to normalize the formulas of the basis $B$}. Indeed, if $B$ contains formulas that are not in $\Delta_2$, then for any $C\subseteq B$ containing such formulas the expression $\bigwedge_{\psi \in C} \psi$ is also not in $\Delta_2$. But how can one normalize these formulas of $B$? The obvious idea is to recursively apply the lemma: find a second basis $B'$ of ``simpler'' formulas, reduce the problem of normalizing $B$ to normalizing $B'$, and iterate until a  basis containing only formulas of $\Delta_2$ is reached. The Contextual Equivalence Lemma follows this idea, but also improves on it; instead of a sequence of bases, it only assumes a unique \emph{well-founded} basis.

\subsubsection{Contextual Equivalence Lemma}

We introduce well-founded bases.

\begin{definition}
Let $B$ be a basis and $(\prec) \in B \times B$ a well-founded order on $B$. We say that $(B, \prec)$ is a \emph{well-founded basis}. Given a basis formula $\psi \in B$, we let $\strict{\psi}$ denote the set $\{ \chi \in B \mid \chi \prec \psi \}$.
\end{definition}

Intuitively, after applying weak contextual equivalence to $\varphi$ with basis $B$, we may need to apply it again to $\psi$. But with which basis? The answer is with the (well-founded) basis $\strict{\psi}$. Well-foundedness guarantees that this process terminates. The Contextual Equivalence Lemma also adds to the assumptions (i) and (ii)  on $\varphi$ similar assumptions (iii) and (iv) on the formulas $\psi \in B$.

\begin{lemma}[Contextual Equivalence Lemma]
\label{lem:conteq-ext}
Let $\varphi$ be a formula and let $(B, \prec)$ be a well-founded basis. Assume that for every $\con \subseteq B$ there exists a formula $\flatn{\varphi}{\con}{B}$ satisfying 
\begin{itemize}
\item[(i)] $\varphi$ and $\flatn{\varphi}{\con}{B}$ are equivalent under context $\context{\con}{B}$, and 
\item[(ii)] for every two sets $\con, \con' \subseteq B$: if $\con \subseteq \con'$ then $\flatn{\varphi}{\con}{B} \models \flatn{\varphi}{C'}{B}$.
\end{itemize}
Assume further that for every basis formula $\psi \in B$ and every $\con \subseteq \strict{\psi}$ there exists a formula 
$\flatnn{\psi}{\con}{\strict{\psi}}$ satisfying
\begin{itemize}
\item[(iii)] $\psi$ and $\flatnn{\psi}{\con}{\strict{\psi}}$ are equivalent under context $\context{\con}{\strict{\psi}}$, and 
\item[(iv)] for every two sets $\con, \con' \subseteq \strict{\psi}$: if $\con \subseteq \con'$ then $\flatnn{\psi}{\con}{\strict{\psi}} \models \flatnn{\psi}{\con'}{\strict{\psi}}$.
\end{itemize}
Then we have:
$$\varphi \equiv \bigvee_{\con \subseteq B} \bigg( \flatn{\varphi}{\con}{B} \wedge \bigwedge_{\psi \in \con} \flatnn{\psi}{\con}{\strict{\psi}} \bigg) \ . $$
\end{lemma}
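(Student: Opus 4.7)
The plan is to mimic the proof of \cref{lem:conteq} while installing an extra layer of well-founded induction over the basis $B$ to handle the new conjuncts $\flatnn{\psi}{\con}{\strict{\psi}}$. Throughout, for any $w \in \Univ$, write $C_w \coloneqq \{\psi \in B : w \models \psi\}$, so that $w \in \langn{C_w}{B}$ and, for every $\psi \in B$, $w \in \langn{C_w \cap \strict{\psi}}{\strict{\psi}}$. I read the subexpression $\flatnn{\psi}{\con}{\strict{\psi}}$ that appears on the right-hand side (where $\con \subseteq B$) as shorthand for $\flatnn{\psi}{\con \cap \strict{\psi}}{\strict{\psi}}$, which is the only well-typed reading compatible with hypothesis (iii).

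The forward direction $\varphi \models \bigvee_{\con \subseteq B}(\ldots)$ falls out of (i) and (iii) applied at the distinguished context $\context{C_w}{B}$. Assuming $w \models \varphi$, I pick the disjunct indexed by $C_w$: hypothesis (i) gives $w \models \flatn{\varphi}{C_w}{B}$; and for each $\psi \in C_w$ we have $w \models \psi$, so (iii) at context $\context{C_w \cap \strict{\psi}}{\strict{\psi}}$ gives $w \models \flatnn{\psi}{C_w \cap \strict{\psi}}{\strict{\psi}}$. Thus $w$ satisfies the $C_w$-disjunct.

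For the backward direction, suppose $w$ satisfies the disjunct indexed by some $C \subseteq B$. The pivotal claim is the inclusion $C \subseteq C_w$: once it is secured, monotonicity (ii) yields $\flatn{\varphi}{C}{B} \models \flatn{\varphi}{C_w}{B}$, so $w \models \flatn{\varphi}{C_w}{B}$, and (i) applied at $\context{C_w}{B}$ concludes $w \models \varphi$. To prove $\psi \in C \Rightarrow w \models \psi$, I proceed by well-founded induction on $\psi$ along $\prec$. Fixing $\psi \in C$ and assuming inductively that every $\chi \in C$ with $\chi \prec \psi$ already satisfies $w \models \chi$, the inclusion $C \cap \strict{\psi} \subseteq C_w \cap \strict{\psi}$ follows. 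Monotonicity (iv) then upgrades the given $w \models \flatnn{\psi}{C \cap \strict{\psi}}{\strict{\psi}}$ to $w \models \flatnn{\psi}{C_w \cap \strict{\psi}}{\strict{\psi}}$, and (iii) at context $\context{C_w \cap \strict{\psi}}{\strict{\psi}}$ delivers $w \models \psi$, closing the induction.

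I expect the main obstacle to be exactly this inductive step. The right-hand side only tells us that $w$ satisfies $\flatnn{\psi}{\cdot}{\strict{\psi}}$ at the \emph{chosen} context $\context{C \cap \strict{\psi}}{\strict{\psi}}$, whereas (iii) only lets us read off $w \models \psi$ at the \emph{actual} context $\context{C_w \cap \strict{\psi}}{\strict{\psi}}$ occupied by $w$. Monotonicity (iv) is what bridges the two contexts, and well-foundedness of $\prec$ is what prevents the bridging argument from becoming circular; neither hypothesis can be dispensed with.
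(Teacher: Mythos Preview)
Your proposal is correct and follows essentially the same approach as the paper's proof: both directions hinge on the distinguished context $C_w$, the forward direction instantiates the disjunct at $C_w$ using (i) and (iii), and the backward direction proves $C \subseteq C_w$ by well-founded induction on $\prec$ using (iv) to bridge from $C \cap \strict{\psi}$ to $C_w \cap \strict{\psi}$ before invoking (iii). Your explicit observation that $\flatnn{\psi}{\con}{\strict{\psi}}$ must be read as $\flatnn{\psi}{\con \cap \strict{\psi}}{\strict{\psi}}$ is exactly the implicit convention the paper uses, and your single well-founded induction is slightly cleaner than the paper's split into base case $\strict{\psi}=\emptyset$ and step case, since the base case is subsumed.
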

\begin{proof}
Let $w$ be a word.  We prove $w \models  \varphi$ if{}f $w \models \bigvee_{\con \subseteq B} \big( \flatn{\varphi}{\con}{B} \wedge \bigwedge_{\psi \in C} \flatnn{\psi}{\con}{\strict{\psi}} \big)$. Let $C_w \subseteq B$ be the (possibly empty) set of formulas of $B$ satisfied by $w$.

\noindent ($\Rightarrow$): Assume $w \models \varphi$. Observe that $w \in \langn{C_w}{B}$ by definition of $C_w$. We prove that
$w \models \flatn{\varphi}{\con_w}{B}$ and $w \models \flatnn{\psi}{\con_w}{\strict{\psi}}$ for every $\psi \in C_w$.
\begin{itemize}
\item $w \models \flatn{\varphi}{C_w}{B}$. Since $w \in \langn{C_w}{B}$ and $w \models \varphi$ by hypothesis, we have $w \models \flatn{\varphi}{C_w}{B}$ by property (i).

\item $w \models \flatnn{\psi}{C_w}{\strict{\psi}}$ for every $\psi \in C_w$.  Let $\psi \in C_w$. We have $w \in \langn{C_w}{B} \subseteq \langn{C_w}{\strict{\psi}}$ by \cref{rem}(\ref{rem:lsubcont}) because $\strict{\psi} \subseteq B$. Hence, by property (iii), $w \models \flatnn{\psi}{C_w}{\strict{\psi}}$ if{}f $w \models \psi$. Since $\psi \in C_w$ we have $w \models \psi$, and we are done.
\end{itemize}

\noindent ($\Leftarrow$): Assume there is $C \subseteq B$ such that $w \models \flatn{\varphi}{\con}{B}$ and $w \models \flatnn{\psi}{C}{\strict{\psi}}$ for every $\psi \in C$.
We prove $w \models \varphi$. 

\smallskip \noindent\textit{Claim.} $w \models \psi$ for every $\psi \in C$. \\
\noindent\textit{Proof of the claim.} We proceed by induction on $(C, \prec)$. \\
\noindent \textit{Basis.} $\strict{\psi}=\emptyset$. 
Then $w \models \flatnn{\psi}{\emptyset}{\emptyset}$. By \cref{rem} and property (iii), $\psi$ and $\flatnn{\psi}{\emptyset}{\emptyset}$ are equivalent, and so $w \models \psi$. \\
\noindent \textit{Step.} $\strict{\psi} \neq \emptyset$. 
For every $\chi \in C \cap \strict{\psi}$, we have $\chi \prec \psi$ by definition of $\strict{\psi}$, so $w \models \chi$ by induction hypothesis. Hence, $w \models \bigwedge_{\chi \in C \cap \strict{\psi}} \chi$, so $C \cap \strict{\psi} \subseteq C_w \cap \strict{\psi}$. Since 
$w \models \flatnn{\psi}{C}{\strict{\psi}}$ by assumption, we get $w \models \flatnn{\psi}{C_w}{\strict{\psi}}$
by property (iv). Further, by property (iii) we have $w \models \flatnn{\psi}{C_w}{\strict{\psi}}$ if{}f $w \models \psi$. So $w \models \psi$.

\smallskip
By the claim we have $w \models \bigwedge_{\psi \in C} \psi$ and so $C \subseteq C_w$. It follows $w \models \flatn{\varphi}{C_w}{B}$ by property (ii). By property (i) we have 
$w \models \varphi$ if{}f $w \models \flatn{\varphi}{C_w}{B}$, and so $w \models \varphi$.
\end{proof}

\begin{remark}
The Weak Contextual Equivalence Lemma is a corollary of the Contextual Equivalence Lemma. Indeed, the weak statement is obtained by choosing $\prec$ as the empty order (no two basis formulas are ordered), which is trivially well-founded. This implies $\strict{\psi} = \emptyset$ and, by \cref{rem}, we are forced to take $\flatnn{\psi}{\con}{\strict{\psi}} = \flatnn{\psi}{\con}{\emptyset} = \psi$ for any $\psi \in B$ and context $\con \subseteq \emptyset$.
\end{remark}

In the rest of the section we prove the Normalization Theorem by instantiating the Contextual Equivalence Lemma as follows:
\begin{itemize}
\item In Section \ref{sec:contnorm}, we define a well-founded basis $(B,\prec)$ for a given formula $\varphi$. Loosely speaking, $B$ contains formulas of the form $\G\F \psi$ and $\F\G \psi$  for certain subformulas $\psi$ of $\varphi$. Further, $\prec$ is  induced by the subformula order: given basis formulas $\Op(\psi)$ and $\Op'(\psi')$,
where $\Op, \Op' \in \{\G\F, \F\G \}$, we say $\Op(\psi) \prec \Op'(\psi')$ if $\psi$ is a proper subformula of $\psi'$.
\item In Section \ref{subsec:cnormFGGF}, we define formulas $\flatnn{(\F\G \psi)}{\con}{\strict{\psi}}$ and $\flatnn{(\G\F \psi)}{\con}{\strict{\psi}}$ of $\Delta_2$  for all basis formulas $\F\G \psi, \G\F \psi \in B$, and for every $C \subseteq B$. We prove that these formulas satisfy conditions (iii) and (iv) of the Contextual Equivalence Lemma. 
\item In Section \ref{sec:flatnn}, we define a formula $\flatn{\varphi}{\con}{B} \in \Delta_2$ for every $C \subseteq B$, and prove that it satisfies conditions (i) and (ii) of the Contextual Equivalence Lemma.
\end{itemize}
The Contextual Equivalence Lemma yields that $\varphi$ is equivalent to $\bigvee_{\con \subseteq B} \big( \flatn{\varphi}{\con}{B} \wedge \bigwedge_{\psi \in \con} \flatnn{\psi}{\con}{\strict{\psi}} \big)$, a Boolean combination of formulas of  $\Delta_2$. Since $\Delta_2$ is closed under Boolean combinations, we obtain a \emph{contextual normalization} procedure. 

\subsection{Contextual Normalization I: The well-founded basis $(B,\prec)$}
\label{sec:contnorm}

In our introductory example at the beginning of Section \ref{subsec:overview}, we normalize the formula $\F\G (a \U b)$ with the help of a case distinction: for words satisfying $\G\F b$ the formula is equivalent to $\false$, and for words that do not satisfy $\G\F b$ the formula is equivalent to $\F\G (a \U b)$. This holds because $b$ is the right child of the until operator; indeed, the case distinction with $\G\F a$ does not work. Similarly, we normalize the formula $\G\F (a \W b)$ by a case distinction on $\F\G a$, and we succeed because $a$ is the left child of the weak until. This motivates the following definition of a well-founded basis for a given formula:

\begin{definition}
Let $\varphi$ be a formula and let $\subf(\varphi)$ denote the set of subformulas of $\varphi$.
We define $B \coloneqq \mubasis  \cup \nubasis $, where $\mubasis  := \{ \G\F \psi \mid \exists \chi. ~ \chi \U \psi \in \subf(\varphi) \vee  \psi \M \chi \in \subf(\varphi) \}$
and $\nubasis  := \{ \F\G \psi \mid \exists \chi. ~ \psi \W \chi \in \subf(\varphi) \vee  \chi \R \psi \in \subf(\varphi) \}$.
Further, we define a well-founded partial order $(\prec) \subseteq B \times B$ as follows: for every  $\Op, \Op' \in \{\G\F, \F\G \}$, $\Op(\psi) \prec \Op'(\psi')$ if{}f $\psi \in \subf(\psi')$.
\end{definition}

\begin{example}
\label{ex:running1}

Consider the formula $\varphi = ((a \W b) \U c) \W d$. We have $\mubasis  = \{ \G\F c \}$,  $\nubasis  = \{ \F\G ((a \W b) \U c)), \F\G a \}$ and $B = \{\G\F c, \F\G ((a \W b) \U c), \F\G a\}$. Further, $\G\F c  \prec \F\G ((a \W b) \U c)$ and $\F\G a \prec \F\G ((a \W b) \U c)$. Observe that the basis formula $\F\G ((a \W b) \U c)$ is not in $\Delta_2$.
\end{example}

\subsection{Contextual Normalization II: The formulas $\flatnn{(\F\G \psi)}{\con}{\strict{\psi}}$ and $\flatnn{(\G\F\psi)}{\con}{\strict{\psi}}$}
\label{subsec:cnormFGGF}

For all basis formulas $\F\G \psi, \G\F \psi \in B$ and for all $C \subseteq \strict{\psi}$, we define formulas $\flatnn{(\F\G \psi)}{\con}{\strict{\psi}}$ and $\flatnn{(\G\F \psi)}{\con}{\strict{\psi}}$ and prove that the definitions satisfies conditions (iii) and (iv) of the Contextual Equivalence Lemma. We define $\flatnn{(\F\G \psi)}{\con}{\strict{\psi}}$ and prove its properties in Section \ref{sec:flatnnfg}, and do the same for $\flatnn{(\G\F \psi)}{\con}{\strict{\psi}}$  in Section \ref{sec:flatnn}.

\subsubsection{The formula $\flatnn{\F\G\psi}{\con}{\strict{\psi}}$}
\label{sec:flatnnfg}
Loosely speaking,  we define $\flatnn{(\F\G \psi)}{\con}{\strict{\psi}}$ in two steps: first, we assign to $\psi$ a formula $\eval{\psi}{\con}$ of $\Pi_1$, and then set $\flatnn{(\F\G \psi)}{\con}{\strict{\psi}}:= \F\G(\eval{\psi}{\con})$. Since $\eval{\psi}{\con} \in \Pi_1$, we have $\flatnn{(\F\G \psi)}{\con}{\strict{\psi}} \in \Delta_2$. 

\begin{definition}\label{def:evalnu}
Let $\F\G\psi \in \nubasis$ be a formula, and let $\con \subseteq \strict{\psi}$. Further, let $\eval{\psi}{\con}$ be the formula inductively defined as follows. If $\psi = \true, \false, a, \neg a$ then $\eval{\psi}{\con} := \psi$. For the operators $\vee$, $\wedge$, $\X$, $\W$, and $\R$, the formula $\eval{\psi}{\con}$ is defined homomorphically.\footnote{Given an operator $\P$  of arity $k \geq 0$, we say that $\flat{\psi}{\con}$ is defined \emph{homomorphically} for $\P$ if $\flat{(\P(\psi_1, \ldots, \psi_k))}{\con}$ = $\P(\flat{\psi_1}{\con}, \ldots, \flat{\psi_k}{\con})$. So, for example, $\flat{\X\psi_1}{\con}= \X (\flat{\psi_1}{\con})$ and $\flat{(\psi_1 \wedge \psi_2)}{\con} = \flat{\psi_1}{\con} \wedge \flat{\psi_2}{\con}$} For the operators $\U$ and $\M$ we set
\begin{align*}
\eval{(\psi_1 \U \psi_2)}{\con}  \coloneqq  \begin{cases} \eval{\psi_1}{\con} \W \eval{\psi_2}{\con}\hphantom{\R} & \mbox{if $\G\F \psi_2 \in \con$} \\ \false & \mbox{otherwise,} \end{cases} 
\quad \mbox{and} \quad
\eval{(\psi_1 \M \psi_2)}{\con}  \coloneqq \begin{cases} \eval{\psi_1}{\con} \R \eval{\psi_2}{\con} \hphantom{\W} & \mbox{if $\G\F \psi_1 \in \con$} \\ \false & \mbox{otherwise.}\end{cases} 
\end{align*} We define $\flatnn{(\F\G\psi)}{\con}{\strict{\psi}} \coloneqq \F\G(\eval{\psi}{\con})$. 
\end{definition}

Loosely speaking, in $\eval{\psi}{\con}$ all occurrences of $\U$ in $\psi$ are either changed into $\W$ or removed. Therefore, 
$\eval{\psi}{\con}$ does not contain any occurrence of $\U$ anymore. The same happens with $\M$. So $\eval{\psi}{\con}$ is a formula of $\Pi_1$, and 
$\flatnn{(\F\G\psi)}{\con}{\strict{\psi}}$ is a formula of $\Delta_2$.

\begin{example}
\label{ex:evalnu}
Consider the basis formula $\F\G \psi = \F\G ((a \W b) \U c)$ of \cref{ex:running1}.  We have $\strict{\psi} = \{ \G\F c, \F\G a\}$.
For $\con := \{ \G\F c \}$ and $\con := \{\G\F c, \F\G a\}$ we get $\flatnn{(\F\G\psi)}{\con}{\strict{\psi}} = \F\G ((a \W b) \W c)$. For $\con := \emptyset$ and $\con :=\{ \F\G a \}$
we get $\flatnn{(\F\G\psi)}{\con}{\strict{\psi}} = \F\G \false \equiv \false$.
\end{example}

We prove in \cref{lem:mainlemma} below that  $\flatnn{(\F\G\psi)}{\con}{\strict{\psi}}$ satisfies conditions (iii) and (iv) of the Contextual Equivalence Lemma. First we need a definition and a technical lemma.

\begin{definition}
Let $\con \subseteq \strict{\psi}$ and $w \in \langn{\con}{\strict{\psi}}$. The \emph{stabilization index} of $w$ with respect to $\con$ is the least index $I \geq 0$ such that $w_{I+j} \not\models \chi$ for every $j \geq 0$ and every formula $\G\F\chi \in \strict{\psi} \setminus \con$.
\end{definition}

The stabilization index exists because, by the definition of $\langn{\con}{ \strict{\psi}}$, we have $w \not\models \G\F \chi$ for every $\G\F\chi \in \strict{\psi}\setminus \con$, and so there exists only finitely many suffixes of $w$ that satisfy each $\chi$. Since $\strict{\psi}$ is finite, we can choose the stabilization index as the least index $i$ such that no suffix $w_{i+j}$ satisfies $\chi$ for any $\G\F\chi \in \strict{\psi} \setminus \con$. The following lemma explains the relation between $\psi$ and $\eval{\psi}{\con}$.

\begin{lemma}
\label{lem:technicallem}
Let $\con \subseteq \strict{\psi}$, let $w \in \langn{\con}{\strict{\psi}}$, and let $I$ be the stabilization index of $w$ with respect to $\con$. 
\begin{enumerate}
\item For every $i \geq I \colon w_ {i} \models \psi \Rightarrow w_{i} \models \eval{\psi}{\con} $.
\item For every $i \geq 0 \colon w_{i} \models \eval{\psi}{\con}  \Rightarrow w_ {i} \models \psi$.
\end{enumerate}
\end{lemma}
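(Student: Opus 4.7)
My plan is to prove both statements simultaneously by structural induction on $\psi$, with part (1) as the hard direction that needs the stabilization index $I$ and part (2) as the easier direction that holds at every position. The atomic cases ($\true$, $\false$, $a$, $\overline{a}$) are immediate since $\eval{\psi}{\con} = \psi$. The Boolean cases and $\X$ follow directly from the induction hypothesis together with monotonicity (and the fact that $i \geq I$ implies $i+1 \geq I$, so shifting to $\X$ preserves the stabilization requirement).

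For the $\W$ and $\R$ cases, which $\eval{\cdot}{\con}$ treats homomorphically, I would use a two-case split on the semantics. For $\psi_1 \W \psi_2$ in part (1): if $w_i \models \psi_1 \U \psi_2$, the $\U$-witness $k$ gives positions $i, i+1, \dots, i+k$ all $\geq I$, so IH part (1) upgrades the satisfactions of $\psi_1$ and $\psi_2$ to $\eval{\psi_1}{\con}$ and $\eval{\psi_2}{\con}$; if instead $w_{i+j} \models \psi_1$ for all $j$, all those positions lie past $I$ and IH yields $\G\eval{\psi_1}{\con}$ from $i$, which implies the $\W$. The $\R$ case is dual, and part (2) uses the same split but without any stabilization concern since IH part (2) is assumed at every index.

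The main obstacle is the $\U$ (and dually $\M$) case. For part (1), suppose $w_i \models \psi_1 \U \psi_2$ with $i \geq I$. The key observation is that $\psi_2 \in \subf(\psi)$, so $\G\F\psi_2 \in \mubasis \subseteq B$ and, by the definition of $\prec$, $\G\F\psi_2 \in \strict{\psi}$. If $\G\F\psi_2 \notin \con$, then $\G\F\psi_2 \in \strict{\psi}\setminus\con$, and the definition of $I$ forces $w_{i+k} \not\models \psi_2$ for every $k \geq 0$, contradicting the $\U$-witness; so this subcase is vacuous and $\eval{\psi_1 \U \psi_2}{\con} = \false$ suffices. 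If $\G\F\psi_2 \in \con$, IH upgrades the witness to show $w_i \models \eval{\psi_1}{\con} \U \eval{\psi_2}{\con}$, which implies $\eval{\psi_1}{\con}\W\eval{\psi_2}{\con}$. For part (2) with $\G\F\psi_2 \in \con$, I have $w \models \G\F\psi_2$ (since $\G\F\psi_2 \in \con$ and $w \in \langn{\con}{\strict{\psi}}$). From $w_i \models \eval{\psi_1}{\con}\W\eval{\psi_2}{\con}$, either a $\U$-witness exists (IH gives $w_i \models \psi_1 \U \psi_2$) or $\eval{\psi_1}{\con}$ holds at every $w_{i+j}$; in the latter case, pick any $n \geq i$ with $w_n \models \psi_2$ (which exists by $\G\F\psi_2$) and apply IH part (2) to each intermediate position to conclude $w_i \models \psi_1 \U \psi_2 \models \psi_1 \W \psi_2$. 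The $\M$ case is entirely dual, using $\G\F\psi_1$ instead.

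The care points throughout are: checking that the structural IH applies to $\psi_1$ and $\psi_2$ with the same stabilization index $I$ (which it does, since $I$ depends only on $w$ and $\con$, not on the subformula), and that the basis-membership arguments for $\G\F\psi_2$ and $\G\F\psi_1$ go through via the subformula order on $\prec$. Once the $\U$/$\M$ cases are handled, the rest is mechanical.
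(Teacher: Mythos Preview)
Your proposal is correct and follows essentially the same structural-induction approach as the paper's proof. The only minor differences are cosmetic: the paper treats parts (1) and (2) in separate inductions rather than simultaneously, and for the $\M$ case it reduces to the $\U$ case via the equivalences $\psi_1 \M \psi_2 \equiv \psi_1 \U (\psi_1 \wedge \psi_2)$ and $\psi_1 \R \psi_2 \equiv \psi_1 \W (\psi_1 \wedge \psi_2)$ rather than arguing by duality, but your argument works equally well.
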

\begin{proof}
(1):  Fix an $i \geq I$ such that  $w_i \models \psi$.  We prove $w_{i} \models \eval{\psi}{\con}$ by structural induction on $\psi$.  If $\psi= \true, \false, a, \overline{a}$ then $\eval{\psi}{\con}= \psi$, and we are done. For all cases defined homomorphically the result follows immediately from the induction hypothesis. 
Assume now $\psi = \psi_1 \U \psi_2$.  We claim $\G\F\psi_2 \in \con$. Observe first that, by the definition of the basis, we have  $\G\F\psi_2 \in \strict{\psi}$.  Assume that $\G\F\psi_2 \notin \con$. Then, since $w \in \langn{\con}{\strict{\psi}}$ and $i \geq I$, we have $w_{i+j} \not\models \psi_2$ for every $j \geq 0$, which implies $w_i \not\models  \psi_1 \U\psi_2$ by the semantics of LTL. But this contradicts the assumption $w_i \models \psi$, and the claim is proved. Now we proceed as follows:

$$\begin{array}{rcll}
w_{i}\models \psi_1 \U \psi_2 &  \implies & w_{i}\models \psi_1 \W \psi_2 & \text{(semantics of LTL)}\\
& \iff    &   \forall j. \, (w_{i+j} \models \psi_1 \vee \exists k \leq j.\, w_{i+k} \models \psi_2 ) & \text{(semantics of LTL)} \\
& \implies  & \forall j. \, (w_{i+j} \models \eval{\psi_1}{\con} \vee \exists k \leq j.\, w_{i+k} \models \eval{\psi_2}{\con})  & \text{(induction hypothesis)} \\
& \implies  & w_{i} \models (\eval{\psi_1}{\con}) \W (\eval{\psi_2}{\con}) & \text{(semantics of LTL)} \\
& \iff    &   w_{i} \models \eval{(\psi_1 \U \psi_2)}{\con} & \text{(definition of  $\eval{ }{}$ and claim)}
\end{array}$$

Finally, assume $\psi = \psi_1 \M \psi_2$. Using  $\psi_1 \M \psi_2 \equiv \psi_1 \U (\psi_1 \wedge \psi_2)$, 
$\psi_1 \R \psi_2 \equiv \psi_1 \W (\psi_1 \wedge \psi_2)$, and the previous case we get
\begin{align*}
w_{i}\models \psi_1 \M \psi_2  & \iff w_{i}\models \psi_1 \U (\psi_1 \wedge \psi_2) \implies
w_{i} \models \eval{\psi_1}{\con} \W (\eval{\psi_1}{\con} \wedge \eval{\psi_2}{\con}) \\
& \iff w_{i} \models \eval{\psi_1}{\con} \R  \eval{\psi_2}{\con} \iff w_{i} \models \eval{(\psi_1 \M \psi_2)}{\con}
\end{align*}

(2):  Fix an $i \geq 0$ such that $w_i \models \eval{\psi}{\con}$. We prove $w_ {i} \models \psi$
by structural induction on $\psi$. All cases but $\psi = \psi_1 \U \psi_2$ are proved as in (1). Assume now
$\varphi = \psi_1 \U \psi_2$. By the definition of the basis, we have  $\G\F\psi_2 \in \strict{\psi}$. We claim $\G\F \psi_2 \in \con$. Assume the contrary. Then, by the definition of $\eval{}{}$ we have $\eval{\psi}{\con} = \false$, contradicting that $w_i \models \eval{\psi}{\con}$, and the claim is proved. Since $\G\F \psi_2 \in \con$ and $w \in \langn{\con}{\strict{\psi}}$, we have $w \models \G \F \psi_2$.  We show $w_i \models \psi_1 \U \psi_2$:
$$\begin{array}{rcll}
w_i \models \eval{(\psi_1 \U \psi_2)}{\con} & 
\iff     & w_i \models (\eval{\psi_1}{\con}) \W (\eval{\psi_2}{\con}) & \text{(definition of  $\eval{ }{}$ and claim)}\\
& \iff     & \forall j. \, w_{i+j} \models \eval{\psi_1}{\con} \vee \exists k \leq j.\, w_{i+k} \models \eval{\psi_2}{\con} & 
\text{(semantics of LTL)}\\
& \implies & \forall j. \, w_{i+j} \models \psi_1 \vee \exists k \leq j.\, w_{i+k} \models \psi_2 & \text{(induction hypothesis)} \\
& \iff     & w_i \models \psi_1 \W \psi_2 & \text{(semantics of LTL)}\\
& \iff     & w_i \models \psi_1 \U \psi_2  & \text{(because $w_i \models \G\F \psi_2$)}
\end{array}$$

The case $\psi = \psi_1 \M \psi_2$ is proved as in (1).

\end{proof}

\begin{proposition}
\label{lem:mainlemma}
Let $\F\G \psi$ be a basis formula. For every $C \subseteq \strict{\psi}$, the formula $\F\G \left(\eval{\psi}{\con}\right)$ belongs to $\Delta_2$. Further, $\F\G \psi$ and $\F\G \left(\eval{\psi}{\con}\right)$ are equivalent under context $\context{C}{\strict{\psi}}$. Finally, if $\con \subseteq \con'$ then  $\F\G \left(\eval{\psi}{\con}\right) \models \F\G \big(\eval{\psi}{\con'}\big)$.
\end{proposition}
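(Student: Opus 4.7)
My plan is to address the three claims of the proposition in turn, using \cref{lem:technicallem} for the equivalence part.

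\textbf{First claim (membership in $\Delta_2$).} I would argue that $\eval{\psi}{\con}$ belongs to $\Pi_1$, since inspection of \cref{def:evalnu} shows that the only temporal operators occurring in $\eval{\psi}{\con}$ are $\X$, $\W$, and $\R$ (every $\U$ or $\M$ has been either rewritten into $\W$ or $\R$, or replaced by $\false$). Since $\Pi_1$ contains $\Sigma_0$ and is closed under $\vee$, $\wedge$, $\X$, $\W$, and $\R$, a straightforward structural induction gives $\eval{\psi}{\con} \in \Pi_1$. Then $\G(\eval{\psi}{\con}) = \eval{\psi}{\con} \W \false \in \Pi_1$, and prepending $\F$ yields a formula in $\Sigma_2 \subseteq \Delta_2$.

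\textbf{Second claim (equivalence under context).} Fix $w \in \langn{C}{\strict{\psi}}$ and let $I$ be the stabilization index of $w$ with respect to $C$. For the direction $w \models \F\G \psi \Rightarrow w \models \F\G(\eval{\psi}{\con})$, pick some $j$ such that $w_i \models \psi$ for every $i \geq j$, and set $j' \coloneq \max(j, I)$; then for every $i \geq j'$ we have both $w_i \models \psi$ and $i \geq I$, so \cref{lem:technicallem}(1) yields $w_i \models \eval{\psi}{\con}$, whence $w \models \F\G(\eval{\psi}{\con})$. The converse direction is even easier: if $w_i \models \eval{\psi}{\con}$ for every $i \geq j$, then \cref{lem:technicallem}(2) (which has no hypothesis on $i$) gives $w_i \models \psi$ for every such $i$, so $w \models \F\G \psi$.

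\textbf{Third claim (monotonicity).} Assume $C \subseteq C'$. I would first prove by structural induction on $\psi$ that $\eval{\psi}{\con} \models \eval{\psi}{\con'}$, and then conclude $\F\G(\eval{\psi}{\con}) \models \F\G(\eval{\psi}{\con'})$ by monotonicity of $\F\G$. The base cases $\psi \in \{\true, \false, a, \overline{a}\}$ are trivial since $\eval{\psi}{\con} = \psi = \eval{\psi}{\con'}$. The homomorphic cases ($\vee$, $\wedge$, $\X$, $\W$, $\R$) follow from the induction hypothesis combined with the monotonicity of these operators. For $\psi = \psi_1 \U \psi_2$, if $\G\F \psi_2 \in C$ then also $\G\F \psi_2 \in C'$, so both values are $\W$-formulas and the claim follows from the induction hypothesis and monotonicity of $\W$; if $\G\F \psi_2 \notin C$ then $\eval{\psi}{\con} = \false$, which entails everything. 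The case $\psi = \psi_1 \M \psi_2$ is analogous, using $\G\F \psi_1$.

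No step looks genuinely hard; the main care is in the equivalence part, where one must remember that \cref{lem:technicallem}(1) needs $i \geq I$ and therefore the starting index of the eventual $\G$-suffix must be pushed past the stabilization index. The rest is bookkeeping on the inductive definition of $\eval{\psi}{\con}$.
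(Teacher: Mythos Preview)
Your proposal is correct and follows essentially the same approach as the paper: membership in $\Delta_2$ via $\eval{\psi}{\con}\in\Pi_1$, equivalence under context via \cref{lem:technicallem} and the stabilization index, and monotonicity via structural induction on $\psi$ using the monotonicity of the operators. Your treatment of the equivalence part is in fact slightly more explicit than the paper's, since you separate the two directions and note that only the forward one requires pushing the starting index past $I$.
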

\begin{proof}
Fix  $C \subseteq \strict{\psi}$. We prove the three parts of the proposition separately.

\smallskip\noindent  (1)  The formula $\F\G \left(\eval{\psi}{\con}\right)$ belongs to $\Delta_2$. 
Follows immediately from $\eval{\psi}{\con} \in \Pi_1$.

\smallskip\noindent  (2) $\F\G \psi$ and $\F\G \left(\eval{\psi}{\con}\right)$ are equivalent with respect to $\context{C}{\strict{\psi}}$. If $\langn{\con}{\strict{\psi}} = \emptyset$, both formulas are trivially equivalent, so let $w \in \langn{\con}{\strict{\psi}}$. Let $I$ be the stabilization index of $w$ with respect to $C$. By \cref{lem:technicallem},  for every $i \geq 0$ we have $w_{i+I} \models  \psi$ if{}f $w_{i+I} \models \eval{\psi}{\con}$.  So, in particular, 
$w \models\F\G \psi$ if{}f $w \models \F\G \left(\eval{\psi}{\con}\right)$.

\smallskip\noindent (3) If $\con \subseteq \con'$ then 
$\F\G\left(\eval{\psi}{\con}\right) \models \F\G \left(\eval{\psi}{\con'}\right)$. Assume $\con \subseteq \con'$. By the monotonicity of $\F$ and $\G$ it suffices to prove $\eval{\psi}{\con} \models \eval{\psi}{\con'}$. We proceed by structural induction on $\psi$. The cases  $\psi = \true, \false, a, \overline{a}$ are trivial.
We now consider the case  $\psi = \psi_1 \W \psi_2$, as a representative of the cases where $\eval{\psi}{\con}$ is defined homomorphically. We have $\eval{\psi}{\con} = \eval{\psi_1}{\con} \W \eval{\psi_2}{\con}$  and  $\eval{\psi}{\con'} = \eval{\psi_1}{\con'} \W \eval{\psi_2}{\con'}$ (definition of $\eval{}{\con}$). So it suffices to show $\eval{\psi_1}{\con} \W  \eval{\psi_2}{\con}  \models \eval{\psi_1}{\con'} \W \eval{\psi_2}{\con'}$. But this follows immediately from $\eval{\psi_1}{\con} \models \eval{\psi_1}{\con'}$ and $\eval{\psi_2}{\con} \models \eval{\psi_2}{\con'}$ (induction hypothesis), and the monotonicity of $\W$. Finally we consider the two remaining cases:

\smallskip\noindent Case $\psi = \psi_1 \U \psi_2$.  We have $\G\F \psi_2 \in \strict{\psi}$ (definition of $\strict{\psi}$). If $\G\F\psi_2 \notin \con$ then $\eval{\psi}{\con} = \false$ (definition of $\eval{}{\con}$), which implies $\eval{\psi}{\con} \models \eval{\psi}{\con'}$, and we are done. If $\G\F\psi_2 \in \con$ then $\G\F\psi_2 \in \con'$ ($\con \subseteq \con'$), and so we have $\eval{\psi}{\con} = \eval{\psi_1}{\con} \W \eval{\psi_2}{\con}$  and  $\eval{\psi}{\con'} = \eval{\psi_1}{\con'} \W \eval{\psi_2}{\con'}$ (definition of $\eval{}{\con}$). Proceed now as in the previous case.

\smallskip\noindent Case $\psi = \psi_1 \M \psi_2$. We have $\G\F \psi_1 \in \strict{\psi}$ (definition of $\strict{\psi}$). If $\G\F\psi_1 \notin \con$ then we proceed as for $\varphi = \psi_1 \U \psi_2$. If $\G\F\psi_1 \in \con$ then $\G\F\psi_1 \in \con'$ ($\con \subseteq \con'$). Proceed as in the previous case, replacing $\W$ by $\R$.
\end{proof}

\subsubsection{The formula $\flatnn{\G\F\psi}{\con}{\strict{\psi}}$ }
\label{sec:flatnn}

We define the formula $\flatnn{(\G\F \psi)}{\con}{\strict{\psi}}$ and prove that it satisfies conditions (iii) and (iv) of the Contextual Equivalence Lemma. We use the 
duality between $\F\G$ and $\G\F$. Recall from \cref{sec:syse} that the dual $\overline{\varphi}$ of a formula $\varphi$ in negation normal form replaces $\U$ by $\R$, $\W$ by $\M$, and vice versa. In particular we have $\overline{\F\G \varphi} = \G\F \, \overline{\varphi}$ and $\overline{\G\F \varphi} = \F\G \, \overline{\varphi}$, $\overline{\Pi_1} = \Sigma_1$, and $\overline{\Delta_2} = \Delta_2$ by \Cref{prop:dualityhierarchy}. Moreover, by \cref{rem}(\ref{rem:dualequiv}) two formulas are equivalent under context $\context{C_d}{\overline{B}}$ if{}f they are equivalent under the dual context $\context{C}{B}$. This is all we need.

\begin{definition}\label{def:evalmu}
Let $\G\F\psi \in B$ and let $\con \subseteq \strict{\psi}$. 
We define $\evalgf{\psi}{C} \coloneqq \overline{\eval{\overline{\psi}}{C_d}}$ and $\flatnn{(\G\F\psi)}{\con}{\strict{\psi}} \coloneqq \G\F(\evalgf{\psi}{C})$.
\end{definition}

Since $\eval{\overline{\psi}}{C_d} \in \Pi_1$, we have $\evalgf{\psi}{\con} \in \overline{\Pi_1} = \Sigma_1$. We now prove the counterpart of \cref{lem:mainlemma}:

\begin{proposition}
\label{lem:mainlemma-nu}
Let $\G\F \,\psi$ be a basis formula. For every $\con \subseteq \strict{\psi}$, the formula $\G\F \left(\evalgf{\psi}{\con}\right)$ belongs to $\Delta_2$. Further, $\G\F\, \psi$ and $\G\F \left(\evalgf{\psi}{\con}\right)$ are equivalent under context $\context{C}{\strict{\psi}}$. Finally, if $\con \subseteq \con'$ then  $\G\F \left(\evalgf{\psi}{\con}\right) \models \G\F \big(\evalgf{\psi}{\con'}\big)$.
\end{proposition}

\begin{proof}
Each of the three statements follows by duality from \cref{lem:mainlemma}. Given any formula $\G\F\psi$ and any two sets $C, C' \subseteq \strict{\psi}$, we instantiate \cref{lem:mainlemma} with $\overline{\G\F\psi}$, $C_d$, and $C_d'$. For the first statement, we have $\G\F (\evalgf{\psi}{\con}) = \overline{\F\G (\eval{\overline{\psi}}{C_d})} \in \overline{\Delta_2} = \Delta_2$. Let us now prove that $\G\F\, \psi$ and $\G\F \left(\evalgf{\psi}{\con}\right)$ are equivalent under context $\context{C}{\strict{\psi}}$. 
We have  $\G\F \psi = \overline{\F\G \overline{\psi}}$ and $\G\F (\evalgf{\psi}{\con}) = \overline{\F\G (\eval{\overline{\psi}}{C_d})}$ by definition. 
So it suffices to show that $\F\G \overline{\psi}$ and $\F\G (\eval{\overline{\psi}}{C_d})$ are equivalent under $\context{C}{\strict{\psi}}$, and, by
\cref{rem}(\ref{rem:dualequiv}), that they are equivalent under context $\context{C_d}{\strict{\overline{\psi}}}$. But this follows from \cref{lem:mainlemma-nu}.
Finally, let $C \subseteq C'$. We show $\G\F \left(\evalgf{\psi}{\con}\right) \models \G\F \big(\evalgf{\psi}{\con'}\big)$. Since $C \subseteq C'$,
we have $C_d' \subseteq C_d$ and so $\neg \, \GF (\evalgf{\psi}{C_d'}) \equiv \FG (\eval{\overline{\psi}}{C'_d}) \models \FG (\eval{\overline{\psi}}{C_d}) \equiv \overline{\GF (\evalgf{\psi}{C_d})} \equiv \neg \G\F(\evalgf{\psi}{C_d})$, where the $\models$-step applies \cref{lem:mainlemma}. By \emph{modus tollens}, we are done.
\end{proof}

\subsection{Contextual normalization III: The formula $\flatn{\varphi}{\con}{B}$}
\label{subsec:cnormforall}

We search for a formula $\flatn{\varphi}{\con}{B}$ that satisfies conditions (i) and (ii) of the Contextual Equivalence Lemma, and belongs to $ \Delta_2$.
\cref{lem:flatten:correct-local} gives such a formula for the case in which $\varphi=\G \psi$ for some formula $\psi$. Let us first explain the idea. By the equivalence $\G \psi \equiv \psi \U \G\psi$, given a word $w$ and $C \subseteq B$, the following is true for \emph{every} index $j \geq 0$:  $w \models \psi \U \G\psi$ if{}f  $w_i \models \psi$ for every $i  \leq j$ and $w_j \models \G\psi$. Observe that the form of this statement is already close to condition (i) of the Contextual Equivalence Lemma.  The final trick is to choose $j$ as the stabilization index $I$ of $w$ with respect to $C$. Applying \Cref{lem:technicallem} we obtain that $\G\psi$ is equivalent to $\psi \U \G \left(\eval{\psi}{\con}\right)$ under
context $\context{\con}{B}$, and also to $\left(\G\left(\eval{\psi}{\con}\right)\right)\R\psi$.

\begin{lemma}\label{lem:flatten:correct-local}
If $\varphi = \G\psi$, then for every $\con \subseteq B$ the formulas $\G\psi$,
$\psi \U \G\left(\eval{\psi}{\con}\right)$, and  $\G\left(\eval{\psi}{\con}\right)\R\psi$ are equivalent under context $\context{\con}{B}$.
\end{lemma}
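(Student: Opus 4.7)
My plan is to prove the three formulas pairwise equivalent on $\langn{C}{B}$ by establishing the cyclic chain $\G\psi \models \psi \U \G(\eval{\psi}{C}) \models \G(\eval{\psi}{C}) \R \psi \models \G\psi$. The workhorse will be \cref{lem:technicallem}: for $w \in \langn{C}{B}$ with stabilization index $I$, part~(1) gives $w_i \models \psi \Rightarrow w_i \models \eval{\psi}{C}$ for $i \geq I$, and part~(2) gives $w_i \models \eval{\psi}{C} \Rightarrow w_i \models \psi$ for all $i \geq 0$. The intuition is the one previewed just before the lemma: the equivalence $\G\psi \equiv \psi \U \G\psi$ lets us unroll ``always $\psi$'' as ``$\psi$ until $\G\psi$'', and if we choose the switch-over point to be the stabilization index, we may replace the inner $\psi$ by its $\Pi_1$-approximant $\eval{\psi}{C}$ without changing meaning on $\langn{C}{B}$.

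For the implications themselves I anticipate only short semantic arguments. For $\G\psi \models \psi \U \G(\eval{\psi}{C})$: all suffixes satisfy $\psi$ by hypothesis, so part~(1) yields $w_I \models \G(\eval{\psi}{C})$; together with the trivial $w_j \models \psi$ for $j < I$, this witnesses the until at $k = I$. For $\psi \U \G(\eval{\psi}{C}) \models \G(\eval{\psi}{C}) \R \psi$: letting $k$ be the witness of the until, $w_k \models \eval{\psi}{C}$ and part~(2) promote this to $w_k \models \psi$; hence $\psi$ holds on $w_0, \ldots, w_k$ and $\G(\eval{\psi}{C})$ holds at $w_k$, giving the strong release $\G(\eval{\psi}{C}) \M \psi$ and therefore the weak release. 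For $\G(\eval{\psi}{C}) \R \psi \models \G\psi$: in the ``always $\psi$'' disjunct we are done; in the $\M$-disjunct some $k$ supplies $w_j \models \psi$ for $j \leq k$ and $w_{k+i} \models \eval{\psi}{C}$ for all $i \geq 0$, and part~(2) lifts the latter to $w_{k+i} \models \psi$, combining to $w \models \G\psi$.

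The step that will need actual care is justifying that \cref{lem:technicallem} applies in this setting: it is phrased for $C \subseteq \strict{\chi}$ with $\chi$ a basis formula, whereas in our setting $\psi$ is an arbitrary subformula of the formula $\G\psi$ being normalized and $C \subseteq B$. However, the proof of that lemma only uses that every $\G\F\rho$ arising from a $\U$- or $\M$-subformula of $\psi$ lies in the ambient basis, and this is ensured by the construction of $B$ from $\G\psi$ since $\subf(\psi) \subseteq \subf(\G\psi)$. Thus I expect the lemma to transfer verbatim with $B$ replacing $\strict{\chi}$, using the stabilization index defined analogously as the least $I$ beyond which no suffix of $w$ satisfies any $\G\F\rho \in B \setminus C$. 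This small generalization is the main obstacle; once it is in hand, the rest is a mechanical unfolding of the semantics of $\U$, $\M$, and $\R$.
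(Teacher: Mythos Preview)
Your proposal is correct and follows the paper's approach: both rely on \cref{lem:technicallem} and the stabilization index, with essentially the same semantic unfoldings of $\U$, $\M$, and $\R$. One minor difference worth noting: where you argue that the proof of \cref{lem:technicallem} transfers from the basis $\strict{\psi}$ to the full basis $B$, the paper instead invokes \cref{rem}(\ref{rem:extend}) to restrict attention to the context $\context{C}{\mubasis}$ (since $\eval{\psi}{C}$ depends only on $C \cap \mubasis$), which sidesteps the need to re-inspect that proof.
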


\begin{proof}
By \cref{rem}(\ref{rem:extend}), it suffices to show that they are equivalent under context $\context{\con}{\mubasis}$ since $\mubasis \subseteq B$. 
We prove $w \models \G\psi$ if{}f $w \models \psi \U \G \left(\eval{\psi}{\con}\right)$ 
if{}f $w \models \G\left(\eval{\psi}{\con}\right)\R\psi$ for every $w \in \langn{\con}{\mubasis}$. So fix an arbitrary $w \in \langn{\con}{\mubasis }$.

\smallskip\noindent 
($\Rightarrow$)  Assume $w \models \G \psi$. Then, in particular, we have 
$w \models \F\G \psi$. By \cref{lem:technicallem}, $(w_i \models \psi \Leftrightarrow 
\eval{\psi}{\con})$ holds for every $i$ beyond the stabilization index of $w$ with respect to $C$. So, in particular, we have $w \models \F\G \left(\eval{\psi}{\con}\right)$.
It follows that $w \models \G\psi \wedge \F\G \left(\eval{\psi}{\con}\right)$ which, by the semantics of LTL, implies 
$w  \models \psi \U \G \left(\eval{\psi}{\con}\right)$ and $w  \models \G \left(\eval{\psi}{\con}\right) \R \psi$. 

\smallskip\noindent
($\Leftarrow$) Assume $w \models \psi \U \G\left(\eval{\psi}{\con}\right)$. By the semantics of LTL there exists $i \geq 0$ such that $w_j \models \psi$ for every $j < i$ and $w_j \models \eval{\psi}{\con}$ for every $j \geq i$. By \cref{lem:technicallem}, since $w \in \langn{\con}{\mubasis }$, we have $w_j \models \psi$ for every $j \geq i$. So $w_j \models \psi$ holds for every $j \geq 0$, and so $w \models \G\psi$. For $\G\left(\eval{\psi}{\con}\right)\R\psi \equiv \G\left(\eval{\psi}{\con}\right)\M\psi \vee \G\psi$, either $\G\psi$ holds and we are already done, or we can apply the previous argument with only $j \leq i$ instead of $j < i$.
\end{proof}

It is easy to extend \Cref{lem:flatten:correct-local} to all $\W$- and $\R$-formulas.

\begin{proposition}\label{prop:flatten:correct-local}
Let $\psi_1$ and $\psi_2$ be formulas.
\begin{enumerate}
\item 
For every $\con \subseteq \mubasis$, the formulas $\psi_1 \W \psi_2$ and $\psi_1  \U \left(\psi_2 \vee \G \left(\eval{\psi_1}{\con}\right)\right)$ are equivalent under context $\con$.
\item For every $\con \subseteq \mubasis$, the formulas $\psi_1 \R \psi_2 $ and $\left(\psi_1 \vee \G \left(\eval{\psi_2}{\con}\right)\right) \M  \psi_2$ are equivalent under context $\con$.
\end{enumerate}
\end{proposition}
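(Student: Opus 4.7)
I would derive both parts from the $\G$-case established in \Cref{lem:flatten:correct-local}, using the standard LTL identities $\psi_1 \W \psi_2 \equiv (\psi_1 \U \psi_2) \vee \G \psi_1$ and $\psi_1 \R \psi_2 \equiv (\psi_1 \M \psi_2) \vee \G \psi_2$. These hold unconditionally and so remain valid under any context.

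For part (1), I first apply the $\W$-identity, then use \Cref{lem:flatten:correct-local} (with basis $\mubasis$) to replace $\G \psi_1$ with $\psi_1 \U \G(\eval{\psi_1}{\con})$ under context $\context{\con}{\mubasis}$, obtaining $(\psi_1 \U \psi_2) \vee (\psi_1 \U \G(\eval{\psi_1}{\con}))$. The standard LTL law $(\varphi \U \alpha) \vee (\varphi \U \beta) \equiv \varphi \U (\alpha \vee \beta)$, easily verified from the semantics of $\U$, then yields the target $\psi_1 \U (\psi_2 \vee \G(\eval{\psi_1}{\con}))$.

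Part (2) proceeds analogously but with one extra step. I would expand $\psi_1 \R \psi_2$ as $(\psi_1 \M \psi_2) \vee \G \psi_2$ and separately rewrite the target $(\psi_1 \vee \G(\eval{\psi_2}{\con})) \M \psi_2$ as $(\psi_1 \M \psi_2) \vee (\G(\eval{\psi_2}{\con}) \M \psi_2)$ using the dual distributivity $(\alpha \vee \beta) \M \varphi \equiv (\alpha \M \varphi) \vee (\beta \M \varphi)$. Both expansions share the common disjunct $\psi_1 \M \psi_2$, so the proof reduces to the auxiliary equivalence $\G \psi_2 \equiv \G(\eval{\psi_2}{\con}) \M \psi_2$ under context $\context{\con}{\mubasis}$.

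This last equivalence is the main obstacle, and it is where the stabilization-index technique from the proof of \Cref{lem:flatten:correct-local} reappears. The right-to-left direction is immediate: an $\M$-witness $k$ supplies $w_j \models \psi_2$ for $j \leq k$, and \Cref{lem:technicallem}(2) upgrades $w_k \models \G(\eval{\psi_2}{\con})$ to $w_{k+l} \models \psi_2$ for all $l \geq 0$, establishing $\G \psi_2$. For the left-to-right direction, given $w \models \G \psi_2$ and $w \in \langn{\con}{\mubasis}$, I take the stabilization index $I$ of $w$ with respect to $\con$ and invoke \Cref{lem:technicallem}(1) to conclude $w_{I+l} \models \eval{\psi_2}{\con}$ for every $l \geq 0$, i.e.\ $w_I \models \G(\eval{\psi_2}{\con})$; since also $w_j \models \psi_2$ for all $j \leq I$, the index $k = I$ witnesses $w \models \G(\eval{\psi_2}{\con}) \M \psi_2$, completing the proof.
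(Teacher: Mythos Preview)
Your proof is correct and follows essentially the same strategy as the paper: reduce to \Cref{lem:flatten:correct-local} via the identities $\psi_1 \W \psi_2 \equiv (\psi_1 \U \psi_2) \vee \G\psi_1$ and $\psi_1 \R \psi_2 \equiv (\psi_1 \M \psi_2) \vee \G\psi_2$, together with the distributivity of $\U$ (resp.\ $\M$) over disjunction. The paper organizes part~(1) as a semantic case split on whether $w \models \G\psi_1$, whereas you phrase it algebraically, but the content is the same. For part~(2), which the paper leaves implicit, your reduction to the auxiliary equivalence $\G\psi_2 \equiv \G(\eval{\psi_2}{\con}) \M \psi_2$ is exactly right; note, however, that you need not redo the stabilization argument from scratch: the forward direction of \Cref{lem:flatten:correct-local}'s proof already establishes $w \models \G\psi \wedge \F\G(\eval{\psi}{\con})$, which directly yields the $\M$-form (not just the stated $\R$-form), and the backward direction follows from $\M \models \R$ together with the lemma.
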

\begin{proof}
We only prove the first case. If $\langn{\con}{\mubasis } = \emptyset$, the proposition trivially holds. Let $w \in \langn{\con}{\mubasis }$. 

\smallskip\noindent 
($\Rightarrow$) Assume $w \models \psi_1 \W \psi_2$, and consider two cases. If $w \models \G \psi_1$, then 
$w \models \psi_1 \U \G(\eval{\psi_1}{\con})$ (\cref{lem:flatten:correct-local}), and so
$w \models \psi_1 \; \U  \; \left(\psi_2 \vee \G \left(\eval{\psi_1}{\con}\right)\right)$.
If $w \not \models \G \psi_1$, then we have $w \models \psi_1 \W \psi_2 $ if{}f $w \models \psi_1 \U \psi_2$ by the semantics of LTL, and 
then $w \models \psi_1 \U \psi_2$ implies $w \models \psi_1 \U \left(\psi_2 \vee \G \left(\eval{\psi_1}{\con}\right)\right)$, also by the semantics of LTL.

\smallskip\noindent 
($\Leftarrow$)  Assume $w \models \psi_1 \U \left(\psi_2 \vee \G\left(\eval{\psi_1}{\con}\right)\right)$. Then
$w \models \psi_1 \U (\psi_2 \vee \G \psi_1)$ (\Cref{lem:flatten:correct-local}, after distributing $\U$ over $\vee$), and since $\psi_1 \U (\psi_2 \vee \G \psi_1) \equiv \psi_1 \W \psi_2$ we are done.
\end{proof}

\Cref{prop:flatten:correct-local} suggests the following definition for the formula $\flatten{\varphi}{\con}$:

\begin{definition}\label{def:evalmunu}
For every $\con \subseteq B$, the formula $\flatten{\varphi}{\con}$ is inductively defined as follows. If $\varphi = \true, \false, a, \overline{a}$ then $\flatten{\varphi}{\con} = \varphi$. For the operators $\vee$, $\wedge$, $\X$, $\U$, and $\M$, we define $\flatten{\varphi}{\con}$ homomorphically. Finally
\begin{align*}
\flatten{(\psi_1 \W \psi_2)}{\con}  \coloneq \flatten{\psi_1}{\con} \U \big( \flatten{\psi_2}{\con} \vee \G \left(\eval{\psi_1}{\con}\right)\big)  
\quad \mbox{ and } \quad
\flatten{(\psi_1 \R \psi_2)}{\con}  = \left(\flatten{\psi_1}{\con} \vee \G \big(\eval{\psi_2}{\con}\big)\right) \M \flatten{\psi_2}{\con} \ .
\end{align*} \end{definition}

\begin{example}
\label{ex:norm2}
Let $\varphi = ((a \W b) \U c) \W d$ and $\con \subseteq B$. With $\eval{((a \W b) \U c)}{\con} = (a \W b) \W c $ we get:
\begin{align*}
\flatten{\varphi}{\con} & =  \flatten{\big((a \W b) \U c\big)}{\con} \, \U \, \bigg(  d \vee  \G \big((a \W b) \W c\big) \bigg) \\
& =  \big( \flatten{(a \W b)}{\con} \U  c \big)  \, \U \, \bigg(  d \vee  \G \big((a \W b) \W c\big) \bigg) \\
& =  \bigg( \big( a \U ( b \vee \G a) \big)  \U  c \bigg)  \, \U \, \bigg(  d \vee  \G \big((a \W b) \W c\big) \bigg) 
\end{align*}
\end{example}

We prove that setting $\flatn{\varphi}{\con}{B} := \flatten{\varphi}{\con}$ satisfies all conditions of the Contextual Equivalence Lemma.

\begin{proposition}
\label{prop:maingeneral}
Let $\varphi$ be a formula.
For every $\con \subseteq B$, the formula $\flatten{\varphi}{\con}$ belongs to $\Sigma_2$. Further, $\varphi$ and $\flatten{\varphi}{\con}$ are equivalent under context $\context{\con}{B}$. Finally, if $\con \subseteq \con'$ then $\flatten{\varphi}{\con} \models \flatten{\varphi}{\con'}$.
\end{proposition}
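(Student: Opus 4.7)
The plan is to prove all three claims simultaneously by structural induction on $\varphi$, following the inductive definition of $\flatten{\varphi}{\con}$ given in \Cref{def:evalmunu}. Before entering the induction, I would record one observation that drives the whole proof: every basis formula has the form $\F\G\psi$ or $\G\F\psi$, and both are tail-invariant. Hence $\langn{\con}{B}$ is closed under taking suffixes, which yields a \emph{compositionality-under-context} principle: if two subformulas are equivalent under context $\context{\con}{B}$, then the formulas obtained by enclosing them in any LTL operator remain equivalent under the same context. This is immediate from the standard compositionality of LTL semantics combined with suffix closure.

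For the base cases $\varphi \in \{\true, \false, a, \overline{a}\}$ all three claims are trivial, since $\flatten{\varphi}{\con} = \varphi \in \Sigma_0 \subseteq \Sigma_2$. For the homomorphic cases $\varphi \in \{\psi_1 \vee \psi_2, \psi_1 \wedge \psi_2, \X \psi_1, \psi_1 \U \psi_2, \psi_1 \M \psi_2\}$, $\Sigma_2$-membership follows from the closure of $\Sigma_2$ under $\vee, \wedge, \X, \U, \M$ (\Cref{def:future_hierarchy}) and the inductive hypothesis; contextual equivalence follows from the inductive hypothesis together with the compositionality principle above; and monotonicity in $\con$ follows from the inductive hypothesis together with the fact that all LTL operators (except negation) are monotonic.

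The only interesting cases are $\varphi = \psi_1 \W \psi_2$ and its dual $\varphi = \psi_1 \R \psi_2$, which I would treat symmetrically. For $\W$, recall that $\flatten{(\psi_1 \W \psi_2)}{\con} = \flatten{\psi_1}{\con} \U (\flatten{\psi_2}{\con} \vee \G(\eval{\psi_1}{\con}))$. A direct inspection of \Cref{def:evalnu} shows $\eval{\psi_1}{\con} \in \Pi_1$, since the definition eliminates $\U$ and $\M$ and otherwise uses only operators admissible in $\Pi_1$; hence $\G(\eval{\psi_1}{\con}) \in \Pi_1 \subseteq \Sigma_2$, and combined with the inductive hypothesis $\flatten{\psi_i}{\con} \in \Sigma_2$ plus the closure of $\Sigma_2$ under $\vee$ and $\U$, this gives claim (1). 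For claim (2), apply \Cref{prop:flatten:correct-local}(1) with $\con \cap \mubasis$ to obtain $\psi_1 \W \psi_2 \equiv \psi_1 \U (\psi_2 \vee \G(\eval{\psi_1}{\con}))$ under context $\context{\con \cap \mubasis}{\mubasis}$, using that $\eval{\psi_1}{\con}$ depends only on $\con \cap \mubasis$; lift to context $\context{\con}{B}$ via \cref{rem}(\ref{rem:extend}); then substitute $\flatten{\psi_i}{\con}$ for $\psi_i$ using the inductive hypothesis and the compositionality principle. For claim (3), combine the inductive implications $\flatten{\psi_i}{\con} \models \flatten{\psi_i}{\con'}$ with the inner implication $\eval{\psi_1}{\con} \models \eval{\psi_1}{\con'}$ which is established inside the proof of \Cref{lem:mainlemma}(3), and propagate them through $\G$, $\vee$, and $\U$ by monotonicity.

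The main obstacle is the compositionality step in the $\W$ and $\R$ cases: one must justify that substituting equivalent subformulas inside a nested temporal context (here, a $\U$ over a $\vee$ over a $\G$) preserves equivalence under context $\context{\con}{B}$. This is exactly where the tail-invariance of basis formulas matters—without suffix closure of $\langn{\con}{B}$, the semantics of $\U$ at a point $w \in \langn{\con}{B}$ would query the subformulas on suffixes possibly outside $\langn{\con}{B}$, and the inductive equivalence would not transfer. Once this principle is cleanly stated up front, the rest of the induction reduces to bookkeeping.
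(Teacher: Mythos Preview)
Your proposal is correct and follows essentially the same route as the paper: structural induction on $\varphi$, with the $\W$/$\R$ cases handled via \Cref{prop:flatten:correct-local} and the observation that $\eval{\psi}{\con}\in\Pi_1$, and with monotonicity in $\con$ propagated through monotone operators. The one presentational difference is that you isolate the suffix-closure of $\langn{\con}{B}$ into an explicit ``compositionality-under-context'' principle up front, whereas the paper restricts to $\con\subseteq\mubasis$ (using $\flatten{\psi}{\con}=\flatten{\psi}{\con\cap\mubasis}$) and derives the needed fact $\forall i.\,w_i\in\langn{\con}{\mubasis}$ inline for each homomorphic case; your abstraction is a bit cleaner but the content is the same.
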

\begin{proof}

\smallskip\noindent (1) The formula $\flatten{\varphi}{\con}$ belongs to $\Sigma_2$. We show it by structural induction on $\varphi$. The only non-trivial cases are $\varphi = \varphi_1\W \varphi_2$ and $\varphi = \varphi_1\R \varphi_2$. For either of the two cases, by induction hypothesis, we have $\flatten{\varphi_k}{\con} \in \Sigma_2$ for $k=1,2$. Further, since $\eval{\varphi_k}{\con}\in \Pi_1$, we have $\G \left(\eval{\varphi_k}{\con}\right)\in \Pi_1$ as well. So $\flatten{\varphi_1}{\con} \U \big( \flatten{\varphi_2}{\con} \vee \G \left(\eval{\varphi_1}{\con}\right)\big) \in \Sigma_2$ and $\left(\flatten{\varphi_1}{\con} \vee \G \big(\eval{\varphi_2}{\con}\big)\right) \M \flatten{\varphi_2}{\con} \in \Sigma_2$, and we are done.

\smallskip\noindent (2) The formulas $\varphi$ and $\flatten{\varphi}{\con}$ are equivalent under context $\context{\con}{B}$. By \cref{rem}(\ref{rem:extend}), it suffices to show 
that $\left( w \models \varphi \iff w \models \flatten{\varphi}{\con} \right)$ holds for every $C \subseteq \mubasis $ and for every $w  \in \langn{\con}{\mubasis }$. We proceed by structural induction on $\varphi$.  We use the  identity
\begin{align}
\flatten{\psi}{\con} = \flatten{\psi}{\con \cap \mubasis} \label{flatten:restrict}
\end{align}
\noindent which holds because the case distinctions in \cref{def:evalnu} only involve formulas $\G\F\chi$ such that $\chi$ is a subformula of $\psi$, and those formulas belong to $\mubasis$.

The base of the induction is $\varphi \in \{ \true, \false, a, \neg a\}$. In all these cases we have $\varphi = \flatten{\varphi}{\con}$ by definition, and we are done. All cases in which $\flatten{\varphi}{\con}$ is defined homomorphically are handled in the same way, and so we consider only the case $\varphi = \psi_1 \U \psi_2$. Fix $\con \subseteq \mubasis$ and $w \in \langn{\con}{\mubasis}$. We prove $\left( w \models \varphi \iff w \models \flatten{\varphi}{\con} \right)$. Applying the induction hypothesis to $\psi_1$ and $\psi_2$, $\psi_j$ and $\flatten{\psi_j}{\con}$ are equivalent under context $\context{\con}{\mubasis}$. Moreover, $w_i \in \langn{C}{\mubasis}$ for all $i \geq 0$ by the limit properties of $\G\F$-subformulas, and this yields:
\begin{align}
\forall i. &  (w_i \models \psi_j \iff w_i \models \flatten{\psi_j}{\con}) & \mbox{ for $j=1,2$.} \label{eq:wi}
\end{align}

\noindent We get
\[\arraycolsep=4.1pt
\begin{array}{lllll}
w \models \psi_1 \U \psi_2 & \iff & \exists k. ~ w_k \models \psi_2 \wedge (\forall \ell < k. ~ w_\ell \models \psi_1)  & \mbox{(semantics of LTL)}\\
&\iff & \exists k. ~ w_k \models \flatten{\psi_2}{\con} \wedge (\forall \ell < k. ~ w_\ell \models \flatten{\psi_1}{\con})  & \mbox{(\ref{eq:wi})}\\ 
&\iff & w \models \flatten{\psi_1}{\con} \U \flatten{\psi_2}{\con} & \mbox{(semantics of LTL)} \\
&\iff & w \models \flatten{(\psi_1 \U \psi_2)}{\con} & \mbox{(definition of $\flatten{}{}$)}
\end{array}\]
\noindent which concludes the proof. 

The remaining cases are $\varphi = \psi_1 \W \psi_2$ and $\varphi = \psi_1 \R \psi_2$. Fix $\con \subseteq \mubasis $ and $w \in \langn{\con}{\mubasis }$. By induction hypothesis, (\ref{eq:wi}) holds. We derive:
\[\def\arraystretch{1.2}
\begin{array}{rlll}
w \models \psi_1 \W \psi_2  &\iff       & w \models \psi_1 \U (\psi_2 \vee \G (\eval{\psi_1}{\con})) & \text{(\Cref{prop:flatten:correct-local})} \\
&\iff       & w \models \; \flatten{\psi_1}{\con} \U (\flatten{\psi_2}{\con} \vee \G(\eval{\psi_1}{\con})) & \text{(\ref{eq:wi})} \\
&\iff   & w \models \; \flatten{(\psi_1 \W \psi_2)}{\con}  & \mbox{(definition of $\flatten{}{}$)}
\end{array}\]
The case $\varphi = \psi_1 \R \psi_2$ is analogous.

\smallskip\noindent (3) If $\con \subseteq \con'$ then $\flatten{\varphi}{\con} \models \flatten{\varphi}{\con'}$. Follows from a straightforward structural induction, using $\eval{\varphi}{\con} \models \eval{\varphi}{\con'}$, and the monotonicity of all operators (recall that formulas are in negation normal form).
\end{proof}

\subsection{Contextual normalization IV: The Normalization Theorem}
\label{subsec:fullnorm}

We insert Propositions \ref{lem:mainlemma}, \ref{lem:mainlemma-nu}, and  \ref{prop:maingeneral} into the Contextual Equivalence Lemma.
This yields a closed expression for a formula of $\Delta_2$ equivalent to a given formula. The normalized formula is exponentially larger than the original one.

\begin{theorem}[Normalization Theorem]
\label{thm:normthm}
Let $\varphi$ be a formula of length $n$. We have
$$ \varphi \equiv \bigvee_{\substack{M \subseteq \mubasis  \\ N \subseteq \nubasis }} \bigg( \flatten{\varphi}{M} \wedge \bigwedge_{\F\G \psi \in N} \F\G (\eval{\psi}{M}) \wedge \bigwedge_{\G\F \psi \in M} \G\F (\evalgf{\psi}{N})  \bigg)\ .$$
Moreover, the right-hand side formula is in $\Delta_2$ and has length $2^{O(n)}$.
\end{theorem}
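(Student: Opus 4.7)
The plan is to apply the Contextual Equivalence Lemma (\cref{lem:conteq-ext}) with the well-founded basis $(B, \prec)$ from \cref{subsec:overview}, taking $\flatn{\varphi}{C}{B} \coloneqq \flatten{\varphi}{C}$, $\flatnn{\F\G\psi}{C}{\strict{\psi}} \coloneqq \F\G(\eval{\psi}{C})$ for $\F\G\psi \in \nubasis$, and $\flatnn{\G\F\psi}{C}{\strict{\psi}} \coloneqq \G\F(\evalgf{\psi}{C})$ for $\G\F\psi \in \mubasis$. Conditions (i) and (ii) of the lemma are precisely \cref{prop:maingeneral}, while conditions (iii) and (iv) for the two shapes of basis formula are delivered by \cref{lem:mainlemma} and \cref{lem:mainlemma-nu}, respectively. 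The lemma therefore yields $\varphi \equiv \bigvee_{C \subseteq B} \big(\flatten{\varphi}{C} \wedge \bigwedge_{\psi \in C} \flatnn{\psi}{C}{\strict{\psi}}\big)$.

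To obtain the closed form in the theorem, I factor this disjunction by writing each $C \subseteq B$ uniquely as $C = M \cup N$ with $M = C \cap \mubasis$ and $N = C \cap \nubasis$ (the two sets are disjoint since their elements have different top-level shape). The crucial observation, already noted after \cref{def:evalnu}, is that each ingredient depends on $C$ only through one half: $\eval{\psi}{C}$ only inspects membership of $\G\F$-subformulas, so $\eval{\psi}{C} = \eval{\psi}{M}$; dually $\evalgf{\psi}{C} = \evalgf{\psi}{N}$; and by inspection of \cref{def:evalmunu} these give $\flatten{\varphi}{C} = \flatten{\varphi}{M}$. The inner conjunction over $\psi \in C$ then splits into independent conjunctions over $\F\G\psi \in N$ and $\G\F\psi \in M$, and the outer disjunction becomes a double disjunction over $M \subseteq \mubasis$ and $N \subseteq \nubasis$, which is exactly the displayed formula.

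For the remaining claims I would argue as follows. For $\Delta_2$-membership: \cref{prop:maingeneral} gives $\flatten{\varphi}{M} \in \Sigma_2$; since $\eval{\psi}{M} \in \Pi_1$, the closure of $\Pi_1$ under $\G$ yields $\G(\eval{\psi}{M}) \in \Pi_1 \subseteq \Sigma_2$, and then closure of $\Sigma_2$ under $\U$ gives $\F\G(\eval{\psi}{M}) \in \Sigma_2$; dually $\G\F(\evalgf{\psi}{N}) \in \Pi_2$. Since $\Delta_2$ contains $\Sigma_2 \cup \Pi_2$ and is closed under $\wedge$ and $\vee$, the whole right-hand side lies in $\Delta_2$. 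For the size bound: each basis formula is derived from a distinct subformula of $\varphi$, so $|B| = O(n)$ and there are $2^{O(n)}$ disjuncts; a straightforward structural induction on \cref{def:evalnu} and \cref{def:evalmunu} shows $|\eval{\psi}{M}| \le |\psi|$ and $|\flatten{\varphi}{M}| = O(n^2)$, so each disjunct has polynomial size and the total length is $2^{O(n)}$.

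The principal obstacle is the factorization step. One must verify simultaneously that $\flatten{\varphi}{C}$, $\eval{\psi}{C}$, and $\evalgf{\psi}{C}$ restrict canonically to the relevant half of $C$, so that the single sum over $C \subseteq B$ produced by \cref{lem:conteq-ext} truly factors as the two independent sums in the theorem statement. Once this is in place, the theorem is simply an assembly of \cref{prop:maingeneral,lem:mainlemma,lem:mainlemma-nu} under \cref{lem:conteq-ext}, together with the easy closure and counting arguments above.
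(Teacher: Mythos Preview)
Your proposal is correct and matches the paper's proof essentially line for line: both invoke \cref{lem:conteq-ext} with the basis $B = \mubasis \cup \nubasis$, verify hypotheses (i)--(iv) via \cref{prop:maingeneral}, \cref{lem:mainlemma}, and \cref{lem:mainlemma-nu}, factor $C = M \uplus N$ using the observation that $\flatten{\varphi}{C}$, $\eval{\psi}{C}$, $\evalgf{\psi}{C}$ depend only on the relevant half, and then bound the size by counting $2^{O(n)}$ disjuncts of size $O(n^2)$ each. The $\Delta_2$ and size arguments you sketch are also the same as the paper's.
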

\begin{proof}
By the Contextual Equivalence Lemma, we have
$$\varphi \equiv \bigvee_{\con \subseteq B} \bigg( \flatn{\varphi}{\con}{B} \wedge \bigwedge_{\psi \in \con} \flatnn{\psi}{\con}{\strict{\psi}} \bigg)$$
for any well-founded basis $(B, \prec)$, formulas $\flatn{\varphi}{\con}{B}$ satisfying conditions (i)-(ii), and formulas $\flatnn{\psi}{\con}{\strict{\psi}}$ satisfying 
conditions (iii)-(iv) of the lemma.

We choose $(B, \prec)$ as in Definition \ref{def:basis}. We have $B = \mubasis  \uplus \nubasis $. For every $\con \subseteq B$, we split $\con = M \uplus N$,
where $M := \con \cap \mubasis $ and $N := \con \cap \nubasis $. For every $\F\G \psi \in N$ (recall that $N$ only contains $\F\G$-formulas) and $\con \subseteq \strict{\psi}$, we define $\flatnn{(\F\G\psi)}{\con}{\strict{\psi}} := \F\G ( \eval{\psi}{M} )$,
and for every $\G\F \psi \in M$ and $\con \subseteq \strict{\psi}$, we define $\flatnn{(\G\F\psi)}{\con}{\strict{\psi}} := \G\F ( \evalgf{\psi}{N} )$, as in Definitions \ref{def:evalnu} and \ref{def:evalmu}. By Propositions \ref{lem:mainlemma} and \ref{lem:mainlemma-nu},  $\flatnn{\psi}{\con}{\strict{\psi}}$ satisfies conditions (iii)-(iv) of the Contextual Equivalence Lemma for every $\psi \in B$ and $\con \subseteq \strict{\psi}$. Finally, for every $\con \subseteq B$
we define $\flatn{\varphi}{\con}{B} := \flatten{\varphi}{\con}$, as in \cref{def:evalmunu}. By \cref{prop:maingeneral}, $\flatn{\varphi}{\con}{B}$ satisfies conditions (i)-(ii) of the  
Contextual Equivalence Lemma. Applying the lemma we obtain:

\begin{align*}
\varphi \equiv \bigvee_{\con \subseteq B} \bigg( \flatn{\varphi}{\con}{B} \wedge \bigwedge_{\psi \in \con} \flatnn{\psi}{\con}{\strict{\psi}} \bigg)\;
& \equiv \bigvee_{\substack{M \subseteq \mubasis  \\ N \subseteq \nubasis }}   \bigg( \flatn{\varphi}{\con}{B} \wedge \bigwedge_{\psi \in N} \flatnn{\psi}{\con}{\strict{\psi}} \wedge \bigwedge_{\psi \in M} \flatnn{\psi}{\con}{\strict{\psi}}\bigg) \\
& \equiv \bigvee_{\substack{M \subseteq \mubasis  \\ N \subseteq \nubasis }}   \bigg( \flatten{\varphi}{M} \wedge \bigwedge_{\F\G \psi \in N} \F\G (\eval{\psi}{M}) \wedge \bigwedge_{\G\F \psi \in M} \G\F (\evalgf{\psi}{N})\bigg) 
\end{align*}

By Propositions \ref{lem:mainlemma}, \ref{lem:mainlemma-nu}, and  \ref{prop:maingeneral}, the right-hand-side is a Boolean combination of formulas of $\Delta_2$, and so in $\Delta_2$ itself. For the size,
observe first that $B$ contains at most $n$ formulas, and so the right-hand-side is a disjunction of at most $2^{n}$ formulas. We bound the length of a disjunct. First we observe that $|\flatten{\varphi}{C}| \leq O(n^2)$, which follows easily from \Cref{def:evalmunu}. (For the cases defined homomorphically, the proof is a direct application of the induction hypothesis; for the case $\varphi = \psi_1 \W \psi_2$ we have  $|\flatten{\varphi}{C}| \leq |\flatten{\psi_1}{C}| + |\flatten{\psi_2}{C}| + |\eval{\psi_2}{C}| +2$, and the result follows from $|\eval{\psi_k}{C}| \leq n$; the case $\varphi = \psi_1 \R \psi_2$ is analogous.) From the definition of $\eval{\varphi}{\con}$ we immediately obtain $|\F\G\left(\eval{\varphi}{\con}\right)| \leq n$, and, since for every formula $\psi$ we have $|\psi| = |\overline{\psi}|$, the same holds for $\F\G\left(\eval{\varphi}{\con}\right)$. So each disjunct has length at most $O(n^2)+ n \cdot n \in O(n^2)$. So we obtain $2^n \cdot O(n^2) \in 2^{O(n)}$ for the final bound.
\end{proof}

\begin{example}
Consider the formula $\varphi = ((a \W b) \U c) \W d$ of \cref{ex:running1}. We have $\mubasis  = \{ \G\F c \}$ and $\nubasis  = \{ \F\G ((a \W b) \U c)), \F\G a \}$. 
Let us compute the disjunct of the right-hand-side of \cref{thm:normthm} for $M := \{\G\F c\}$ and $N: = \{\F\G ((a \W b) \U c)), \F\G a\}$. We have:
\begin{itemize}
\item $\flatten{\varphi}{M} = ((a \U (b \vee \G a)) \U  c) \U (d \vee  \G ((a \W b) \W c)) \in \Delta_2$, as shown in \cref{ex:norm2}.
\item $\F\G (\eval{((a \W b) \U c)}{M}) = \F\G ((a \W b) \W c) \in \Delta_2$, as shown in \cref{ex:evalnu}, and $\F\G (\eval{a}{M}) = \F\G a$.
\item $\G\F (\evalgf{c}{N}) = \G\F c$. 
\end{itemize}
So the disjunct is  $((a \U (b \vee \G a)) \U  c) \U (d \vee  \G ((a \W b) \W c)) \wedge \F\G ((a \W b) \W c)) \wedge \F\G \, a \wedge \G\F \, c$.
\end{example}

We conclude with the following lemma that bounds the number of formulas in the normal form of~\cref{thm:normthm}. It will be used later in \cref{subsec:det} to bound the size of some automata.

\begin{lemma}
\label{lem:subfbound}
For any formula $\varphi$, $\setmu \subseteq \mubasis$ and $\setnu \subseteq \nubasis$, let $\varphi_{\setmu,\setnu}$ be the corresponding clause of the disjunction in \cref{thm:normthm}. Then $|\subf(\varphi_{\setmu,\setnu}) | \in O(|\subf(\varphi)|)$.
\end{lemma}

\begin{proof}
This follows from the following claims:

\begin{enumerate}
	\item $|\bigcup \{\subf(\eval{\psi}{\setmu}) : \psi \in \subf(\varphi)\} | \leq |\subf(\varphi)|$ and $|\bigcup \{\subf(\evalgf{\psi}{\setnu}) : \psi \in \subf(\varphi)\} | \leq |\subf(\varphi)|$ \label{lem:size:sf:c1}
	\item $|\subf(\flatten{\varphi}{\setmu})| \leq 4 |\subf(\varphi)|$ \label{lem:size:sf:c2}
\end{enumerate}

To prove (\ref{lem:size:sf:c1}), observe that $\subf(\eval{\psi}{\setmu}) \subseteq \{ \eval{\psi'}{\setmu} \mid \psi' \in \subf(\psi) \}$. This can be proven by a straightforward induction, which is trivial for the cases where $\eval{\psi}{\setmu}$ is defined homomorphically. For $\psi = \psi_1 \U \psi_2$, according to \cref{def:evalnu}, either $\eval{\psi}{\setmu} = \false$ or $\eval{\psi}{\setmu} = \eval{\psi_1}{\setmu} \W \eval{\psi_2}{\setmu}$. Since $\eval{\psi}{\setmu}$ is in the right-hand set by definition with $\psi' = \psi$, it remains to consider the subformulas of $\eval{\psi_k}{\setmu}$ for $k=1,2$. However, by induction hypothesis, for any $\psi'' \in \subf(\eval{\psi_k}{\setmu})$ there is a $\psi' \in \subf(\psi_k) \subseteq \subf(\psi)$ such that $\psi'' = \eval{\psi'}{\setmu}$, and we are done. Using the just proven inclusion, we have
\[ {\textstyle \bigcup} \{\subf(\eval{\psi}{\setmu}) : \psi \in \subf(\varphi)\} \subseteq \{ \eval{\psi'}{\setmu} : \psi' \in \subf(\psi), \psi \in \subf(\varphi) \} = \{ \eval{\psi}{\setmu} : \psi \in \subf(\varphi) \} \]
and the cardinality of the latter set is clearly no more than $|\subf(\varphi)|$. The case of $\evalgf{\psi}{\setnu}$ follows by duality.

For (\ref{lem:size:sf:c2}), we prove the stronger claim $|\subf(\flatten{\psi}{\setmu}) \cup \subf(\eval{\psi}{\setmu})| \leq 4 |\subf(\psi)|$ by induction on $\psi$. The base cases and those in which both $\flatten{\psi}{\setmu}$ and $\eval{\psi}{\setmu}$ are defined homomophically are straightforward. Let us show only the cases of $\U$ and $\W$, since $\M$ and $\R$ are completely analogous. For $\psi = \psi_1 \U \psi_2$, we have $\flatten{\psi}{\setmu} = \flatten{\psi_1}{\setmu} \U \flatten{\psi_2}{\setmu}$ and either $\eval{\psi}{\setmu} = \false$ or $\eval{\psi}{\setmu} = \eval{\psi_1}{\setmu} \W \eval{\psi_2}{\setmu}$. In the first case, by induction hypothesis, $|\subf(\flatten{\psi_k}{\setmu})| \leq |\subf(\flatten{\psi_k}{\setmu}) \cup \subf(\eval{\psi_k}{\setmu})| \leq 4 |\subf(\psi_k)|$ for $k = 1,2$. The remaining subformulas in the left-hand side are only two, $\false$ and $\flatten{\psi}{\setmu}$ itself, so they are at most $4 |\subf(\psi)|$. In the second case,
\begin{equation}\label{eq:subf}
\subf(\flatten{\psi}{\setmu}) \cup \subf(\eval{\psi}{\setmu}) = (\subf(\flatten{\psi_1}{\setmu}) \cup \subf(\eval{\psi_1}{\setmu})) \cup (\subf(\flatten{\psi_2}{\setmu}) \cup \subf(\eval{\psi_2}{\setmu})) \cup \{\flatten{\psi}{\setmu}, \eval{\psi}{\setmu}\},
\end{equation}
so we have the bound $4 |\subf(\psi_1)| + 4 |\subf(\psi_2)| + 2 \leq 4 |\subf(\psi)|$ by induction hypothesis. For $\psi = \psi_1 \W \psi_2$, we have $\flatten{\psi}{\setmu} = \flatten{\psi_1}{\setmu} \U (\flatten{\psi_2}{\setmu} \vee \G (\eval{\psi_1}{\setmu}))$ and $\eval{\psi}{\setmu} = \eval{\psi_1}{\setmu} \W \eval{\psi_2}{\setmu}$. The calculation (\ref{eq:subf}) also holds with the last set replaced by $\{\flatten{\psi}{\setmu}, \eval{\psi}{\setmu}, \flatten{\psi_2}{\setmu} \vee \G (\eval{\psi_1}{\setmu}), \G (\eval{\psi_1}{\setmu}) \}$, so we obtain the bound $4 |\subf(\psi_1)| + 4 |\subf(\psi_2)| + 4 \leq 4 |\subf(\psi)|$.
\end{proof}

\subsection{Evaluation}
\label{subsec:eval}

Our proof of the Normalization Theorem gives a good explanation of \emph{why} the result holds. Roughly speaking, given a formula $\varphi$, the set of all words can be partitioned into contexts, such that within each context the formula is equivalent to a formula of $\Sigma_2$ derived from $\varphi$ by means of a very simple syntactic transformation. The procedure also provides a single exponential asymptotic upper bound for the length of the equivalent $\Delta_2$-formula.
However, the proof does not yield a normalization procedure efficient in practice. Indeed, a procedure based on
\cref{thm:normthm} has exponential \emph{best}-case complexity, since it requires to iterate over all subsets of $\mubasis $ and $\nubasis $. Also, the procedure works top-down, and so it is particularly bad for families of formulas where, loosely speaking, the double alternation occurs near the bottom of the syntax tree.

\begin{example} \label{example:exp}
Consider the family of formulas 
$$\varphi_n =  (\cdots ((((a_0 \U a_1) \W a_2) \U a_3) \U a_4) \cdots \U a_n)$$
\noindent for $n \geq 3$. The sets $\mubasis$ and $\nubasis$ have size $n-1$ and 1, respectively. The normalization procedure based on the Normalization Theorem yields a disjunction of $2^{n+1}$ formulas, and so it takes exponential time in $n$. However, reasoning as at the beginning of Section \ref{subsec:overview}, when we proved  $\F\G (a \U b) \equiv  (\G\F b  \wedge \F\G (a \W b))$, one can show that $\varphi_n$ is equivalent to a $\Delta_2$-formula of length $\Theta(n)$:
$$\begin{array}{rcl}
\varphi_n & \equiv & \bigg(\G\F a_1 \wedge   (\cdots (((a_0 \U a_1) \U ( a_2 \vee \G(a_0 \vee a_1))) \U a_3)  \cdots \U a_n\bigg) \; \vee \; \bigg(\cdots (((a_0 \U a_1) \U a_2) \U a_3)  \cdots \U a_n\bigg)
\end{array}$$
Intuitively, in order to normalize $\varphi_n$ it suffices to solve the ``local'' problem caused by  the subformula $((a_0 \U a_1) \W a_2) \U a_3$ of $\varphi_n$, which is in $\Sigma_3$; however, the procedure of \cref{sec:firstproof} is blind to this  fact, and generates $2^{n+1}$ formulas.
\end{example}

 \section{A Normalizing Rewrite System} \label{sec:main}

We present a system of rewrite rules that allow us to normalize every LTL formula. As a corollary, we obtain an alternative proof of the Normalization Theorem. The normalization algorithms derived from the rewriting rules are more efficient than the ones presented in the previous section. 

We first introduce a rewrite system for the fragment of the syntax without the operators $\R$ and $\M$. 
The reasons are purely expository. The restriction to this fragment makes the proofs shorter, and the extension to a rewrite system for the full syntax, presented in \cref{sec:extensions}, is routine. 

\begin{remark}
Recall also that the fragment without the operators $\R$ and $\M$ is as expressive as the full syntax. Indeed, exhaustively applying the well-known equivalences
$$  \varphi_1 \R \varphi_2 \equiv \varphi_1 \W (\varphi_1 \wedge \varphi_2) \quad \text{and} \quad  
\varphi_1 \M \varphi_2 \equiv \varphi_2 \U (\varphi_1 \wedge \varphi_2) $$
\noindent to a formula yields an equivalent formula of the fragment. However, in the worst case the formula of the fragment may be exponentially larger than the original one, and so eliminating the $\R$ and $\M$ operators and then applying the rewrite system for the fragment may be less efficient than a direct application of the rewrite system for the full syntax given in \cref{sec:extensions}.
\end{remark}

The key idea leading to the rewrite system is to treat the combinations $\G\F$ (infinitely often) and $\F\G$ (almost always) of temporal operators as \emph{atomic} operators $\GF$ and $\FG$ (notice the typesetting with the two letters touching each other). We call them the \emph{limit operators}; intuitively, whether a word satisfies a formula $\GF \varphi$ or $\FG \varphi$ depends only on its behavior ``in the limit'', in the sense that $w'w$ satisfies $\GF \varphi$ or $\FG \varphi$ if{}f $w$ does.\footnote{Limit operators are called \emph{suspendable} in \cite{babiak13}.}
So we add the limit operators to the fragment, yielding the following syntax:

\begin{definition}
Extended LTL formulas over a set $Ap$ of atomic propositions are generated by the syntax:
\begin{align*}
\varphi \Coloneqq \; & \true \mid \false \mid a \mid \neg a \mid \varphi \wedge \varphi \mid \varphi\vee\varphi
                       \mid \X\varphi \mid \varphi\U\varphi \mid \varphi\W\varphi \mid \GF \varphi \mid \FG \varphi
\end{align*}
\end{definition}

\noindent When determining the class of a formula in the syntactic future hierarchy, $\GF$ and $\FG$  are implicitly replaced by $\G\F$ and $\F\hspace{0.02cm}\G$. For example, $\F \GF a$ is rewritten into $\F\G\F a$, and so it is a formula of $\Sigma_3$.

\begin{quote}
\underline{Convention}: In the rest of the section we only consider extended formulas, which are by construction in negation normal form, and call them just formulas. 
\end{quote}

Let us now define our precise target normal form.  Formulas of the form $\varphi \U \psi$, $\varphi \W \psi$, $\X\varphi$, $\GF\varphi$, and $\FG\varphi$ are called  \U-, \W-, \X-, \GF-, and \FG-formulas, respectively.
We refer to these formulas as \emph{temporal} formulas. 
The syntax tree $T_\varphi$ of a formula $\varphi$ is defined in the usual way, and $\nnodes{\varphi}$ denotes the number of nodes of $T_\varphi$.
A node of $T_\varphi$ is a \emph{\U-node} if the subformula rooted at it is a \U-formula. \W-, \GF-, \FG- and temporal nodes are defined analogously. 

\newcommand\normalFormConds{\begin{enumerate}
\item No \U-node is under a \W-node.
\item No limit node is under another temporal node.
\item No \W-node is under a \GF-node, and no \U-node is under a \FG-node.
\end{enumerate}}

\begin{definition}
\label{def:normalform}
Let $\varphi$ be an LTL formula. 
A node of $T_\varphi$ is a \emph{limit node} if it is either a \GF-node or a \FG-node. The formula
$\varphi$ is in \emph{normal form} if $T_\varphi$ satisfies the following properties:
\normalFormConds
\end{definition}

\begin{restatable}{remark}{normalFormCharacterization}
Observe that formulas in normal form belong to $\Delta_2$. Even a slightly stronger statement holds: a formula in normal form is a positive Boolean combination of formulas of $\Sigma_2$ and formulas of the form $\GF \psi$ such that $\psi \in \Sigma_1$ (and so $\GF \psi \in \Pi_2$). 

\smallskip

There is a dual normal form in which property (1) is replaced by ``no \W-node is under a \U-node'', and the other two properties do not change. Formulas in dual normal form are positive Boolean combination of formulas of $\Pi_2$ and formulas of the form $\FG \psi$ such that $\psi \in \Pi_1$. Once the Normalization Theorem for the primal normal form is proved, a corresponding theorem for the dual form follows as an easy corollary (see Section \ref{sec:extensions}).
\end{restatable}

We incrementally normalize formulas by dealing with the three requirements of the normal form, one by one,
as follows:

\begin{enumerate}
	\item We remove all \U-nodes that are under some \W-node, but not under any  limit node. 
	\item We remove all limit nodes under some other temporal node. 
	\item We remove all \W-nodes under some \GF-node, and all \U-nodes under some \FG-node.
\end{enumerate}
The resulting formula is in normal form (\Cref{def:normalform}).

It is convenient to introduce intermediate normal forms corresponding to the results of applying the first and the first two steps. 
For this, let us introduce the following parameters of a formula $\varphi$:
\begin{itemize}
\item $\ubw{\varphi}$ denotes the number of \U-nodes in $T_\varphi$ that are under some \W-node, but not under any limit node of $T_\varphi$. For example,  if $\varphi = (a \U b) \W (\FG (c \U d))$ then $\ubw{\varphi}=1$.
\item $\gfba{\varphi}$ denotes  the number of distinct limit subformulas under some temporal operator. Formally,
$\gfba{\varphi}$ is the number of limit formulas $\psi'$ such that $\psi'$ is a proper subformula of a temporal subformula (proper or not) of $\varphi$. For example, if $\varphi = (\FG a \, \U \, \GF b) \vee (\GF b \, \W \, \FG a)$ then $\gfba{\varphi}=2$.
\end{itemize}
\noindent We define the normal forms as follows:

\begin{definition}
\label{def:1stnormalform}
An LTL formula $\varphi$ is in \emph{\fnf} if $\ubw{\varphi} = 0$, and in \emph{\snf} if $\ubw{\varphi} = 0$ and  
$\gfba{\varphi} = 0$.
\end{definition}

\noindent Observe that after step (1) we obtain a formula in \fnf, and after steps (1) and (2) a formula in \snf.

The rest of the section is structured as follows. Sections \ref{subsec:stage1}-\ref{subsec:stage3} present rewrite rules to conduct steps (1)-(3), respectively. Section \ref{sec:overview-rs} proves the Normalization Theorem, and gives an upper bound in the size of the final formula. Section \ref{sec:summary} summarizes the normalization procedure. Finally, Section \ref{sec:extensions} discusses some extensions and special cases.

\subsection{Stage 1: Removing \U-nodes under \W-nodes.}
\label{subsec:stage1}

We consider formulas $\varphi$ with placeholders, i.e., ``holes'' that can be filled with a formula. 
Formally, let $\hole$ be a symbol denoting a special atomic proposition. 
A formula with placeholders is a formula with one or more occurrences of $\hole$, all of them positive (i.e., the formula has  no occurrence of $\neg\hole$. We denote by $\varphi[\psi]$ the result of filling each placeholder of $\varphi$ with an occurrence of $\psi$; formally, $\varphi[\psi]$ is the result of substituting $\psi$ for $\hole$ 
in $\varphi$. For example, if $\varphi\hole = (\hole \W (a \U \hole))$, then $\varphi[\X b] = (\X b) \W (a \U \X b)$. We assume that $\hole$ binds more strongly than any operator, e.g.\ $\varphi_1 \W \varphi_2[\psi] = \varphi_1 \W (\varphi_2[\psi])$.
We write $\varphi \equiv^w \psi$ as a shorthand for $w_k \models \varphi$ if{}f $w_k \models \psi$ for all $k \in \N$, the following lemma states two straightforward properties of formulas with placeholders:

\begin{lemma} \label{lemma:placeholder}
	For every formula with placeholder $\varphi$ and every two formulas $\psi$ and $\psi'$,
\begin{enumerate}
	\item If $\varphi$ is in negative normal form (as every LTL formula considered in this article), $\psi \models \psi'$ implies $\varphi[\psi] \models \varphi[\psi']$. \label{lemma:nnf}
	\item $\psi \equiv^w \psi'$ implies $\varphi[\psi] \equiv^w \varphi[\psi']$. \label{lemma:replace}
\end{enumerate}
\end{lemma}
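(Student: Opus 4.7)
Both statements are proved by structural induction on the formula with placeholder $\varphi$, exploiting the fact, stated in the definition, that every occurrence of $\hole$ in $\varphi$ is positive (no occurrence $\neg\hole$ appears), and that LTL formulas under consideration are in negation normal form.

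For part (1), the plan is to exhibit $\varphi[\psi] \models \varphi[\psi']$ by induction on the structure of $\varphi$. The base cases are: $\varphi = \hole$, where $\varphi[\psi]=\psi \models \psi'=\varphi[\psi']$ by hypothesis; and $\varphi \in \{\true, \false, a, \overline{a}\}$, where $\varphi[\psi]=\varphi=\varphi[\psi']$ and the entailment is trivial. For the inductive step, each LTL operator is monotonic (as noted in the paragraph ``Monotonic and dual operators'' in \Cref{sec:prelims}), and the absence of negations in front of subformulas containing $\hole$ (guaranteed by positivity of $\hole$ and by negation normal form) ensures the induction hypothesis can be applied componentwise. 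For instance, if $\varphi = \varphi_1 \U \varphi_2$, then by induction $\varphi_i[\psi] \models \varphi_i[\psi']$ for $i=1,2$, and monotonicity of $\U$ yields $\varphi_1[\psi] \U \varphi_2[\psi] \models \varphi_1[\psi'] \U \varphi_2[\psi']$. The other operators ($\wedge$, $\vee$, $\X$, $\W$, $\GF$, $\FG$) are handled analogously.

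For part (2), the plan is again a structural induction on $\varphi$. The base cases are identical to those of part (1) (with $\psi \equiv^w \psi'$ used in place of $\psi \models \psi'$ at the placeholder). For the inductive step, the key observation is that for every LTL operator, the truth of $\varphi[\psi]$ at suffix $w_k$ is determined by the truth of its immediate subformulas at suffixes $w_j$ with $j \geq k$, i.e., again at suffixes of $w$. Hence, if by induction hypothesis each immediate subformula $\chi_i$ satisfies $\chi_i[\psi] \equiv^w \chi_i[\psi']$, then so does their combination under any LTL connective. For example, for $\varphi = \varphi_1 \W \varphi_2$, the semantics $w_k \models \varphi[\psi] \U \varphi[\psi']$ is expressed as a condition on the truth of $\varphi_1[\psi]$ and $\varphi_2[\psi]$ at suffixes $w_j$ for $j \geq k$, and the induction hypothesis lets us swap them for $\varphi_1[\psi']$ and $\varphi_2[\psi']$ pointwise, yielding the equivalence at $w_k$.

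The proof is essentially a routine structural induction, and I do not expect any real obstacle. The only mild subtlety is to double-check that the definition of $\equiv^w$ as agreement on \emph{all} suffixes of $w$ is preserved under temporal operators, which works precisely because the semantics of $\X$, $\U$, $\W$, $\GF$, $\FG$ at position $k$ refers only to positions $\geq k$ — all of which are themselves suffixes of $w$ — so the inductive hypothesis applies at every relevant index.
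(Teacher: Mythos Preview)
Your proposal is correct and is exactly the routine structural induction one expects; the paper in fact omits the proof entirely, introducing the lemma merely as ``two straightforward properties of formulas with placeholders.'' One small typo: in your $\W$ example you wrote ``the semantics $w_k \models \varphi[\psi] \U \varphi[\psi']$'' where you presumably meant $w_k \models \varphi_1[\psi] \W \varphi_2[\psi]$.
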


\smallskip
The following lemma allows us to pull \U-subformulas out of \W-formulas:

\begin{lemma}
\label{lem:UWandWU}
\begin{eqnarray}
\varphi_1 \W \varphi_2[\psi_1 \U \psi_2] & \equiv & (\varphi_1 \U \varphi_2[\psi_1 \U \psi_2]) \vee \G \varphi_1 \label{eqWU} \\
\varphi_1[\psi_1 \U \psi_2] \W \varphi_2 & \equiv & (\GF \psi_2 \wedge \varphi_1[\psi_1 \W \psi_2] \W \varphi_2) \vee \varphi_1[\psi_1 \U \psi_2] \U (\varphi_2 \vee (\G \varphi_1[\false])) \label{eqUW}
\end{eqnarray}
\end{lemma}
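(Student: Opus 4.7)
Equation (1) of the lemma is essentially just the standard LTL equivalence $\alpha \W \beta \equiv (\alpha \U \beta) \vee \G \alpha$: instantiating $\alpha \coloneqq \varphi_1$ and $\beta \coloneqq \varphi_2[\psi_1 \U \psi_2]$ gives (1) directly, because the placeholder substitution is the same on both sides and plays no role in the reasoning. I would dispose of it in one line.

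The real content is Equation (2), which I would prove by splitting, for a fixed word $w$, on whether $w \models \GF \psi_2$; the two disjuncts on the right-hand side correspond exactly to these two cases. In the case $w \models \GF \psi_2$, every suffix $w_k$ satisfies $\F \psi_2$, so $w_k \models \psi_1 \U \psi_2$ iff $w_k \models \psi_1 \W \psi_2$. Writing this as $\psi_1 \U \psi_2 \equiv^w \psi_1 \W \psi_2$ and applying \Cref{lemma:placeholder}(\ref{lemma:replace}) to the placeholder inside $\varphi_1$ yields $\varphi_1[\psi_1 \U \psi_2] \equiv^w \varphi_1[\psi_1 \W \psi_2]$. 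From this both directions of the first disjunct drop out immediately, since $\W$ depends only on the satisfaction of its operands along suffixes of $w$.

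In the case $w \not\models \GF \psi_2$, I would choose $N$ such that $w_j \not\models \psi_2$ for all $j \geq N$; then no such $w_j$ satisfies $\psi_1 \U \psi_2$, giving $\psi_1 \U \psi_2 \equiv^{w_N} \false$ and hence $\varphi_1[\psi_1 \U \psi_2] \equiv^{w_N} \varphi_1[\false]$ by \Cref{lemma:placeholder}(\ref{lemma:replace}). For the forward direction, I unfold $\varphi_1[\psi_1 \U \psi_2] \W \varphi_2$ semantically: either some $w_k$ satisfies $\varphi_2$ with all earlier positions satisfying $\varphi_1[\psi_1 \U \psi_2]$, in which case the $\U$ on the right-hand side is witnessed by $k$ via $\varphi_2$; or $\G \varphi_1[\psi_1 \U \psi_2]$ holds, in which case the $\U$ is witnessed by $N$, since positions $j < N$ satisfy $\varphi_1[\psi_1 \U \psi_2]$ by assumption and position $N$ satisfies $\G \varphi_1[\false]$ by the equivalence just derived.

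For the backward direction of (2), the first disjunct uses the same $\GF \psi_2$ equivalence as above. The second disjunct uses monotonicity (\Cref{lemma:placeholder}(\ref{lemma:nnf})): since $\false \models \psi_1 \U \psi_2$, we obtain $\varphi_1[\false] \models \varphi_1[\psi_1 \U \psi_2]$, hence $\G \varphi_1[\false] \models \G \varphi_1[\psi_1 \U \psi_2]$, and so if the $\U$ in the second disjunct is witnessed via $\varphi_2$ we get $\varphi_1[\psi_1 \U \psi_2] \U \varphi_2$, while if it is witnessed via $\G \varphi_1[\false]$ we get $\G \varphi_1[\psi_1 \U \psi_2]$; either way $\varphi_1[\psi_1 \U \psi_2] \W \varphi_2$ follows. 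The only delicate point in the whole argument is the bookkeeping for $\equiv^w$ versus $\equiv$ and the careful use of the two parts of \Cref{lemma:placeholder} inside the placeholder context of $\varphi_1$; conceptually, everything else is a direct unfolding of the semantics of $\U$, $\W$, and $\GF$.
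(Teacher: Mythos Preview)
Your proposal is correct and follows essentially the same approach as the paper: equation~(\ref{eqWU}) is dispatched by the standard expansion $\alpha \W \beta \equiv (\alpha \U \beta) \vee \G\alpha$, and equation~(\ref{eqUW}) by a case split on $\GF\psi_2$ combined with the two parts of \Cref{lemma:placeholder}. The only difference is packaging: the paper phrases the case split as an instance of the (Weak) Contextual Equivalence Lemma with basis $\{\GF\psi_2\}$ and then verifies its premises (i) and (ii), whereas you inline that lemma and argue the forward and backward directions directly; your explicit witness-at-$N$ argument for the $\G\varphi_1[\false]$ branch is in fact slightly more transparent than the paper's handling of the subcase $n>0$.
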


\begin{proof}

For Equation (\ref{eqWU}) observe that, by the definition of the semantics of LTL, $\varphi_1 \W \varphi_2 \equiv  \varphi_1 \U \varphi_2 \vee \G \varphi_2$ holds for arbitrary formulas $\varphi_1, \varphi_2$. For Equation (\ref{eqUW}) we proceed by invoking the Weak Contextual Equivalence Lemma (Lemma~\ref{lem:conteq}) with basis $B = \{\GF \psi_2\}$, main formula $\varphi = \varphi_1[\psi_1 \U \psi_2] \W \varphi_2$, and contextual formulas $\flatn{\varphi}{\con}{B} = \varphi_1[\psi_1 \W \psi_2] \W \varphi_2$ for $C = \{\GF \psi_2\}$ and $\flatn{\varphi}{\con}{B} = \varphi_1[\psi_1 \U \psi_2] \U (\varphi_2 \vee (\G \varphi_1[\false]))$ for  $C = \emptyset$. The conclusion of that lemma is exactly (\ref{eqUW}), but we have to prove its two premises:
{\renewcommand\theenumi{\roman{enumi}}

\smallskip\noindent \textbf{First premise}:  $\varphi$ and $\flatn{\varphi}{\con}{B}$ are equivalent under context $\context{\con}{B}$. 
	
For the case $C =  \{ \GF \psi_2 \}$, let $w \models \GF \psi_2$. We have to show $\varphi_1[\psi_1 \U \psi_2] \W \varphi_2  \equiv^w \varphi_1[\psi_1 \W \psi_2] \W \varphi_2$. By the semantics of LTL, we have $\psi_1 \U \psi_2 \equiv \psi_1 \W \psi_2 \wedge \F \psi_2$; further, $w \models \GF \psi_2$ implies  $\F \psi_2 \equiv^w \true$, and so we get $\psi_1 \U \psi_2 \equiv^w \psi_1 \W \psi_2$. Applying \cref{lemma:placeholder}(\ref{lemma:replace}) we obtain $\varphi_1[\psi_1 \U \psi_2] \W \varphi_2 \equiv^w \varphi_1[\psi_1 \W \psi_2] \W \varphi_2$, as desired.

\newcommand{\sledom}{\Relbar\joinrel\mathrel{|}}

For the case $C =  \emptyset$, let $w \not\models \GF \psi_2$. We have to show $\varphi_1[\psi_1 \U \psi_2] \W \varphi_2  \equiv^w \varphi_1[\psi_1 \U \psi_2] \U (\varphi_2 \vee (\G \varphi_1[\false]))$. For $\sledom^w$,  we prove the stronger $\sledom$:
\[\begin{array}{lr}
		\varphi_1[\psi_1 \U \psi_2] \U (\varphi_2 \vee (\G \varphi_1[\false])) \\
		\kern2em \models \varphi_1[\psi_1 \U \psi_2] \U (\varphi_2 \vee (\G \varphi_1[\psi_1 \U \psi_2]) &\text{(\cref{lemma:placeholder}(\ref{lemma:nnf}) and $\false \models \psi_1 \U \psi_2$)} \\
		\kern2em \equiv \varphi_1[\psi_1 \U \psi_2] \W \varphi_2 &\text{(semantics of LTL)}
	\end{array}\]

\noindent For $\models^w$, assume  $w \models \varphi_1[\psi_1 \U \psi_2] \W \varphi_2$. We have to show $w \models \varphi_1[\psi_1 \U \psi_2] \U (\varphi_2 \vee (\G \varphi_1[\false]))$. Consider two cases:

\begin{itemize}
	\item $w_k \models \varphi_2$ for some $k \geq 0$. Then, since $w \models \varphi_1[\psi_1 \U \psi_2] \W \varphi_2$ by hypothesis, we have $w \models \varphi_1[\psi_1 \U \psi_2] \U \varphi_2 $, and so $w \models  \varphi_1[\psi_1 \U \psi_2] \U (\varphi_2 \vee (\G \varphi_1[\false]))$.
	\item $w_k \not\models \varphi_2$ for every $k \geq 0$.  Since $w \not\models \GF \psi_2$, there exists a smallest index $n$ such that $w_k \not\models \psi_2$ for all $k \geq n$.   Further, since $w \models \varphi_1[\psi_1 \U \psi_2] \W \varphi_2$ by hypothesis,  we have in particular $w_k \models \varphi_1[\psi_1 \U \psi_2]$ for every $k < n$. So in order to prove $w \models \varphi_1[\psi_1 \U \psi_2] \U (\varphi_2 \vee (\G \varphi_1[\false]))$ it suffices to show $w_n \models \G \varphi_1[\false]$, or, equivalently, that $w_k \models \varphi_1[\false]$ holds for every $k \geq n$. For this, observe that $w_k \models \varphi_1[\psi_1 \U \psi_2]$ and $ \psi_1 \U \psi_2 \equiv^{w_k} \false $ hold for every $k \geq n$, the latter because $w_k \not\models \psi_2$ holds for every $k \geq n$. Apply now  \cref{lemma:placeholder}(\ref{lemma:replace}). 
\end{itemize}

\smallskip\noindent \textbf{Second premise}: The formula for context $\emptyset$ entails the one for context $ \{ \GF \psi_2 \}$.
	 \[\begin{array}{lr}
		\varphi_1[\psi_1 \U \psi_2] \U (\varphi_2 \vee (\G \varphi_1[\false])) \\
		\kern2em \models \varphi_1[\psi_1 \W \psi_2] \U (\varphi_2 \vee (\G \varphi_1[\false])) &\text{(\cref{lemma:placeholder}(\ref{lemma:nnf}) and $\psi_1 \U \psi_2 \models \psi_1 \W \psi_2$)} \\
		\kern2em \models \varphi_1[\psi_1 \W \psi_2] \U (\varphi_2 \vee (\G \varphi_1[\psi_1 \W \psi_2])) &\text{(\cref{lemma:placeholder}(\ref{lemma:nnf}) and $\false \models \psi_1 \W \psi_2$)} \\
		\kern2em \models \varphi_1[\psi_1 \W \psi_2] \W \varphi_2 &\text{(semantics of LTL)}
	\end{array}\]
	}
\end{proof}

\begin{proposition} \label{prop:23norm}
For every LTL formula $\varphi$ there exists an equivalent formula $\varphi'$ in \fnf{} such that $\nnodes{\varphi'} \leq 4^{2\nnodes{\varphi}} \cdot \nnodes{\varphi}$. Moreover, for every subformula $\GF \psi$ of $\varphi'$ the formula $\psi$ is a subformula of $\varphi$, and every \FG-subformula of $\varphi'$ is also a subformula of $\varphi$.
\end{proposition}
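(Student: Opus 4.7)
The plan is to normalize $\varphi$ into \fnf{} by iteratively applying Lemma~\ref{lem:UWandWU}, organised so that (a) each rewrite removes one $\U$-under-$\W$ occurrence without introducing fresh ones at that spot, and (b) each rewrite blows up the subformula by at most a small constant factor. Concretely, while the current formula contains a bad $\U$-subformula, I would select a pair $(\chi_1 \W \chi_2,\; \psi_1 \U \psi_2)$ where $\chi_1 \W \chi_2$ is a $\W$-ancestor of the bad $\U$-subformula $\psi_1 \U \psi_2$ and the path between them consists only of Boolean and $\X$ operators. Such a pair always exists, since encountering another $\U$-node on the path lets me re-target to that intermediate $\U$-node (also bad), and encountering another $\W$-node lets me shift the $\W$-ancestor downward to it. Writing the relevant side as $\chi_i = \alpha[\psi_1 \U \psi_2]$ with $\alpha$ built from Boolean and $\X$ only (in particular $\U$-free), I would apply (\ref{eqUW}) or (\ref{eqWU}) accordingly and substitute the right-hand side back into $\varphi$.

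The critical observation is that because $\alpha$ is $\U$-free, the newly introduced subformula $\G\alpha[\false]$ contains no $\U$ at all and hence contributes no bad $\U$s; meanwhile, in the disjunct whose top operator becomes $\U$ rather than $\W$, the duplicated copies of $\alpha[\psi_1 \U \psi_2]$ and $\chi_2$ sit under a $\U$ rather than a fresh $\W$-ancestor. Together with a scheduling discipline that always attacks an outermost $\W$-subformula containing bad $\U$-descendants (so that duplication in the second disjunct is not captured by any further outer $\W$), this ensures that a well-founded measure—for instance the lexicographic pair (number of outermost bad $\W$-subformulas,\ $\ubw(\varphi)$)—strictly decreases with every rewrite, bounding the number of rewrites by $\nnodes{\varphi}$.

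For the size bound, direct syntactic inspection of the RHSs of (\ref{eqUW}) and (\ref{eqWU}) gives a per-rewrite blow-up factor of at most $16$ (each rewrite copies $\varphi_1$, $\varphi_2$, and $\psi_1 \U \psi_2$ a bounded number of times), so iterating at most $\nnodes{\varphi}$ times yields $\nnodes{\varphi'} \leq 16^{\nnodes{\varphi}} \cdot \nnodes{\varphi} = 4^{2\nnodes{\varphi}} \cdot \nnodes{\varphi}$. The structural preservation claims are immediate from the RHSs: (\ref{eqUW}) introduces only $\GF \psi_2$ with $\psi_2 \in \subf(\varphi)$, and neither rule introduces any $\FG$-subformula. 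The main obstacle is the termination argument: both rules duplicate subformulas, so a raw count of bad $\U$-nodes can in fact grow temporarily after a single rewrite (e.g., when the duplicated $\chi_2$ still carries $\U$s that fall under an outer $\W$), and the delicate part is verifying that the chosen target-selection discipline plus lexicographic measure genuinely decrease at every step while retaining single-exponential size growth.
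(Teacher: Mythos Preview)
Your proposal contains a concrete error that breaks both the termination argument and the size bound. You assert that the context $\alpha$ can be taken ``built from Boolean and $\X$ only (in particular $\U$-free)'', but your justification only shows that the \emph{path} from the selected $\W$-node to the selected $\U$-node is Boolean/$\X$. The context $\alpha$ also carries every \emph{sibling} along that path, and those can be arbitrary. Take $\varphi = ((a\U b)\wedge(c\U d))\W e$: targeting $a\U b$, the path is just $\wedge$, yet $\alpha[\cdot]=[\cdot]\wedge(c\U d)$ is not $\U$-free. Applying (\ref{eqUW}) yields, in the second disjunct, the subformula $\G(\false\wedge(c\U d))$, which is a fresh $\W$-node with a $\U$ beneath it. Your lexicographic measure (number of outermost bad $\W$'s, $\ubw{\cdot}$) goes from $(1,2)$ to $(2,2)$: it \emph{increases}. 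You also never justify why the number of rewrites should be bounded by the \emph{original} $\nnodes{\varphi}$; since every rewrite duplicates subformulas, the current formula's size and its pool of bad $\W$'s both grow, and indeed you yourself flag the termination argument as an unresolved ``main obstacle''.

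The paper avoids all of this by arguing \emph{recursively} rather than by global iteration. It sets $\rank{\varphi}=\nnodes{\varphi}+\ubw{\varphi}$ and proves by induction on rank that $\varphi$ has an equivalent \fnf{} formula of size at most $4^{\rank{\varphi}}\nnodes{\varphi}$. In the case $\varphi=\varphi_1\W\varphi_2$ with a bad $\U$ inside $\varphi_1$, it selects a \emph{maximal} $\U$-subformula $\psi_1\U\psi_2$ (not an adjacent one), applies (\ref{eqUW}) once, and recursively normalizes the three pieces $\rho_1=\varphi_1[\psi_1\W\psi_2]\W\varphi_2$, $\rho_2=\varphi_1[\psi_1\U\psi_2]$, and $\rho_3=\varphi_2\vee\G\varphi_1[\false]$. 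Each piece has strictly smaller rank (either fewer nodes, or the same nodes with at least one fewer bad $\U$ because the maximal $\psi_1\U\psi_2$ was removed), so the induction goes through and the factor-$4$ blow-up per rank unit gives the bound directly. No global iteration count is needed, and the possible presence of other $\U$'s inside $\varphi_1[\false]$ is harmless because $\rho_3$ is handled by a fresh recursive call, not by the same measure on the enlarged whole formula.
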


\begin{proof}

We associate to each formula a rank, defined by $\rank{\varphi} = \nnodes{\varphi} + \ubw{\varphi}$. Observe that a formula $\varphi$ is in \fnf{} if{}f $\rank{\varphi} = \nnodes{\varphi}$.
Throughout the proof we say that a formula $\varphi'$ \emph{satisfies the limit property} if for every subformula $\GF \psi$ of $\varphi'$ the formula $\psi$ is a subformula of $\varphi$ and every \FG-subformula of $\varphi'$ is also a subformula of $\varphi$ (notice the asymmetry). Further, we say that a formula $\varphi'$ \emph{satisfies the size property} if $\nnodes{\varphi'} \leq 4^{\mathit{rank}(\varphi)} \cdot \nnodes{\varphi}$ from which the claimed size bound immediately follows.

We prove by induction on $\rank{\varphi}$ that $\varphi$ is equivalent to a formula $\varphi'$ in \fnf{} satisfying the limit and size properties. Within the inductive step we proceed by a case distinction of $\varphi$:

\medskip\noindent If $\varphi = \true, \false, \GF \psi, \FG \psi$ then $\varphi$ is already in \fnf{}, and satisfies the limit and size properties.

\medskip\noindent If $\varphi= \varphi_1 \wedge \varphi_2, \varphi_1 \vee \varphi_2, \varphi_1 \U \varphi_2$ then by induction hypothesis $\varphi_1$ and $\varphi_2$  can be normalized into formulas $\varphi_1'$ and $\varphi_2'$ satisfying the limit and size properties. The formulas $\varphi_1'  \wedge \varphi_2'$, $\varphi_1' \vee \varphi_2'$,  $\varphi_1'  \U \varphi_2'$ are then in \fnf{} (the latter because the additional \U-node is above any \W-node) and satisfy the limit property. The size property holds because:

\[\begin{array}{rlr}
\nnodes{\varphi_1'} + \nnodes{\varphi_2'} + 1
	& \leq 4^{\rank{\varphi_1}} \cdot \nnodes{\varphi_1} + 4^{\rank{\varphi_2}} \cdot \nnodes{\varphi_2} + 1
	& (\text{induction hypotheses}) \\
	& \leq 4^{\rank{\varphi_1}+\rank{\varphi_2}} \cdot (\nnodes{\varphi_1} + \nnodes{\varphi_2} + 1) \\
	& \leq 4^{\rank{\varphi}} \cdot \nnodes{\varphi}
	& (\nnodes{\varphi_i} \leq \nnodes{\varphi} \text{ and } \sum_{i=1}^2 \rank{\varphi_i} < \rank{\varphi})
\end{array}\]

\medskip\noindent If $\varphi  = \X \varphi_1$, then by induction hypothesis there is a formula $\varphi_1'$ equivalent to $\varphi_1$ in \fnf{}, and so $\varphi$ is equivalent to $\X \varphi_1'$, which is in \fnf{} and satisfies the limit and size properties.

\medskip\noindent If $\varphi = \varphi_1 \W \varphi_2$ and $\ubw{\varphi} = 0$, then $\varphi_1 \W \varphi_2$ is already in \fnf{} and satisfies the limit and size properties.

\medskip\noindent If $\varphi = \varphi_1 \W \varphi_2$ and $\ubw{\varphi} > 0$, then we proceed by a case distinction:

\begin{itemize}
\item $\varphi_2$ contains at least one \U-node that is not under a limit node. Let $\psi_1 \U \psi_2$ be such a \U-node. We derive $\varphi_2\hole$ from $\varphi_2$ by replacing each \U-node labeled by $\psi_1 \U \psi_2$ by the special atomic proposition $\hole$. By \cref{lem:UWandWU}(\ref{eqWU}) we have:
\[ \varphi_1 \W \varphi_2[\psi_1 \U \psi_2] \equiv \varphi_1 \U \varphi_2[\psi_1 \U \psi_2] \vee \varphi_1 \W \false\]
Since $\rank{\varphi_1} < \rank{\varphi}$, $\rank{\varphi_2} <  \rank{\varphi}$, and $\rank{\varphi_1 \W \false} < \rank{\varphi}$ (the latter because $\varphi_2$ contains at least one \U-node), by induction hypothesis $\varphi_1$,  $\varphi_2$, and $\varphi_1 \W \false$ can be normalized into formulas $\varphi_1'$, $\varphi_2'$, and $\varphi_3'$ satisfying the limit and size properties. So $\varphi$ can be normalized into $\varphi' = \varphi_1' \U \varphi_2' \vee \varphi_3'$. Moreover, $\varphi'$ satisfies the limit property, because all \GF- and \FG-subformulas of $\varphi'$ are subformulas of $\varphi_1'$, $\varphi_2'$, or $\varphi_3'$. For the size property we calculate:

\[\begin{array}{rlr}
\nnodes{\varphi'} & = \nnodes{\varphi_1'} + \nnodes{\varphi_2'} + \nnodes{\varphi_3'} + 2 \\ 
                  & \leq 4^{\rank{\varphi_1}} \cdot \nnodes{\varphi_1} + 4^{\rank{\varphi_2}} \cdot \nnodes{\varphi_2} + 4^{\rank{\varphi_1\W \false}} \cdot \nnodes{\varphi_1\W\false} + 2
                  & (\text{induction hypotheses}) \\
                  & \leq 4^{\rank{\varphi} - 1} \cdot (\nnodes{\varphi_1} + \nnodes{\varphi_2} + \nnodes{\varphi_1\W\false} + 2) \\
                  & \leq 4^{\rank{\varphi} - 1} \cdot 4 \cdot \nnodes{\varphi} = 4^{\rank{\varphi}} \cdot \nnodes{\varphi}
\end{array}\]

\item Every \U-node of $\varphi_2$ is under a limit node, and $\varphi_1$ contains at least one \U-node that is not under any  limit node. Then $\varphi_1$ contains a maximal subformula $\psi_1 \U \psi_2$ (with respect to the subformula order) that is not under a limit node. We derive $\varphi_1\hole$ from $\varphi_1$ by replacing each \U-node labeled by $\psi_1 \U \psi_2$ that does not appear under a limit node by the special atomic proposition $\hole$. By \cref{lem:UWandWU}(\ref{eqUW}), we have
\begin{equation*}
\varphi_1[\psi_1 \U \psi_2] \W \varphi_2 \equiv \big(\GF \psi_2 \wedge \underbrace{\varphi_1[\psi_1 \W \psi_2] \W \varphi_2}_{\rho_1}\big) \vee \big( \underbrace{\varphi_1[\psi_1 \U \psi_2]}_{\rho_2} \U (\underbrace{\varphi_2 \vee (\varphi_1[\false] \W \false}_{\rho_3}) \big)
\end{equation*}
In order to apply the induction hypothesis we argue that $\rho_1$, $\rho_2$, and $\rho_3$ have rank smaller than $\varphi$, and thus can be normalized to $\rho_1'$, $\rho_2'$ and $\rho_3'$ satisfying the limit and size properties. The formula $\rho_1$ has the same number of nodes as $\varphi$, but fewer $\U$-nodes under $\W$-nodes; so $\ubw{\rho_1} < \ubw{\varphi}$ and thus $\rank{\rho_1} < \rank{\varphi}$. The same argument applies to $\rho_3$.
Finally, $\rank{\rho_2} < \rank{\varphi}$ follows from the fact that $\rho_2$ has fewer nodes than $\varphi$. So $\varphi$ can be normalized to $\varphi'=(\GF\psi_2 \wedge \rho_1') \vee (\rho_2' \U \rho_3')$.

We show that $\varphi'$ satisfies the limit property. Let $\GF \psi$ be a subformula of $\varphi'$. If  $\GF \psi = \GF \psi_2$, then we are done, because $\psi_2$ is a subformula of $\varphi$. Otherwise $\GF\psi$ is a subformula of $\rho_1'$, $\rho_2'$, or $\rho_3'$. Since all of them satisfy the limit property, $\psi$ is a subformula of $\varphi$, and we are done. Further, every $\FG$-subformula of $\varphi'$ belongs to $\rho_1'$, $\rho_2'$, or $\rho_3'$ and so it is also subformula of $\varphi$. For the size property we calculate:
\begin{align*}\nnodes{\varphi'} & = \nnodes{\rho_1'} + \nnodes{\rho_2'} + \nnodes{\rho_3'} + \nnodes{\psi_2} + 4 \\ 
                  & \leq 4^{\rank{\rho_1}} \cdot \nnodes{\rho_1} + 4^{\rank{\rho_2}} \cdot \nnodes{\rho_2} + 4^{\rank{\rho_3}} \cdot \nnodes{\rho_3} + \nnodes{\varphi_1} + 4
                  & (\text{induction hypotheses}) \\
                  & \leq 4^{\rank{\varphi} - 1} \cdot (\nnodes{\varphi} + \nnodes{\varphi} + \nnodes{\varphi})  + \nnodes{\varphi} + 4
                  & (\nnodes{\rho_i}, \nnodes{\varphi_1} \leq \nnodes{\varphi}
 \text{ and } \rank{\rho_i} < \rank{\varphi}) \\
                  & \leq 4^{\rank{\varphi} - 1} \cdot 4 \cdot \nnodes{\varphi} = 4^{\rank{\varphi}} \cdot \nnodes{\varphi}
\end{align*}
\end{itemize}
\end{proof}

\subsection{Stage 2: Moving $\GF$- and $\FG$-subformulas up.}
\label{subsec:stage2}

In this section, we address the second property of the normal form. The following lemma allows us to pull limit subformulas out of any temporal formula. (Note that the second rule is only necessary if the formula before stage 1 contained \FG-subformulas, since stage 1 only creates new \GF-formulas.)

\begin{lemma}
\label{lemma:GF1}
\begin{eqnarray} \varphi[\GF \psi] & \equiv & (\GF \psi \wedge \varphi[\true] ) \vee \varphi[\false]  \label{eqGF1} 
\\ \varphi[\FG \psi] & \equiv & (\FG \psi \wedge \varphi[\true] ) \vee \varphi[\false]  \label{eqFG1}
\end{eqnarray} \end{lemma}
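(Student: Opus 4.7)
The key observation is that the limit operators are \emph{stable under suffixes}: by the semantics of LTL, $w \models \GF \psi$ iff $w_i \models \GF \psi$ for every $i \geq 0$, and the same holds for $\FG \psi$ (this is precisely the ``limit'' property the paper already emphasizes). Hence $\GF \psi$ takes the same truth value on $w$ and on every suffix of $w$, so with respect to $\equiv^w$ it behaves uniformly either like $\true$ or like $\false$, depending only on whether $w$ satisfies it. I will fix an arbitrary word $w$ and prove both equivalences by a single case split on this truth value.

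For (\ref{eqGF1}), in the case $w \models \GF \psi$ stability yields $\GF \psi \equiv^w \true$, so \cref{lemma:placeholder}(\ref{lemma:replace}) gives $\varphi[\GF \psi] \equiv^w \varphi[\true]$; thus $w \models \varphi[\GF \psi]$ iff $w \models \GF \psi \wedge \varphi[\true]$, which is exactly the first disjunct on the right-hand side. The $\Leftarrow$ direction must also absorb the disjunct $\varphi[\false]$, which I handle by invoking \cref{lemma:placeholder}(\ref{lemma:nnf}) with $\false \models \GF \psi$ to obtain $\varphi[\false] \models \varphi[\GF \psi]$. In the case $w \not\models \GF \psi$, stability yields $\GF \psi \equiv^w \false$, so \cref{lemma:placeholder}(\ref{lemma:replace}) gives $\varphi[\GF \psi] \equiv^w \varphi[\false]$; meanwhile the first disjunct on the right-hand side collapses to $\false$ at $w$, leaving only $\varphi[\false]$, and the equivalence follows.

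Equation (\ref{eqFG1}) is obtained by exactly the same argument with $\FG$ in place of $\GF$, using that $\FG$ is also stable under suffixes and that $\false \models \FG \psi$. There is no serious obstacle; the only point that requires care is the monotonicity step invoking \cref{lemma:placeholder}(\ref{lemma:nnf}), which implicitly relies on the fact that placeholders occur only positively in $\varphi$ together with $\varphi$ being in negation normal form, so that the NNF premise of that lemma applies.
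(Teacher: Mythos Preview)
Your proof is correct and uses the same core ingredients as the paper: stability of limit operators under suffixes, \cref{lemma:placeholder}(\ref{lemma:replace}) to replace $\GF\psi$ (resp.\ $\FG\psi$) uniformly by $\true$ or $\false$, and \cref{lemma:placeholder}(\ref{lemma:nnf}) for the monotonicity step $\varphi[\false]\models\varphi[\GF\psi]$. The only packaging difference is that the paper routes the case split through the (Weak) Contextual Equivalence Lemma with basis $\{\GF\psi\}$ (resp.\ $\{\FG\psi\}$), whereas you carry out the two cases directly; verifying premises (i) and (ii) of that lemma is precisely the computation you perform, so the arguments coincide in substance.
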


\begin{proof}
We prove that $\varphi[\psi] \equiv (\psi \wedge \varphi[\true]) \vee \varphi[\false]$ for every $\psi$ such that $\G \psi \equiv \psi$, i.e., $w \models \psi$ if{}f $w_k \models \psi$ for all $k \in \N$.
This is a generalization of (\ref{eqGF1}) and (\ref{eqFG1}), since both $\G \GF \psi' \equiv \GF \psi'$ and $\G \FG \psi' \equiv \FG \psi'$ for any $\psi'$. The new statement follows from the Weak Contextual Equivalence Lemma (Lemma~\ref{lem:conteq}) with basis $\{\psi\}$ and formulas $\flat{\varphi[\psi]}{\{\psi\}} = \varphi[\true]$ and $\flat{\varphi[\psi]}{\emptyset} = \varphi[\false]$. It remains to prove that the premises of the lemma hold:
\begin{enumerate}
\item[(i)] Since $\psi$ is satisfied by either all or no suffix of a word $w$, then either $\psi \equiv^w \true$ or $\psi \equiv^w \false$. Hence,
\[ w \models \psi \stackrel{\G \psi \strut\equiv \psi}\iff \psi \equiv^w \true \stackrel{\strut\text{\cref{lemma:placeholder}(\ref{lemma:replace})}}\Rightarrow \varphi[\psi] \equiv^w \varphi[\true] \;\Rightarrow\; (w \models \varphi[\psi] \iff w \models \varphi[\true]) \]
and identical arguments can be used to prove $w \models \varphi[\psi] \iff w \models \varphi[\false]$ from $w \not\models \psi$.
\item[(ii)] Since $\false \models \true$ and by \cref{lemma:placeholder}(\ref{lemma:nnf}), $\varphi[\false] \models \varphi[\true]$.
\end{enumerate}
\end{proof}

We show using (\ref{eqGF1}) and (\ref{eqFG1}) that every formula in \fnf{} can be transformed into an equivalent formula in \snf.

\begin{proposition} \label{prop:snf}
Every LTL formula $\varphi$ in \fnf{} is equivalent to a formula $\varphi'$ in \snf{} such that $\nnodes{\varphi'} \leq 3^{\gfba{\varphi}} \cdot \nnodes{\varphi}$. Moreover, the size of the limit subformulas does not increase: for every $b > 0$, if $\nnodes{\psi} \leq b$ for every limit subformula $\psi$ of $\varphi$, then $\nnodes{\psi'} \leq b$ for every limit subformula $\psi'$ if $\varphi'$.
\end{proposition}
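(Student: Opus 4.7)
My plan is to prove the statement by induction on $\gfba{\varphi}$, applying \cref{lemma:GF1} once in each inductive step to pull a single limit subformula out from under the temporal operators enclosing it. The base case $\gfba{\varphi}=0$ is immediate: $\varphi$ is already in \snf{} and both bounds trivially hold. For the inductive step with $\gfba{\varphi}\ge 1$, I would pick a limit subformula $\alpha$ of $\varphi$ that lies under a temporal operator and is \emph{maximal} with this property in the subformula order, form $\rho$ by replacing every occurrence of $\alpha$ in $\varphi$ by the placeholder $\hole$, and apply the appropriate case of \cref{lemma:GF1} to get
\[
  \varphi \;\equiv\; \varphi'' \;\coloneqq\; (\alpha \wedge \rho[\true]) \vee \rho[\false].
\]

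Three properties of $\varphi''$ must then be checked before invoking the induction hypothesis. First, $\varphi''$ is in \fnf: substituting $\hole$ by $\true$ or $\false$ cannot create any $\U$- or $\W$-node, the new top-level $\wedge$ and $\vee$ are propositional, and any $\U$-node inside $\alpha$ sits below $\alpha$'s top limit operator and hence does not contribute to $\ubw{\varphi''}$. Second, $\gfba{\varphi''} \le \gfba{\varphi}-1$: the unique occurrence of $\alpha$ in $\varphi''$ is now at the top under a propositional $\wedge$, so $\alpha$ is no longer under any temporal operator; every other limit subformula of $\varphi''$ is either a limit subformula of $\varphi$ avoiding $\alpha$ (already counted in $\gfba{\varphi}$), or of the form $\beta[\true]$ (resp.\ $\beta[\false]$) for some subformula $\beta$ of $\rho$ that properly contains $\hole$---but then $\beta[\alpha]$ is a limit subformula of $\varphi$ strictly containing $\alpha$, and by maximality of $\alpha$ it is not under any temporal operator in $\varphi$; so neither is $\beta[\true]$ in $\rho[\true]$, and $\beta[\true]$ does not contribute to $\gfba{\varphi''}$. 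Third, $\nnodes{\varphi''} \le 3\nnodes{\varphi}$: letting $k\ge 1$ be the number of occurrences of $\alpha$ in $\varphi$, we have $\nnodes{\rho}=\nnodes{\varphi}-k(\nnodes{\alpha}-1)\le\nnodes{\varphi}-\nnodes{\alpha}+1$, hence
\[
  \nnodes{\varphi''} = \nnodes{\alpha}+2\nnodes{\rho}+2 \;\le\; 2\nnodes{\varphi}-\nnodes{\alpha}+4 \;\le\; 3\nnodes{\varphi},
\]
using $\nnodes{\alpha}\ge 2$ and $\nnodes{\varphi}\ge 3$ (both of which follow from $\gfba{\varphi}\ge 1$). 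The induction hypothesis applied to $\varphi''$ then delivers a formula $\varphi'\in\snf$ equivalent to $\varphi$ with $\nnodes{\varphi'}\le 3^{\gfba{\varphi''}}\nnodes{\varphi''}\le 3^{\gfba{\varphi}-1}\cdot 3\nnodes{\varphi}=3^{\gfba{\varphi}}\nnodes{\varphi}$. The bound on limit-subformula size carries through: every limit subformula of $\varphi''$ is either a limit subformula of $\varphi$ (hence of size $\le b$), or a $\true/\false$-substitution of one (hence strictly shorter, since $\nnodes{\alpha}\ge 2>1=\nnodes{\true}=\nnodes{\false}$), so the same bound $b$ is inherited by $\varphi'$.

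The principal obstacle is the second item above, the strict decrease of $\gfba$. Choosing $\alpha$ arbitrarily is not enough: if some strictly larger limit subformula $\beta[\alpha]$ of $\varphi$ were itself under a temporal operator, then $\beta[\true]$ would appear inside $\rho[\true]$ as a new limit subformula under that same temporal operator, possibly preserving $\gfba{\varphi''}=\gfba{\varphi}$ and breaking the induction. The maximality of $\alpha$ is exactly the condition that rules this out, and it is what makes the whole induction---and in particular the size bookkeeping---go through.
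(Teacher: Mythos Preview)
Your proof is correct, but it takes the \emph{dual} route to the paper's. The paper selects a \emph{minimal} limit subformula $\psi$ (so that $\psi$ itself contains no nested limit operator and is already in \snf), then applies the induction hypothesis separately to $\varphi[\true]$ and $\varphi[\false]$, assembling the result as $(\psi \wedge \varphi_1') \vee \varphi_2'$. You instead select a \emph{maximal} $\alpha$ among the limit subformulas under a temporal operator, and apply the induction hypothesis to the entire rewritten formula $\varphi''=(\alpha\wedge\rho[\true])\vee\rho[\false]$. Each choice is essential to its own argument: minimality is what lets the paper leave $\psi$ untouched, while maximality is what guarantees, in your approach, that the new formulas $\beta[\true]$ and $\beta[\false]$ land outside every temporal operator and hence that $\gfba{\varphi''}$ strictly drops. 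Your final paragraph identifies this correctly; an arbitrary choice would indeed stall your induction (e.g.\ picking $\FG a$ in $\X(\GF(\FG a))$ yields $\gfba{\varphi''}=2=\gfba{\varphi}$), whereas the paper's split into two separate recursive calls sidesteps that issue at the cost of needing minimality for a different reason. Both approaches produce the same $3^{\gfba{\varphi}}\nnodes{\varphi}$ bound and preserve the limit-subformula size bound, so neither is strictly simpler---they are symmetric variants of the same idea.
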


\begin{proof}
We proceed by induction on the number of proper limit sub\-formulas of $\varphi$. If $\varphi$ does not contain any, then it is already in \snf. Assume there exists such a proper limit subformula $\psi$ that is smaller (or incomparable) to all other limit subformulas of $\varphi$ according to the subformula order. We derive $\varphi\hole$ from $\varphi$ by replacing each limit-node labeled by $\psi$ by the special atomic proposition $\hole$. We then apply \cref{lemma:GF1} to obtain:
\[ \varphi[\psi] \equiv (\psi \wedge \varphi[\true] ) \vee \varphi[\false] \qquad \mbox{ where } \psi = \GF \psi', \FG \psi'  \ .\]
Note that $\psi$ does not properly contain any  limit subformula, and so it is in \snf{}. Both $\varphi[\true]$ and $\varphi[\false]$ are still in \fnf{} and they have one limit operator less than $\varphi$. Thus they can be normalized by the induction hypothesis into $\varphi_1'$ and $\varphi_2'$ in \snf. Finally, $\varphi' = (\psi \wedge \varphi_1') \vee \varphi_2'$ is a Boolean combination of formulas in \snf, so it is in \snf. The number of nodes of $T_{\varphi'}$ can be crudely bounded as follows:
\begin{equation*}
\begin{array}{rll}
	\nnodes{\varphi'} & \leq \nnodes{\varphi_1'} + \nnodes{\varphi_2'} + \nnodes{\psi} + 2 \\[1ex]
	                  & \leq 3^{\gfba{\varphi[\true]}} \cdot \nnodes{\varphi[\true]} + 3^{\gfba{\varphi[\false]}} \cdot \nnodes{\varphi[\false]} + \nnodes{\psi} + 2
	                  & (\text{induction hypotheses})  \\[1ex]
	                  & \leq 3^{\gfba{\varphi[\true]}} \cdot 2 \cdot \nnodes{\varphi[\true]} + \nnodes{\psi} + 2
	                  & (\nnodes{\varphi[\true]} = \nnodes{\varphi[\false]}\text{, idem for } n_{\mathrm{lim}})  \\[1ex]
	                  & \leq 3^{\gfba{\varphi[\true]}} \cdot \big(2 \cdot (\nnodes{\varphi} - \nnodes{\psi} + 1) + \nnodes{\psi} + 2\big)
	                  & (\gfba{\varphi[\true]} \leq \gfba{\varphi} - 1) \\[1ex]
	                  & \leq 3^{\gfba{\varphi} - 1} \cdot (2	 \nnodes{\varphi} - \nnodes{\psi} + 4) \\[1ex]
	                  & \leq 3^{\gfba{\varphi} - 1} \cdot (3 \nnodes{\varphi}) = 3^{\gfba{\varphi}} \cdot \nnodes{\varphi}
	                  & (\nnodes{\psi}, \nnodes{\varphi} \geq 2)
\end{array}
\end{equation*}

To show that the size of the limit subformulas does not increase, let $b$ be a bound on the size of the \GF-subformulas of $\varphi$. We claim that the size of each \GF-subformula of $\varphi'$ is also bounded by $b$ (the case of $\FG \psi$ is analogous). Indeed, the \GF-subformulas of $\varphi'$ are $\psi$ (which is already in $\varphi$) and the \GF-subformulas of $\varphi_1'$ and $\varphi_2'$. Since the \GF-subformulas of $\varphi[\true]$ and $\varphi[\false]$ can only have decreased in size, by induction hypothesis the number of nodes of any \GF-subformula of $\varphi_1'$ and $\varphi_2'$ is bounded by $b$, and we are done.
\end{proof}

\subsection{Stage 3: Removing $\W$-nodes ($\U$-nodes)  under $\GF$-nodes ($\FG$-nodes)}
\label{subsec:stage3}

The normalization of LTL formulas is completed in this section by fixing the problems within limit subformulas. In order to do so, we introduce two new rewrite rules that allow us to pull \W-subformulas out of \GF-formulas, and \U-subformulas out of \FG-formulas.

\begin{lemma}
\label{lemma:GF2}
\begin{eqnarray} 
\GF \varphi [\psi_1 \W \psi_2]  & \equiv & \GF \varphi [\psi_1 \U \psi_2] \vee  (\FG \psi_1 \wedge \GF \varphi [\true])  \;\;\; \label{eqGF2} \\
\FG \varphi[\psi_1 \U \psi_2] & \equiv & (\GF \psi_2 \wedge \FG \varphi[\psi_1 \W \psi_2]) \vee \FG \varphi[\false]  \;\;\; \label{eqFG2}
\end{eqnarray}
\end{lemma}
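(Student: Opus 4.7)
The plan is to invoke the Contextual Equivalence Lemma once for each of the two equivalences, in the style already used for \cref{lem:UWandWU} and \cref{lemma:GF1}. For (\ref{eqGF2}) I will apply the lemma to $\GF\varphi[\psi_1\W\psi_2]$ with singleton basis $B \coloneqq \{\FG\psi_1\}$, choosing $\GF\varphi[\true]$ for the context $\{\FG\psi_1\}$ and $\GF\varphi[\psi_1\U\psi_2]$ for the context $\emptyset$. The disjunction produced by the lemma is then literally the right-hand side of (\ref{eqGF2}), so only the two premises of the lemma need to be checked.

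Premise (ii) follows from $\psi_1\U\psi_2 \models \true$, \cref{lemma:placeholder}(\ref{lemma:nnf}), and monotonicity of $\GF$. For premise (i), if $w \models \FG\psi_1$ I pick $N$ with $w_{N+k} \models \psi_1$ for every $k \geq 0$; then each such suffix satisfies $\G\psi_1$, hence $\psi_1\W\psi_2$, so $\psi_1\W\psi_2 \equiv^{w_N} \true$. \cref{lemma:placeholder}(\ref{lemma:replace}) lifts this to $\varphi[\psi_1\W\psi_2] \equiv^{w_N} \varphi[\true]$, and since $\GF$ is \emph{suspendable}---meaning $w \models \GF\chi$ iff $w_N \models \GF\chi$ for every $N$---I conclude that $w \models \GF\varphi[\psi_1\W\psi_2]$ iff $w \models \GF\varphi[\true]$. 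If instead $w \not\models \FG\psi_1$, then no suffix of $w$ satisfies $\G\psi_1$, so $\G\psi_1 \equiv^w \false$; using $\psi_1\W\psi_2 \equiv \psi_1\U\psi_2 \vee \G\psi_1$ I obtain $\psi_1\W\psi_2 \equiv^w \psi_1\U\psi_2$, and \cref{lemma:placeholder}(\ref{lemma:replace}) together with suspendability closes the case.

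Equation (\ref{eqFG2}) will be proved dually, applying the Contextual Equivalence Lemma to $\FG\varphi[\psi_1\U\psi_2]$ with basis $\{\GF\psi_2\}$ and flattening choices $\FG\varphi[\psi_1\W\psi_2]$ for the context $\{\GF\psi_2\}$ and $\FG\varphi[\false]$ for the context $\emptyset$. When $w \models \GF\psi_2$, the identity $\F\psi_2 \equiv^w \true$ gives $\psi_1\U\psi_2 \equiv \psi_1\W\psi_2 \wedge \F\psi_2 \equiv^w \psi_1\W\psi_2$; when $w \not\models \GF\psi_2$, I pick $N$ past which no suffix of $w$ satisfies $\psi_2$, yielding $\psi_1\U\psi_2 \equiv^{w_N} \false$. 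In both subcases \cref{lemma:placeholder}(\ref{lemma:replace}) plus the suspendability of $\FG$ finishes the equivalence, while premise (ii) is again one line of monotonicity from $\false \models \psi_1\W\psi_2$. The only mildly delicate point is this repeated appeal to suspendability: the equivalences I establish hold only from a suffix $w_N$ onward and not on the whole word $w$, so it is essential that the outermost operator is $\GF$ or $\FG$, whose truth depends solely on the infinite tail.
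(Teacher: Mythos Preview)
Your proposal is correct and mirrors the paper's proof almost exactly: the same singleton bases $\{\FG\psi_1\}$ and $\{\GF\psi_2\}$, the same contextual formulas, the same key identities $\psi_1\W\psi_2 \equiv \psi_1\U\psi_2 \vee \G\psi_1$ and $\psi_1\U\psi_2 \equiv \psi_1\W\psi_2 \wedge \F\psi_2$, and the same appeal to \cref{lemma:placeholder} plus the limit behavior of $\GF$/$\FG$. Your explicit emphasis on \emph{suspendability} is a slight presentational improvement over the paper, which invokes the same fact more tersely as ``\GF-formulas are satisfied in all suffixes of a word or in none of them.''
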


\begin{proof}
Notice that (\ref{eqGF2}) and (\ref{eqFG2}) are instances of the Weak Contextual Equivalence Lemma (Lemma~\ref{lem:conteq}) with bases $\{\FG \psi_1\}$ and $\{\GF \psi_2\}$, respectively. In the case of (\ref{eqGF2}), the contextual formulas are $\flat{(\GF \varphi [\psi_1 \W \psi_2])}{\{\FG \psi_1\}} = \GF \varphi[\true]$ and $\flat{(\GF \varphi [\psi_1 \W \psi_2])}{\emptyset} = \GF \varphi[\psi_1 \U \psi_2]$. The premises of the lemma are satisfied:
{\renewcommand\theenumi{\roman{enumi}}
\begin{enumerate}
	\item Given $w \models \FG \psi_1$, we should first prove $w \models \GF \varphi [\psi_1 \W \psi_2]$ if{}f $w \models \GF \varphi [\true]$. Since $w \models \FG \psi_1$, there is a suffix $v$ of $w$ such that $v_k \models \psi_1$ for all $k \in \N$. By the semantics of LTL, $\psi_1 \W \psi_2 \equiv^v \true$, so $\GF \varphi[\psi_1 \W \psi_2] \equiv^v \GF \varphi[\true]$ by \cref{lemma:placeholder}(\ref{lemma:replace}). Because \GF-formulas are satisfied in all suffixes of a word or in none of them, $w \models \GF \varphi [\psi_1 \W \psi_2]$ if{}f $w \models \GF \varphi [\true]$.

	Given $w \not\models \FG \psi_1$, we claim $w \models \GF \varphi [\psi_1 \W \psi_2]$ if{}f $w \models \GF \varphi [\psi_1 \U \psi_2]$. We know $\psi_1 \W \psi_2 \equiv \psi_1 \U \psi_2 \vee \G \psi_1$, but $\G \psi_1 \equiv^w \false$ because of the hypothesis, so $\psi_1 \W \psi_2 \equiv^w \psi_1 \U \psi_2$. By \cref{lemma:placeholder}(\ref{lemma:replace}), $\GF \varphi [\psi_1 \W \psi_2] \equiv^w \GF \varphi [\psi_1 \U \psi_2]$ and this implies the claim.
	\item Since $\psi_1 \U \psi_2 \models \true$ and by \cref{lemma:placeholder}(\ref{lemma:nnf}), $\GF \varphi[\psi_1 \U \psi_2] \models \GF \varphi[\true]$.
\end{enumerate}}

	For (\ref{eqGF2}), the formulas are $\flat{(\FG \varphi[\psi_1 \U \psi_2])}{\{\GF \psi_2\}} = \FG \varphi[\psi_1 \W \psi_2]$ and $\flat{(\FG \varphi[\psi_1 \U \psi_2])}{\emptyset} = \FG \varphi[\false]$ and the premises of the lemma also hold:
{\renewcommand\theenumi{\roman{enumi}}
\begin{enumerate}
	\item Given $w \models \GF \psi_2$, we prove $w \models \FG \varphi [\psi_1 \U \psi_2]$ if{}f $w \models \FG \varphi [\psi_1 \W \psi_2]$. Since $\psi_1 \U \psi_2 \equiv \psi_1 \W \psi_2 \wedge \F \psi_2$ and $\F \psi_2 \equiv^w \true$ by the hypothesis, $\psi_1 \U \psi_2 \equiv^w \psi_1 \W \psi_2$. By \cref{lemma:placeholder}(\ref{lemma:replace}), $\FG \varphi [\psi_1 \U \psi_2] \equiv^w \FG \varphi [\psi_1 \W \psi_2]$ and this entails the desired equivalence.

	Given $w \not\models \GF \psi_2$, we have to prove $w \models \FG \varphi [\psi_1 \U \psi_2]$ if{}f $w \models \FG \varphi [\false]$. Since $w \not\models \GF \psi_2$, there is a suffix $v$ of $w$ such that $v_k \not\models \psi_2$ for all $k \in \N$. In this case, by the semantics of LTL, $\psi_1 \U \psi_2 \equiv^v \false$, so $\FG \varphi [\psi_1 \U \psi_2] \equiv^v \FG \varphi [\false]$ by \cref{lemma:placeholder}(\ref{lemma:replace}). The conclusion follows again from the limit properties of $\FG$.
	\item Since $\false \models \psi_1 \W \psi_2$ and by \cref{lemma:placeholder}(\ref{lemma:nnf}), $\FG \varphi[\false] \models \FG \varphi[\psi_1 \W \psi_2]$.
\end{enumerate}}
\end{proof}

The following proposition repeatedly applies these rules to show that limit formulas can be normalized with an exponential blowup.

\begin{proposition} \label{prop:gffg}
For every LTL formula $\varphi$ without limit operators, $\GF \varphi$ and $\FG \varphi$ can be normalized into formulas with at most $\nnodes{\varphi'} \leq 3^{\nnodes{\varphi}} \cdot \nnodes{\varphi}$ nodes.
\end{proposition}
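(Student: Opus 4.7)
We prove the two claims of the proposition simultaneously by induction on $\nnodes{\varphi}$, since normalizing $\GF\varphi$ spawns calls to normalize $\FG$-formulas (and dually). The base case, where $\varphi$ is a literal or a Boolean constant, is immediate: both $\GF\varphi$ and $\FG\varphi$ are already in normal form and of size at most $2\le 3^{\nnodes{\varphi}}\nnodes{\varphi}$. For the inductive step we concentrate on $\GF\varphi$; the case of $\FG\varphi$ is entirely dual, using (\ref{eqFG2}) in place of (\ref{eqGF2}).

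If $\varphi$ contains no $\W$-subformula, $\GF\varphi$ is already in normal form, so we are done. Otherwise, we perform a secondary induction on $\#\W(\varphi)$, the number of $\W$-subformulas of $\varphi$. Choose an innermost $\W$-subformula $\psi_1 \W \psi_2$ of $\varphi$, so that $\psi_1$ and $\psi_2$ themselves have no $\W$-subformulas, and write $\varphi = \varphi'[\psi_1\W\psi_2]$ for a suitable context $\varphi'$. Applying (\ref{eqGF2}) gives
$$\GF\varphi \;\equiv\; \GF\varphi'[\psi_1\U\psi_2] \;\vee\; \bigl(\FG\psi_1 \,\wedge\, \GF\varphi'[\true]\bigr).$$
Each of the three subformulas can now be normalized recursively: (i) $\GF\varphi'[\psi_1\U\psi_2]$ has the same size as $\varphi$ but one fewer $\W$-subformula, so the secondary induction hypothesis applies; (ii) $\FG\psi_1$ satisfies $\nnodes{\psi_1} < \nnodes{\varphi}$ and $\psi_1$ still contains no limit operators, so the primary induction hypothesis for $\FG$ applies; (iii) $\GF\varphi'[\true]$ has strictly fewer nodes than $\varphi$ (since $\psi_1\W\psi_2$ has at least three nodes and is replaced by $\true$), so the primary induction hypothesis for $\GF$ applies. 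The resulting positive Boolean combination is in normal form, as the class of normal-form formulas is closed under $\vee$ and $\wedge$.

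For the size bound I plan to use the elementary inequality $3^a\cdot a + 3^b\cdot b \le 3^{a+b}(a+b)$ for $a,b\ge 0$. The innermost choice of $\psi_1\W\psi_2$ guarantees that $\nnodes{\psi_1}$ and $\nnodes{\varphi'[\true]}$ are the sizes of disjoint parts of $\varphi$, so their sum is bounded by $\nnodes{\varphi}-1$, and the primary hypotheses therefore yield a combined additive contribution bounded by $3^{n-1}(n-1)$ at each step of the secondary induction. Summing over the at most $\#\W(\varphi)$ iterations, together with the base contribution $\nnodes{\varphi}+1$ from step (i) once $\#\W$ reaches zero, gives the claimed bound $3^{\nnodes{\varphi}}\cdot\nnodes{\varphi}$.

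The main obstacle is precisely the size-bound accounting. The secondary induction shrinks $\#\W(\varphi)$ while keeping the total size of the ``main'' disjunct at $\nnodes{\varphi}$, so the additive contributions of $\FG\psi_1$ and $\GF\varphi'[\true]$ across iterations must be absorbed into the exponential factor $3^{\nnodes{\varphi}}$ rather than being allowed to accumulate into a higher-order polynomial overhead. The inequality above is the device that performs this absorption, provided one is careful to invoke the primary (rather than secondary) hypothesis for the two smaller recursive calls, and to pick the $\W$-subformula being eliminated to be innermost, so that the sizes handled by the primary hypothesis are genuinely disjoint.
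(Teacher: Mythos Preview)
Your termination and correctness argument is sound: the nested induction (outer on $\nnodes{\varphi}$, inner on the number of $\W$-subformulas) together with the innermost choice does guarantee that every recursive call falls under one of the two hypotheses, and the resulting Boolean combination is in normal form.

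The gap is in the size accounting. You correctly bound the contribution of one secondary step by $3^{n-1}(n-1)+2$, using the inequality $3^{a}a+3^{c}c\le 3^{a+c}(a+c)$ with $a+c\le n-1$. But then you write ``summing over the at most $\#\W(\varphi)$ iterations\ldots\ gives the claimed bound $3^{\nnodes{\varphi}}\cdot\nnodes{\varphi}$''. This does not follow: $\#\W(\varphi)$ can be of order $n$, so the sum you describe is of order $n\cdot 3^{n-1}(n-1)=\Theta(n^{2}\,3^{n-1})$, which exceeds $3^{n}n=3n\cdot 3^{n-1}$ as soon as $n\ge 5$. The per-step bound is simply too crude to be summed $\Theta(n)$ times; getting the stated bound from your recursion would require tracking how $a_{k}$ and $c_{k}$ vary across steps, which you do not do.

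The paper avoids this bookkeeping by a different organization. It defines a single potential (the number of \emph{obstacles}: $\W$-nodes and $\U$-under-$\W$-nodes inside a $\GF$, and dually inside an $\FG$) and shows that all three recursive calls produced by one application of~(\ref{eqGF2}) or~(\ref{eqFG2}) have strictly fewer obstacles than the parent. For this to hold for the call $\GF\varphi'[\psi_{1}\U\psi_{2}]$ one must pick a \emph{maximal} $\W$-subformula, not an innermost one (with your innermost choice the new $\U$-node can sit under an outer $\W$ and become a fresh obstacle, so the count need not drop). With the obstacle count as the sole induction parameter, each rewrite step at most triples the total size, and the number of steps is bounded by the initial obstacle count, which is at most $\nnodes{\varphi}$; this yields $3^{\nnodes{\varphi}}\cdot\nnodes{\varphi}$ directly.
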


\begin{proof}
A \GF-obstacle of a formula is a \W-node or a \U-node under a \W-node inside a \GF-node. Similarly, a \FG-obstacle is a \U-node or a \W-node under a \U-node inside a \FG-node. Finally, an obstacle is either a \GF-obstacle or an \FG-obstacle. We proceed by induction on the number of obstacles of $\GF \varphi$ or $\FG \varphi$. If they have no obstacles, then they are already in normal form (\cref{def:normalform}).

Assume $\GF \varphi$ has at least one obstacle. Then $\varphi$ contains at least one maximal \W-node $\psi_1 \W \psi_2$. We derive $\GF \varphi\hole$ from $\GF \varphi$ by replacing each \W-node labeled by $\psi_1 \W \psi_2$ by the special atomic proposition $\hole$. By \cref{eqGF2}, $\GF \varphi[\psi_1 \W \psi_2]$ is equivalent to
\[  \GF \varphi [\psi_1 \U \psi_2] \vee  (\FG \psi_1 \wedge \GF \varphi [\true]) \]
We claim that each of $\GF \varphi [\psi_1 \U \psi_2]$, $\GF \varphi [\true]$, and $\FG \psi_1$ has fewer obstacles than 
$\GF \varphi[\psi_1 \W \psi_2]$, and so can be normalized by induction hypothesis.  Indeed, $\GF \varphi [\psi_1 \U \psi_2]$, and $\GF \varphi[\true]$ have at least one \W-node less than $\varphi$, and the number of \U-nodes under a \W-node, due to the maximality of $\psi_1 \W \psi_2$, has not increased, and as a consequence it has fewer \GF-obstacles (and by definition no \FG-obstacles). For $\FG \psi_1$, observe first that every \FG-obstacle of $\FG \psi_1$ is a \GF-obstacle of $\GF \varphi[\psi_1 \W \psi_2]$. Indeed, the obstacles of $\FG \psi_1$ are the \U-nodes and the \W-nodes under \U-nodes; the former were \U-nodes under \W-nodes in $\varphi$, and the latter were \W-nodes of $\varphi$, and so both \GF-obstacles of $\GF\varphi$. Moreover, $\psi_1 \W \psi_2$ is a \GF-obstacle of $\varphi$, but not a \FG-obstacle of $\FG \psi_1$. Hence, the number of obstacles has decreased. 

Assume now that $\FG \varphi$ has at least one obstacle. Then $\varphi$ contains at least one maximal \U-node $\psi_1 \U \psi_2$. We derive $\FG \varphi\hole$ from $\FG \varphi$ by replacing each \U-node labeled by $\psi_1 \U \psi_2$ by the special atomic proposition $\hole$. By \cref{eqFG2}, $\FG \varphi[\psi_1 \U \psi_2]$ is equivalent to
\[ (\GF \psi_2 \wedge \FG \varphi[\psi_1 \W \psi_2]) \vee \FG \varphi[\false] \]
Each of $\GF \psi_2$, $\FG \varphi[\psi_1 \W \psi_2]$, and $\FG \varphi[\false]$ has fewer obstacles as $\FG \varphi[\psi_1 \U \psi_2]$, and can be normalized by induction hypothesis. The proof is as above.

	The size of the formula increases at most by a factor of 3 on each step, and the number of steps is bounded by the number of both \W-nodes and \U-nodes in $\varphi$, which is bounded by the total number of nodes in $\varphi$. So the formula has at most $3^{\nnodes{\varphi}} \nnodes{\varphi}$ nodes.
\end{proof}

\subsection{The Normalization Theorem}
\label{sec:overview-rs}

	The main result directly follows from the previous propositions.

\begin{theorem}\label{thm:rewriting:main}
Every formula $\varphi$ of LTL is normalizable into a formula with at most $4^{6\nnodes{\varphi}}$ nodes.
\end{theorem}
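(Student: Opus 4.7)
The plan is to chain \cref{prop:23norm,prop:snf,prop:gffg} in sequence, carefully tracking both the overall size and the size of the limit subformulas after each stage. Starting from $\varphi$ of size $n = \nnodes{\varphi}$, first apply \cref{prop:23norm} to obtain an equivalent $\varphi_1$ in \fnf{} of size at most $4^{2n}\cdot n$. The ``limit property'' guaranteed by that proposition ensures that every $\GF$-subformula $\GF\psi$ of $\varphi_1$ satisfies $\psi \in \subf(\varphi)$ and every $\FG$-subformula of $\varphi_1$ lies in $\subf(\varphi)$, so each limit subformula of $\varphi_1$ has at most $n+1$ nodes and the number of distinct limit subformulas satisfies $\gfba{\varphi_1}\leq 2n$. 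Second, apply \cref{prop:snf} to $\varphi_1$ to obtain an equivalent $\varphi_2$ in \snf{} of size at most $3^{2n}\cdot 4^{2n}\cdot n$; the ``moreover'' clause of that proposition ensures that each limit subformula of $\varphi_2$ still has at most $n+1$ nodes.

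Third, since $\varphi_2$ is in \snf{}, every limit subformula occurs only directly under Boolean connectives. For each such occurrence of $\GF\psi$ or $\FG\psi$, apply \cref{prop:gffg} and substitute in place the normalized equivalent of size at most $3^n\cdot n$. These local replacements preserve \snf{} (they only insert Boolean combinations of limit formulas in normal form) and eliminate all remaining $\W$-nodes under $\GF$-nodes and $\U$-nodes under $\FG$-nodes, yielding the desired $\varphi'$ in full normal form.

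For the size bound, since each limit subformula in $\varphi_2$ contributes at least two nodes, the number of occurrences is at most $\nnodes{\varphi_2}/2$. Each occurrence expands from at most $n+1$ nodes to at most $3^n\cdot n$ nodes, so
\[ \nnodes{\varphi'} \;\leq\; \nnodes{\varphi_2}\cdot 3^n\cdot n \;\leq\; 3^{3n}\cdot 4^{2n}\cdot n^2 \;\leq\; 4^{7n}, \]
using $3^{3n}\leq 4^{3n}$ and $n^2 \leq 4^{2n}$ for $n\geq 1$. The main delicate point is propagating the invariant that after Stages~1 and~2 the inner parts of the limit subformulas remain bounded by $n$; without the ``limit property'' of \cref{prop:23norm} and the ``moreover'' of \cref{prop:snf}, the exponential $3^n$ blow-up of Stage~3 would instead compound with the already-blown-up size of $\varphi_2$, producing a doubly-exponential bound rather than the single-exponential $4^{7n}$.
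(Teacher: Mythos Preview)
Your proposal is correct and follows essentially the same three-stage approach as the paper: chain \cref{prop:23norm}, \cref{prop:snf}, and \cref{prop:gffg}, tracking both the overall size and the crucial invariant that after Stages~1 and~2 the arguments of limit subformulas remain bounded by $n$ (via the ``limit property'' and the ``moreover'' clause, exactly as you identify). Your final arithmetic is organized slightly differently from the paper's---you use a multiplicative bound $\nnodes{\varphi'} \leq \nnodes{\varphi_2}\cdot 3^n\cdot n$ in Stage~3 where the paper uses an additive one---but both reach $4^{7n}$ with room to spare.
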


\begin{proof}
Any LTL formula $\varphi$ can be transformed into an equivalent $\varphi'$ in \fnf{} of size $\nnodes{\varphi'} \leq 4^{2\nnodes{\varphi}} \cdot \nnodes{\varphi}$ by \cref{prop:23norm}. Moreover, $\gfba{\varphi'} \leq 2 \cdot \nnodes{\varphi}$, since every \FG-subformula of $\psi'$ and every argument $\psi$ of a \GF-subformula of $\varphi'$ is a subformula of $\varphi$. In addition, $\nnodes{\psi} \leq \nnodes{\varphi}$ for every $\GF \psi$ subformula of $\varphi$.

According to \cref{prop:snf}, for every formula $\varphi'$ in \fnf{} there is an equivalent formula $\varphi''$ in \snf{} with
\begin{equation}
	\label{eq:phi2nd}
	\nnodes{\varphi''} \leq 3^{\gfba{\varphi'}} \cdot \nnodes{\varphi'} \leq 3^{2 \nnodes{\varphi}} \cdot (4^{2\nnodes{\varphi}} \cdot \nnodes{\varphi}) \leq  3^{2\nnodes{\varphi}} \cdot 4^{3\nnodes{\varphi}}
\end{equation}
This formula is a Boolean combination of  limit formulas with at most $\nnodes{\varphi}$ nodes, not containing any proper limit node, and other temporal formulas containing neither limit nodes nor \U-nodes under \W-nodes. The latter are in $\Sigma_2$ and \cref{prop:gffg} deals with the former. Notice that every $\GF \psi$ and $\FG \psi$ subformula has at most $\nnodes{\varphi}$ nodes and thus  can be normalized into a formula with at most $3^{\nnodes{\varphi}} \nnodes{\varphi}$ nodes. The result $\varphi'''$ of replacing these limit subformulas by their normal forms within $\varphi''$ is a Boolean combination of normal forms, and so we are done. The number of nodes in the resulting formula $\varphi'''$ is at most:
\begin{align*}
\nnodes{\varphi'''} \leq & ~ \nnodes{\varphi''} + \gfba{\varphi''} \cdot 3^{\nnodes{\varphi}} \cdot \nnodes{\varphi} \\
		    \leq & ~ \nnodes{\varphi''} \cdot 3^{\nnodes{\varphi}} \cdot (\nnodes{\varphi} + 1)
                    & (\gfba{\varphi''} \leq \nnodes{\varphi''}) \\
		    \leq & ~ 4^{3\nnodes{\varphi}} \cdot 3^{3\nnodes{\varphi}} \cdot (\nnodes{\varphi} + 1)
		    & (\text{\ref{eq:phi2nd}}) \\
		    \leq & ~ 4^{3\nnodes{\varphi}} \cdot 4^{(3 \log_4 3 + \frac12) \nnodes{\varphi}} \leq 4^{6\nnodes{\varphi}}
		    & (\nnodes{\varphi} + 1 \leq 4^{\nnodes{\varphi}/2})
\end{align*}
\end{proof}

\subsection{Summary of the normalization algorithm} \label{sec:summary}

We summarize the steps of the normalization algorithm described and proven in this section. Recall that a formula is in normal form if{}f it satisfies the following properties:
\normalFormConds
The normalization algorithm applies the rules in~\cref{tab:allrules} as follows to fix any violation of these properties:
\begin{enumerate}
	\item \U-nodes under \W-nodes and not under limit nodes are removed using rules (\ref{eqWU}) and (\ref{eqUW}). This may introduce new \GF-subformulas. By applying (\ref{eqUW}) only to highest \U-nodes of $\varphi_1$ the number of new \GF-subformulas is only linear in the size of the original formula.
	\item Limit nodes under other temporal nodes are pulled out using rules (\ref{eqGF1}) and (\ref{eqFG1}). By applying the rules only to the lowest limit nodes, it only needs to be applied once for each limit subformula. \item \W-nodes under \GF-nodes are removed using rule (\ref{eqGF2}), and \U-nodes under \FG-nodes are removed using rule (\ref{eqFG2}). This may produce new limit nodes of smaller size that are handled recursively. Choosing highest \W- and \U-nodes ensures that the process produces only a single exponential blowup over the initial size of the formula.
\end{enumerate}
{
\begin{table}[tb]
\begin{equation*}
\begin{array}{lrl@{\;\;\;}c}
\multirow{3}*{\text{Stage 1:}~} & \varphi_1 \W \varphi_2[\psi_1 \U \psi_2] \equiv & \varphi_1 \U \varphi_2[\psi_1 \U \psi_2] \vee \G \varphi_1 & (\ref{eqWU}) \\[1ex]
	                  & \varphi_1[\psi_1 \U \psi_2] \W \varphi_2 \equiv & (\GF \psi_2 \wedge \varphi_1[\psi_1 \W \psi_2] \W \varphi_2) \vee \varphi_1[\psi_1 \U \psi_2] \U (\varphi_2 \vee \G \varphi_1[\false]) & (\ref{eqUW}) \\[1em] \hline \\
\multirow{2}*{\text{Stage 2:}~} & \varphi[\GF \psi] \equiv & (\GF \psi \wedge \varphi[\true]) \vee \varphi[\false] & (\ref{eqGF1}) \\[1ex]
	& \varphi[\FG \psi] \equiv & (\FG \psi \wedge \varphi[\true]) \vee \varphi[\false] & (\ref{eqFG1}) \\[1em] \hline \\
\multirow{2}*{\text{Stage 3:}~} & \GF \varphi [\psi_1 \W \psi_2] \equiv & \GF \varphi [\psi_1 \U \psi_2] \vee  (\FG \psi_1 \wedge \GF \varphi [\true]) & (\ref{eqGF2}) \\[1ex]
	& \FG \varphi[\psi_1 \U \psi_2] \equiv & (\GF \psi_2 \wedge \FG \varphi[\psi_1 \W \psi_2]) \vee \FG \varphi[\false] & (\ref{eqFG2}) \\[1ex]
\end{array}
\end{equation*}
\caption{Normalization rules.} \label{tab:allrules}
\end{table}
}
After the three steps, a formula in normal form is obtained with a single exponential blowup in the number of nodes.

Moreover, notice that $\psi_1$ itself does not play any role in rules (\ref{eqWU}) and (\ref{eqFG2}), and neither does $\psi_2$ in (\ref{eqGF2}). Hence, the application of (\ref{eqWU}) can be made more efficient by replacing not only every occurrence of $\psi_1 \U \psi_2$ outside a limit subformula with $\psi_1 \W \psi_2$ and $\false$, but also every occurrence of $\psi \U \psi_2$ for any formula $\psi$ by $\psi \W \psi_2$ and by $\false$. The same holds for rules (\ref{eqGF2}) and (\ref{eqFG2}).

\begin{example}	
Let us apply the procedure to the formula $\varphi_n = (\cdots ((((a_0 \U a_1) \W a_2) \U a_3) \U a_4) \cdots \U a_n)$ in~\cref{example:exp}. In stage 1, rule (\ref{eqUW}) matches the subformula $(a_0 \U a_1) \W a_2$ and rewrites it to $\GF a_1 \wedge (a_0 \W a_1) \W a_2 \vee (a_0 \U a_1) \U (a_2 \vee \false \W \false)$, where $\false \W \false$ can be simplified to $\false$ and removed from the disjunction. The rewritten formula is in 1-form, because there is no \U-node under a \W-node, so we can continue to stage 2. Now, we must pull the \GF-node $\GF a_1$ out the cascade of \U-nodes using rule (\ref{eqGF1}). This yields
$$\begin{array}{rcl}
\varphi_n & \equiv & (\GF a_1 \wedge (\cdots ((((a_0 \W a_1) \W a_2 \vee (a_0 \U a_1) \U a_2) \U a_3) \U a_4) \cdots \U a_n)) \\[0.1cm]
& & \vee \; (\cdots ((((a_0 \U a_1) \U a_2) \U a_3) \U a_4) \cdots \U a_n)
\end{array}$$
Since the only remaining  limit node is outside any temporal formula, we have obtained a formula in 1-2-form and the procedure arrives to stage 3. Again, the only limit subformula is $\GF a_1$, and $a_1$ does not contain any \W-node, so the formula is completely normalized and we have finished. Observe that $\varphi_n$ has been normalized by exactly two rule applications for all $n \geq 3$, so the algorithm proceeds in linear-time for this family of formulas. The result is not identical, but very similar to the one in~\cref{example:exp}.
\end{example}

\begin{table}[tb]
\[\begin{array}{lr@{\;\;\equiv\;\;}l}
\multirow{6}*{\text{Stage 1:}\;\;} &
	\varphi_1[\psi_1 \M \psi_2] \W \varphi_2 & (\GF \psi_1 \wedge \varphi_1[\psi_1 \R \psi_2] \W \varphi_2) \vee \varphi_1[\psi_1 \M \psi_2] \U (\varphi_2 \vee \G \varphi_1[\false]) \\[1ex]
	& \varphi_1 \W \varphi_2[\psi_1 \M \psi_2] & \varphi_1 \U \varphi_2[\psi_1 \M \psi_2] \vee \G \varphi_1 \\[1ex]
	& \varphi_1[\psi_1 \U \psi_2] \R \varphi_2 & \varphi_1[\psi_1 \U \psi_2] \M \varphi_2 \vee \G \varphi_2 \\[1ex]
	& \varphi_1[\psi_1 \M \psi_2] \R \varphi_2 & \varphi_1[\psi_1 \M \psi_2] \M \varphi_2 \vee \G \varphi_2 \\[1ex]
	& \varphi_1 \R \varphi_2[\psi_1 \U \psi_2] & (\GF \psi_2 \wedge \varphi_1 \R \varphi_2[\psi_1 \W \psi_2]) \vee (\varphi_1 \vee \G \varphi_2[\false]) \M \varphi_2[\psi_1 \U \psi_2] \\[1ex]
	& \varphi_1 \R \varphi_2[\psi_1 \M \psi_2] & (\GF \psi_1 \wedge \varphi_1 \R \varphi_2[\psi_1 \R \psi_2]) \vee (\varphi_1 \vee \G \varphi_2[\false]) \M \varphi_2[\psi_1 \M \psi_2]  \\[1em] \hline \\
\multirow{2}*{\text{Stage 3:}\;\;} &
 	\GF \varphi [\psi_1 \R \psi_2] & \GF \varphi [\psi_1 \M \psi_2] \vee  (\FG \psi_2 \wedge \GF \varphi [\true]) \\[1ex]
	& \FG \varphi[\psi_1 \M \psi_2] & (\GF \psi_1 \wedge \FG \varphi[\psi_1 \R \psi_2]) \vee \FG \varphi[\false] \\[1ex]
\end{array}\]
\caption{Normalization rules for $\R$ and $\M$.} \label{tab:allrulesRM}
\end{table}

\subsection{Extensions and fragments}
\label{sec:extensions}

\paragraph{The operators $\R$ and $\M$.} We have omitted these operators from the proof and the normalization procedure, since they can be expressed in terms of the subset of operators we have considered. However, this translation exponentially increases the number of nodes of the formula, so handling them directly is convenient for efficiency. Their role at every step of the procedure is analogous to that of the \U\ and \W\ operators, i.e. we treat \R\ in the same way as \W\ and we treat \M\ in the same way as $\U$. The corresponding rules are shown in~\cref{tab:allrulesRM}.

\paragraph{Dual normal form.} Recall that a formula is in dual normal form if it satisfies conditions 2 and 3 of Definition \ref{def:normalform} and no \W-node is under a \U-node. Given a formula $\varphi$, let $\psi$ be a formula in primal normal form equivalent to $\overline{\varphi}$. Since $\varphi \equiv \neg \overline{\varphi} \equiv \neg \psi$, pushing the negation into $\psi$ yields a formula equivalent to $\varphi$ in dual normal form.

\paragraph{Past LTL}  Past LTL is an extension of LTL with past operators like yesterday ($\mathbf{Y}$), since ($\mathbf{S}$), etc. In an appendix of \cite{gabbay87}, Gabbay introduced eight rewrite rules to pull future operators out of past operators. Combining these rules with ours yields a procedure that transforms a Past LTL formula into a normalized LTL formula, where past operators are gathered in past-only subformulas, and so can be considered  atomic propositions.

\paragraph{LTL$[\F,\G,\X]$.} Note that LTL is equivalent to first-order logic (FO) over words \cite{GastingDiekertFO-definableLanguages}. In particular a translation from LTL to first-order logic that uses only three variables can be obtained by replacing LTL operators by their respective semantic definitions which are already FO-formulas (See Definition 2.2.). Thus, in fact LTL is equivalent to FO[3], where 3 denotes the number of variables that are used. If we restrict LTL to the unary temporal operators ($\F$, $\G$, and $\X$) and apply the same idea, we obtain formulas with at most 2 variables (FO[2]). Indeed, FO[2] is equivalent to LTL$[\F,\G,\X]$ \cite{EtessamiVW97}. However, our normalization procedure does not take this into account: if we start with a formula from LTL$[\F,\G,\X]$ and apply \cref{thm:rewriting:main} we do not necessarily obtain a normalized formula from LTL$[\F,\G,\X]$. Is the introduction of $\U$ that are not equivalent to $\F$ unavoidable? 

Luckily, no. We can update our rewrite rules such that if we start with a formula from LTL$[\F,\G,\X]$ we also obtain a normalized formula from LTL$[\F,\G,\X]$. The rules are in \Cref{tab:fo2rules} and the correctness proofs proceed as before. The idea is now that we proceed by a case distinction over each $\F\psi$ and split into three cases: (a) $\F\psi$ never holds, (b) $\F\psi$ holds at least once, but finitely often, (c) $\F\psi$ always holds. Moreover, since $\X\F\psi \equiv \F\X\psi$ and $\X\G\psi \equiv \G\X\psi$, every $\G$-subformula with a nested $\F$-subformula can always be reduced to one or more formulas $\G (\F\psi \vee \G \varphi[\F\psi])$ by pushing next operators inside $\F$ and $\G$, calculating the conjunctive normal form of the argument, and splitting the $\G$ operator over the conjunctions.

{
\begin{table}[tb]
\begin{equation*}
\begin{array}{lrl@{\;\;\;}c}
\multirow{2}*{\text{Stage 1:}~} & \G (\varphi) \equiv & \bigwedge_{C \in \textsf{cnf}(\varphi)} \G (\bigvee_{\psi \in C} \psi) \text{ if } \varphi \text{ is not a disjunction of temporal nodes} & (\ref{eqWU}^*) \\[1ex]
                                & \G (\F \psi \vee \varphi[\F \psi]) \equiv & \GF \psi \vee \F (\psi \wedge \X \G \varphi[\false]) \vee \G (\varphi[\false]) & (\ref{eqUW}^*) \\[1em] \hline \\
\multirow{2}*{\text{Stage 2:}~} & \varphi[\GF \psi] \equiv & (\GF \psi \wedge \varphi[\true]) \vee \varphi[\false] & (\ref{eqGF1}) \\[1ex]
	& \varphi[\FG \psi] \equiv & (\FG \psi \wedge \varphi[\true]) \vee \varphi[\false] & (\ref{eqFG1}) \\[1em] \hline \\
\multirow{2}*{\text{Stage 3:}~} & \GF \varphi [\G \psi] \equiv & \GF \varphi [\false] \vee  (\FG \psi \wedge \GF \varphi [\true]) & (\ref{eqGF2}^*) \\[1ex]
	& \FG \varphi[\F \psi] \equiv & (\GF \psi \wedge \FG \varphi[\true]) \vee \FG \varphi[\false] & (\ref{eqFG2}^*) \\[1ex]
\end{array}
\end{equation*}
\caption{LTL$[\F,\G,\X]$-Normalization rules.} \label{tab:fo2rules}
\end{table}
}

 \section{A Novel Translation from LTL to Deterministic Rabin Automata (DRW)}
\label{sec:LTLtoDRW}

We apply our $\Delta_2$-normalization procedure to derive a new translation from LTL to
DRW via weak alternating automata (AWW).

It is well-known that
an LTL formula $\varphi$ of length $n$ can be translated into an AWW with $O(n)$ states  \cite{MullerSS88,Vardi94}.
We show that, if $\varphi$ is in normal form, then the AWW can be chosen so that every path through the automaton switches at most once between accepting and non-accepting states. We then prove that determinizing AWWs satisfying this additional property is much simpler than the general case. Finally, we give an upper bound on the size of the resulting DRW: if we use the normal form of \cref{thm:normthm}, then the number of states of the final DRW is double exponential in the length of $\varphi$, which is asymptotically optimal.

The section is structured as follows: \Cref{subsec:aww} introduces basic definitions,
\Cref{subsec:LTLtoAWW2} shows how to translate an $\Delta_2$-formula into AWWs with at most one alternation between accepting and non-accepting states, and \Cref{subsec:det} presents the determinization procedure for this subclass of AWWs.

\subsection{Weak and Very Weak Alternating Automata}
\label{subsec:aww}

Let $X$ be a finite set. The set of positive Boolean formulas over $X$, denoted by $\mathcal{B}^+(X)$, is the closure of $X \cup \{ \true, \false \}$ under disjunction and conjunction.
A set $S \subseteq X$ is a model of $\theta \in B^+(X)$ if the truth assignment that assigns true to the elements of $S$ and false to the elements of $X \setminus S$ satisfies $\theta$.
Observe, that if $S$ is a model of $\theta$ and $S \subseteq S'$ then $S'$ is also a model of $\theta$.
A model $S$ of $\theta$ is minimal if no proper subset of $S$ is also a model of $\theta$. The set of minimal models of $\theta$ is denoted $\mathcal{M}_\theta$. Observe that $\mathcal{M}_\true=\{\emptyset\}$ and $\mathcal{M}_\false=\emptyset$.
Two formulas are equivalent, denoted $\theta \equiv \theta'$, if their sets of minimal models are equal, i.e., $\mathcal{M}_\theta = \mathcal{M}_{\theta'}$.

\subsubsection*{Alternating Büchi and co-Büchi  automata.}
An alternating Büchi automaton over an alphabet $\Sigma$ is a tuple $\mathcal{A} = \langle \Sigma, Q, \theta_0, \delta, \alpha \rangle$, where $Q$ is a finite set of states, $\theta_0 \in \mathcal{B}^+(Q)$ is an initial formula,\footnote{Many papers define alternating automata with an initial state instead of an initial formula. Both models are equivalent, and an initial formula is better for our purposes.} $\delta \colon Q \times \Sigma \mapsto \mathcal{B}^+(Q)$ is the transition function, and $\alpha \subseteq Q$ is the acceptance condition \cite{ChandraKS81,Vardi96}.

\begin{example}
\label{ex:running}
We use as running example the alternating Büchi automaton $\mathcal{A} = \langle \Sigma, Q, \theta_0, \delta, \alpha \rangle$ over the alphabet $\Sigma= 2^{\{a, b, c\}}$, where $Q = \{q_0, q_1, q_2\}$, $\theta_0 = q_0$, $\alpha = \{q_1\}$, and $\delta$ is given by:
$$\delta(q_0, \sigma) = \begin{cases} q_0 \vee q_1 & \text{if } \sigma=\{a\} \\ q_2 & \text{if } \sigma=\{b\} \\ \false            & \text{otherwise,} \end{cases}  \quad 
\delta(q_1, \sigma) = \begin{cases} q_1          & \text{if } \sigma = \{b\} \\ q_1 \wedge q_2 & \text{otherwise,} \end{cases}  \quad \mbox{ and }
\delta(q_2, \sigma) = \begin{cases} \true   & \text{if }  \sigma = \{c\}  \\ q_2            & \text{otherwise.} \end{cases}  $$
\end{example}

A \emph{run} of $\mathcal{A}$ on the word $w$ is a directed acyclic graph $G=(V, E)$ satisfying the following properties:
\begin{itemize}
\item $V \subseteq Q \times \mathbb{N}_0$, and $E \subseteq \{ ((q, l), (q', l +1)) \mid q, q' \in Q \mbox{ and } l \geq 0 \}$.
We call the elements of $V$ \emph{nodes}.  For every $l \geq 0$ we let $Q_l :=\{ q \colon (q,l) \in V \}$.  
\item $Q_0$ is a minimal model of $\theta_0$.
\item For every $l \geq 0$ and every $q \in Q_l$, either $\delta(q, w[l]) \equiv \false$, or the set $\{q' \colon ((q, l), (q',l+1)) \in E\}$ is a minimal model of $\delta(q, w[l])$. (Observe that if $\delta(q, w[l]) \equiv \true$ then this set is empty.) 
\item For every $l \geq 1$, if $q \in Q_l$ then there exists $q' \in Q_{l-1}$ such that $((q',l-1),(q,l)) \in E$.
\end{itemize}
\noindent Runs can be finite or infinite. A run $G$ is \emph{accepting} if
\begin{itemize}
\item[(a)] $\delta(q, w[l]) \not\equiv \false$ for every $q \in Q_l$, and
\item[(b)] every infinite path of $G$ visits infinitely often nodes $(q, l) \in V$ such that $q \in \alpha$.
\end{itemize}
In particular, every finite run satisfying (a) is accepting. $\mathcal{A}$ accepts a word $w$ if{}f it has an accepting run $G$ on $w$. The language $\lang(\mathcal{A})$ recognized by $\mathcal{A}$ is the set of words accepted by $\mathcal{A}$. Two alternating Büchi automata are equivalent if they recognize the same language.

\begin{example}
\label{ex:runs}
Figure \ref{fig:runs} shows (initial parts of) four runs of the alternating Büchi automaton of \cref{ex:running} on the words $\{a\} \{a\}\{c\}^\omega$ (top two runs), $\{a\} \{a\}\{b\}^\omega$ (third run), and $\{b\} \{c\}^\omega$ (fourth run). Let us check that the first run satisfies the properties of the definition. We have  $Q_0=\{q_0\}$, which is indeed a minimal model of $\theta_0$. Further, we have:
\begin{itemize} 
\item The set $\{q' \colon ((q_0, 0),(q',1)) \in E\}= \{q_1\}$ is a minimal model of $\delta(q_0,\{a\})= q_0 \vee q_1$. 
\item For every $l \geq 1$, the set $\{q' \colon ((q_1, l),(q',l+1)) \in E\}=\{q_1, q_2\}$ is a minimal model of $\delta(q_1,\{a\})= \delta(q_1,\{c\})= q_1 \wedge q_2$.
\item For every $l \geq 2$, the set $\{q' \colon ((q_2,l,(q',l+1)) \in E\} = \emptyset$ is a minimal model of $\delta(q_2, \{c\})= \true$.
\end{itemize}
The first run is infinite and accepting, because the only infinite path of the run visits $q_1$ infinitely often. The second run is finite and not accepting, because $\delta(q_0, \{c\}) = \false$. The third run is infinite and also not accepting, because  the infinite path $(q_0, 0)(q_1, 1) (q_2, 2)(q_2, 3) (q_2, 4) \cdots$ only visits one node with states of $\alpha$, namely $(q_1, 1)$. The fourth run is finite and accepting, because $\delta(q_2, \{c\}) = \true$.

\begin{figure}[h]
  \begin{center}
	\begin{tikzpicture}
\newcommand{\SetToWidestCases}[1]{\mathmakebox[1.1cm][l]{#1}}\begin{scope}[yshift=5cm]
	\node[state,initial]   (00) at (0,0) {$(q_0,0)$};
	\node[state,accepting] (11) at (2,0) {$(q_1,1)$};
	\node[state,accepting] (12) at (4,0) {$(q_1,2)$};
	\node[state]           (22) at (4,-1) {$(q_2,2)$};
	\node[state,accepting] (13) at (6,0) {$(q_1,3)$};
	\node[state]           (23) at (6,-1) {$(q_2,3)$};
	\node[state,accepting] (14) at (8,0) {$(q_1,4)$};
	\node[state]           (24) at (8,-1) {$(q_2,4)$};
	\node          (dots) at (9.5,-0.5) {{\Large$\cdots$}};
	
	\node          (l1) at (1,0.6) {$\{a\}$};
	\node          (l2) at (3,0.6) {$\{a\}$};
	\node          (l3) at (5,0.6) {$\{c\}$};
	\node          (l4) at (7,0.6) {$\{c\}$};
	\node          (l5) at (9.5,0.6) {{\Large$\cdots$}};

	\path[->]
	(00) edge node[above]{} (11)
	(11) edge node[above]{} (12)
	(11) edge node[above]{} (22)
	(12) edge node[above]{} (13)
	(12) edge node[above]{} (23)
	(13) edge node[above]{} (14)
	(13) edge node[above]{} (24)
	;
	\end{scope}
	
	\begin{scope}[yshift=2cm]
	\node[state,initial]   (00) at (0,0) {$(q_0,0)$};
	\node[state] (01) at (2,0) {$(q_0,1)$};
	\node[state] (02) at (4,0) {$(q_0,2)$};
	
	\node          (l1) at (1,0.6) {$\{a\}$};
	\node          (l2) at (3,0.6) {$\{a\}$};
	\node          (l3) at (5,0.6) {$\{c\}$};
	\node          (l4) at (7,0.6) {$\{c\}$};
	\node          (l5) at (9.5,0.6) {{\Large$\cdots$}};

	\path[->]
	(00) edge node[above]{} (01)
	(01) edge node[above]{} (02)
	;
	\end{scope}
	\begin{scope}
	\node[state,initial]   (00) at (0,0) {$(q_0,0)$};
	\node[state,accepting] (11) at (2,0) {$(q_1,1)$};
	\node[state,accepting] (12) at (4,0) {$(q_1,2)$};
	\node[state]           (22) at (4,-1) {$(q_2,2)$};
	\node[state,accepting] (13) at (6,0) {$(q_1,3)$};
	\node[state]           (23) at (6,-1) {$(q_2,3)$};
	\node[state,accepting] (14) at (8,0) {$(q_1,4)$};
	\node[state]           (24) at (8,-1) {$(q_2,4)$};
	\node          (dots) at (9.5,-0.5) {{\Large$\cdots$}};
	
	\node          (l1) at (1,0.6) {$\{a\}$};
	\node          (l2) at (3,0.6) {$\{a\}$};
	\node          (l3) at (5,0.6) {$\{b\}$};
	\node          (l4) at (7,0.6) {$\{b\}$};
	\node          (l5) at (9.5,0.6) {{\Large$\cdots$}};

	\path[->]
	(00) edge node[above]{} (11)
	(11) edge node[above]{} (12)
	(11) edge node[above]{} (22)
	(12) edge node[above]{} (13)
	(22) edge node[above]{} (23)
	(13) edge node[above]{} (14)
	(23) edge node[above]{} (24)
	;
	\end{scope}
	\begin{scope}[yshift=-3cm]
	\node[state,initial]   (00) at (0,0) {$(q_0,0)$};
	\node[state] (01) at (2,0) {$(q_2,1)$};
	
	\node          (l1) at (1,0.6) {$\{b\}$};
	\node          (l2) at (3,0.6) {$\{c\}$};
	\node          (l3) at (5,0.6) {$\{c\}$};
	\node          (l4) at (7,0.6) {$\{c\}$};
	\node          (l5) at (9.5,0.6) {{\Large$\cdots$}};

	\path[->]
	(00) edge node[above]{} (01)
	;
	\end{scope}
	\end{tikzpicture}
  \end{center}
\caption{Initial parts of four runs of the alternating Büchi automaton of \cref{ex:running} on the words $\{a\} \{a\}\{c\}^\omega$, $\{a\} \{a\}\{b\}^\omega$, and $\{b\}\{c\}^\omega$.}
\label{fig:runs}
\end{figure}
\end{example}

Alternating \emph{co-Büchi} automata are defined as Büchi automata, replacing condition (b) by the co-Büchi condition (every infinite path of $G$ only visits $\alpha$-nodes finitely often).

An alternating automaton is \emph{deterministic} if for every state $q \in Q$ and every letter $a \in \Sigma$ there exists $q' \in Q$ such that $\delta(q,a) = q'$, and \emph{non-deterministic} if for every $q \in Q$ and every $a \in \Sigma$ there exists $Q' \subseteq Q$ such that $\delta(q,a) = \bigvee_{q' \in Q'} q'$.

\subsubsection*{Weak and very weak automata.} Let $\mathcal{A} = \langle \Sigma, Q, \theta_0, \delta, \alpha \rangle$ be an alternating (co-)Büchi automaton. We write $q \trans{} q'$ if there is $a \in \Sigma$ such that $q'$ belongs to some minimal model of $\delta(q, a)$. An automaton $\mathcal{A}$
is \emph{weak} if there is a partition $Q_0$, \dots, $Q_m$ of $Q$ such that
\begin{itemize}
\item for every $q, q' \in Q$, if $q \trans{} q'$  then there are $i \leq j$ such that $q \in Q_i$ and $q' \in Q_j$, and
\item for every $0 \leq i \leq m$: $Q_i \subseteq \alpha$ or $Q_i \cap \alpha = \emptyset$.
\end{itemize}
$\mathcal{A}$ is \emph{very weak} or \emph{linear} if it is weak and every class $Q_i$ of the partition is a singleton (that is, $|Q_i| = 1$) \cite{MullerSS86,KupfermanV01}.
We let $\aww$ and $\alw$ denote the set of weak and very weak alternating automata, respectively.

\begin{example}
Figure \ref{fig:ex:aww:1} shows the relation $\trans{}$ between the states of the automaton of \Cref{ex:running}. For example, we have $q_1 \trans{} q_2$ because $\{q_1, q_2\}$ is a minimal model of $\delta(q_1, \{a\}) = q_1 \wedge q_2$. The automaton is very weak because of the partition $Q_0 = \{q_0\}$, $Q_1 = \{q_1\}$, $Q_2 = \{q_2\}$.
\begin{figure}[h]
  \begin{center}
	\begin{tikzpicture}
\newcommand{\SetToWidestCases}[1]{\mathmakebox[1.1cm][l]{#1}}

	\node[state]   (0) at (0,0) {$q_0$};
	\node[state,accepting] (1) at (1.25,0) {$q_1$};
	\node[state]           (2) at (2.5,0) {$q_2$};

	\path[->]
	(0) edge[loop above]  (0)
	(0) edge  (1)
	(1) edge[loop above]  (1)
	(1) edge (2)
	(2) edge[loop above]  (2);
	\end{tikzpicture}
  \end{center}
  \caption{Relation $\trans{}$ between the states of the automaton of \Cref{ex:running}.}
\label{fig:ex:aww:1}
\end{figure}

\end{example}

Observe that for every \emph{weak} automaton with a co-Büchi acceptance condition we can define a Büchi acceptance condition on the same structure recognizing the same language.
Thus, we will from now on assume that every weak automaton is equipped with a Büchi acceptance condition.

We define the alternation height of a weak alternating automaton. The definition is very similar, but not identical, to the standard one as presented in e.g.\ \cite{Vardi96}.
A weak automaton $\mathcal{A} = \langle \Sigma, Q, \theta_0, \delta, \alpha \rangle$ has \emph{alternation height} $n$ if every path $q \rightarrow q' \rightarrow q'' \cdots $ of $\mathcal{A}$ alternates at most $n - 1$ times between $\alpha$ and $Q \setminus \alpha$. The automaton of \Cref{ex:running} (see also \Cref{fig:ex:aww:1}) has height 3 because of the path $q_0 \rightarrow q_1 \rightarrow q_2$, which exhibits two alternations. 

\begin{definition}
\label{def:awwn}
$\aww[n]$ (resp. $\alw[n]$) denote the set of all weak (resp. very weak) alternating automata with height at most $n$. Further, we let $\aww[n,\A]$ (resp. $\aww[n,\R]$) denote the set of automata of $\aww[n]$ whose initial formula $\theta_0$ satisfies $\theta_0 \in \mathcal{B}^+(\alpha)$ (resp. $\theta_0 \in \mathcal{B}^+(Q \setminus \alpha)$). For instance, the automaton of \Cref{fig:ex:aww:1} belongs to $\alw[3,\R]$.
\end{definition}

 \subsection{Translation of LTL to $\alw[2]$}
\label{subsec:LTLtoAWW2}
\newcommand{\Pairs}[1]{\mathit{An}({#1})}
\newcommand{\bsta}[1]{[{#1}]_{\leq \Gamma}}
\newcommand{\bstatwo}[2]{[{#1}]_{\leq {#2}}}
\newcommand{\sta}[2]{{#1}_{\scriptscriptstyle #2}}
\newcommand{\atom}[2]{\langle#1,#2\rangle}

In the standard translation of LTL to $\alw$, the states of the $\alw$ for a formula $\varphi$ are subformulas of $\varphi$  \cite{MullerSS86,Vardi96}.  We show that, at the price of a slightly more complicated translation, the resulting $\alw$ for a $\Delta_i$-formula belongs to $\alw[i]$. Thus, by the Normalization Theorem every LTL formula can be translated into an equivalent $\alw[2]$.
The idea of the construction is to use subformulas of the formulas as states, ensuring that 
\begin{enumerate}
	\item transitions can only lead from a subformula to itself or to another of a smaller in the syntactic-future hierarchy (\Cref{fig:syntactic-future}); and
	\item accepting states are subformulas of the $\Pi$-classes of the hierarchy.
\end{enumerate}
These two conditions immediately imply that the alternating automaton for a formula of $\Sigma_i$ has  alternation height $i$; indeed, every change of state
involves going down in the hierarchy. There are two small technical problems. First, at the bottom of the hierarchy we have $\Sigma_0=\Pi_0$, and so it is not well defined whether bottom formulas are accepting states or not. Fortunately, in our automata such states are not reachable, and so this question is irrelevant. Second, the hierarchy level of a formula is not always well-defined, because some formulas do not belong to one single lowest level of the hierarchy.  For example, the formula $\X a$ belongs to both $\Pi_1$ and $\Sigma_1$. To solve this problem the states of the automaton will be formulas marked with one of the classes they belong to. For example, the automaton for $\X a$ will contain two states $\sta{(\X a)}{\Sigma_1}$ and $\sta{(\X a)}{\Pi_1}$.

Formally, we proceed as follows. A formula $\varphi$ is \emph{proper} if it is neither a Boolean constant ($\true$ or $\false$) nor a conjunction or a disjunction. An \emph{atom} is a pair $\atom{\varphi}{\Gamma}$, where $\varphi$ is a proper formula, and $\Gamma$ is a smallest class of the syntactic-future hierarchy such that $\varphi \in \Gamma$. A marked formula is a positive Boolean combination of atoms (including $\true$ and $\false$). Given a formula $\varphi$ and a class $\Gamma$ such that $\varphi \in \Gamma$, we define the marked formula $\sta{\varphi}{\Gamma}$ as follows: 
$\sta{\true}{\Gamma} \coloneq \true$; $\sta{\false}{\Gamma} \coloneq \false$;
$\sta{(\varphi \vee \psi)}{\Gamma} \coloneq \sta{\varphi}{\Gamma} \vee \sta{\psi}{\Gamma}$; $\sta{(\varphi \wedge \psi)}{\Gamma} \coloneq \sta{\varphi}{\Gamma} \wedge \sta{\psi}{\Gamma}$; if $\varphi$ is proper, then $\sta{\varphi}{\Gamma}$ is the disjunction of all atoms $\atom{\varphi}{\Gamma_\varphi}$ such that $\Gamma_\varphi \subseteq \Gamma$ is a smallest class containing $\varphi$. Observe, in particular, that $\sta{\varphi}{\Gamma}=\sta{\varphi}{\Gamma'}$ for every $\Gamma' \supseteq \Gamma$.

\begin{example}
If $\varphi = \false \vee (\X a \wedge c)$, then $\sta{\varphi}{\Delta_1} = \false \vee (\sta{(\X a)}{\Delta_1} \wedge \sta{c}{\Delta_1}) = (\atom{\X a}{\Sigma_1} \vee \atom{(\X a)}{\Pi_1}) \wedge \atom{c}{\Delta_0}$.
\end{example}

\begin{definition}
\label{def:aawforltl}
Let $\varphi \in \Delta_i$ for some $i \geq 0$. We define the automaton $\mathcal{A}_\varphi = \langle 2^{Ap}, Q, \theta_0, \delta, \alpha \rangle$ as follows, where the symbol $\Gamma$ ranges over the classes of the syntactic hierarchy.
\begin{itemize}
\item $Q$ is the set of atoms $\atom{\psi}{\Gamma}$ where $\psi$ is a proper subformula of $\varphi$ and $\Gamma \subseteq \Delta_i$.
\item $\theta_0 = \sta{\varphi}{\Delta_i}$ (recall that $\theta_0$ is a positive Boolean combination of states).
\item $\alpha = \{ (\psi, \Gamma) \in Q \colon \Gamma = \Pi_j \mbox{ for some $j \geq 0$}\}$. 
\item $\delta \colon Q \times \Sigma \to \mathcal{B}^+(Q)$ is the restriction to $Q \times \Sigma$ of a function
$\delta$ that assigns to each marked formula $\sta{\varphi}{\Gamma}$ (notice that, abusing language, we also call this function $\delta$),
a positive Boolean combination of states:
\begin{align*}
\delta(\sta{\true}{\Gamma}, \sigma) &  = \true  &	 
\delta(\sta{\false}{\Gamma}, \sigma) &  = \false 
\\
\delta(\sta{a}{\Gamma}, \sigma) & =  \begin{cases}  \true & \mbox{if $a \in \sigma$} \\ \false & \mbox{otherwise} \end{cases} &	
\delta(\sta{\overline{a}}{\Gamma}, \sigma) & =  \begin{cases}  \false & \mbox{if $a \in \sigma$} \\ \true & \mbox{otherwise} \end{cases} \\
\delta(\sta{(\varphi \vee \psi)}{\Gamma}, \sigma) & =  \, \delta(\sta{\varphi}{\Gamma}, \sigma) \vee \delta(\sta{\psi}{\Gamma}, \sigma) &
\delta(\sta{(\varphi \wedge \psi)}{\Gamma}, \sigma) & =  \, \delta(\sta{\varphi}{\Gamma}, \sigma) \wedge \delta(\sta{\psi}{\Gamma}, \sigma)   
\\
\delta(\sta{(\X \psi)}{\Gamma}, \sigma) & =  \, \sta{\psi}{\Gamma} 
\\
\delta(\sta{(\varphi \U \psi)}{\Gamma}, \sigma)  & =  \,  \delta(\sta{\psi}{\Gamma}, \sigma) \vee \big(\delta(\sta{\varphi}{\Gamma}, \sigma) \wedge \sta{(\varphi \U \psi)}{\Gamma} \big) & 
\delta(\sta{(\varphi \R \psi)}{\Gamma}, \sigma)  & =  \, \delta(\sta{\psi}{\Gamma}, \sigma) \wedge \big(\delta(\sta{\varphi}{\Gamma}, \sigma) \vee \sta{(\varphi \R \psi)}{\Gamma}\big) 
\\
\delta(\sta{(\varphi \W \psi)}{\Gamma}, \sigma) &  =  \, \delta(\sta{\psi}{\Gamma}, \sigma) \vee \big(\delta(\sta{\varphi}{\Gamma}, \sigma) \wedge \sta{(\varphi \W \psi)}{\Gamma} \big)
&
\delta(\sta{(\varphi \M \psi)}{\Gamma}, \sigma) & =  \, \delta(\sta{\psi}{\Gamma}, \sigma) \wedge \big(\delta(\sta{\varphi}{\Gamma}, \sigma) \vee \sta{(\varphi \M \psi)}{\Gamma}\big) 
\end{align*}
\end{itemize}
\end{definition}

\begin{example}
Consider the formula $\varphi = a \U (\X\G (b \vee \X\F c)) \in \Sigma_3$. The automaton $\mathcal{A}_\varphi$ is the one shown in 
\Cref{ex:running}, with the correspondence $q_0 = \atom{\varphi}{\Delta_3}$, $q_1 = \atom{\G (b \vee \X\F c)}{\Pi_2}$, and 
$q_2 = \atom{\F c}{\Sigma_1}$. Let us compute $\delta(q_1, \sigma)$. We get:
\begin{align*}
\delta(q_1, \sigma) 
& = \delta(\atom{\G (b \vee \X\F c)}{\Pi_2}, \sigma) 
= \delta(\sta{(\G (b \vee \X\F c))}{\Pi_2}, \sigma)  
\\
&= \delta(\sta{(b \vee \X\F c)}{\Pi_2}, \sigma) \wedge \atom{\G (b \vee \X\F c)}{\Pi_2} \\
& = (\delta(\sta{b}{\Pi_2}, \sigma) \vee \atom{\F c}{\Sigma_1}) \wedge  \atom{\G (b \vee \X\F c)}{\Pi_2}
= \begin{cases} q_1 & \mbox{ if $b \in \sigma$} \\ q_1 \wedge q_2 & \mbox{otherwise} \end{cases}
\end{align*}
\end{example}

\begin{lemma}
\label{lem:aww:translation}
Let $\varphi$ be a formula of $\Delta_i$ with $n$ proper subformulas. The automaton $\mathcal{A}_\varphi$ belongs to $\alw[i]$, has $2n$ states, and recognizes $\lang(\varphi)$.
\end{lemma}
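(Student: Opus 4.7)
The plan is to prove the three claims of the lemma (language equivalence, state count, and very weak structure with alternation height at most $i$) separately.

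For \textbf{language equivalence} I would proceed by structural induction on subformulas $\psi$ of $\varphi$, establishing the invariant that for every word $w$ and every class $\Gamma$ containing $\psi$, $\mathcal{A}_\varphi$ has an accepting run from the marked formula $\sta{\psi}{\Gamma}$ on $w$ if and only if $w \models \psi$. The transition function is defined to mirror the classical LTL-to-AWW unfolding of the temporal operators (treating $\U, \M$ as least fixed points and $\W, \R$ as greatest fixed points, with $\X$ handled by a single-step successor), so the inductive step is routine once one observes that the $\Gamma$-annotation merely selects which ``copy'' of the standard construction is used and does not affect the semantic characterization of acceptance.

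For the \textbf{state count} I would argue that each proper subformula $\psi$ of $\varphi$ contributes at most two atoms to $Q$, namely $\atom{\psi}{\Sigma_j}$ and $\atom{\psi}{\Pi_j}$ for the appropriate level $j$. A case distinction on the outermost connective of $\psi$ confirms that its minimal class is always of the form $\Sigma_j$, $\Pi_j$, or the pair $\{\Sigma_j, \Pi_j\}$ (never $\Delta_j$ alone): the operators $\U, \M$ force $\Sigma_j$; $\W, \R$ dually force $\Pi_j$; and $\X$ inherits the minimal classes of its argument, except that if the argument's unique minimal class is $\Delta_j$ then $\X$ sends it into $\Sigma_{j+1} \cap \Pi_{j+1}$. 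Hence $|Q| \leq 2n$. For the \textbf{very weak structure}, I would exhibit the partition of $Q$ into singletons ordered lexicographically by the strict-subformula relation on the first component followed by set inclusion on the second, and verify by inspection of $\delta$ that every successor atom either coincides with the current one (the self-loops on $\U, \W, \R, \M$-formulas) or has a strictly smaller first component with $\Gamma' \subseteq \Gamma$.

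Finally, for the \textbf{alternation height} I would use that $\atom{\psi}{\Pi_j}$-atoms are accepting while $\atom{\psi}{\Sigma_j}$-atoms are not. The elementary containments $\Pi_j \supseteq \Sigma_{j-1}$ but $\Pi_j \not\supseteq \Sigma_j$ in general, together with their duals, show that any transition between an accepting and a non-accepting atom forces the level $j$ to strictly decrease. Since $\sta{\varphi}{\Delta_i}$ expands only to atoms of level at most $i$, every path in $\mathcal{A}_\varphi$ contains at most $i - 1$ alternations between $\alpha$ and $Q \setminus \alpha$, giving $\mathcal{A}_\varphi \in \alw[i]$. The hard part will be the case analysis ruling out $\Delta_j$ as the minimal class of a proper formula: one must verify that $\X$ applied to a formula whose minimal class is $\Delta_j$ lands in $\Sigma_{j+1} \cap \Pi_{j+1}$, which in turn follows from the non-closure of $\Delta_j$ under $\X$ and the closure of $\Sigma_{j+1}$ and $\Pi_{j+1}$ under Boolean combinations.
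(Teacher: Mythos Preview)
Your proposal is correct and follows essentially the same approach as the paper's proof, which is itself very terse: it observes that transitions satisfy $\Gamma' \subseteq \Gamma$, invokes this for the alternation bound, notes ``at most two smallest classes'' for the state count, and defers the language-equivalence argument to the classical construction in \cite{Vardi96}. You simply unpack each of these steps.

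One remark: the ``hard part'' you flag at the end is easier than you suggest. You do not need to chase the $\X$ case specifically. Since $\Delta_j$ (for $j \geq 1$) is by definition the closure of $\Sigma_j \cup \Pi_j$ under $\wedge$ and $\vee$ only, any \emph{proper} formula (literal or rooted at a temporal operator) lying in $\Delta_j$ must already lie in $\Sigma_j \cup \Pi_j$; hence $\Delta_j$ is never a minimal class for a proper formula. This one-line observation replaces your case analysis and directly yields the ``at most two atoms per proper subformula'' bound.
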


\begin{proof}
Let us first show that $\mathcal{A}_\varphi$ belongs to $\alw[i]$. It follows immediately from the definition of $\mathcal{A}_\varphi$  that 
for every two states $\atom{\psi}{\Gamma}, \atom{\psi'}{\Gamma'}$ of $\mathcal{A}_\varphi$, if 
$\atom{\psi}{\Gamma} \trans{} \atom{\psi'}{\Gamma'}$ then $\Gamma' \subseteq \Gamma$. So in every path
there are at most $(i-1)$ alternations between $\Sigma$ and $\Pi$ classes.  Since the states of $\alpha$ are those 
annotated with $\Pi$ classes, there are also at most $(i-1)$ alternations between $\alpha$ and non-$\alpha$ states in a path.

To show that $\mathcal{A}_\varphi$ has at most $2n$ states, observe that for every formula $\psi$
there are at most two smallest classes of the syntactic-future hierarchy containing $\psi$. So $\mathcal{A}_\varphi$ has at most
two states for each proper subformula of $\varphi$.

To prove that $\mathcal{A}_\varphi$ recognizes $\lang(\varphi)$ one shows by induction on $\psi$ that 
$\mathcal{A}_\psi$ recognizes $\lang(\psi)$ from every marked formula $\sta{\psi}{\Gamma}$. The proof is completely analogous to the one given in \cite{Vardi96}.
\end{proof}
 \subsection{Determinization of $\aww[2]$}
\label{subsec:det}

\newcommand{\Promising}{\mathit{Promising}}
\newcommand{\Levels}{\mathit{Levels}}

We present a determinization procedure for $\aww[2,\R]$ and $\aww[2,\A]$ inspired by the break-point construction from \cite{MiyanoH84}. More precisely, given an $\aww[2,\R]$, we construct an equivalent deterministic co-Büchi automaton, and given an $\aww[2,\A]$, we construct an equivalent deterministic Büchi automaton. We only describe the construction for $\aww[2,\R]$, as the one for $\aww[2,\A]$ is dual. 

The section is structured as follows. First, we introduce the notion of the level sequence of a run. Second, we present the fundamental property of $\aww[2,\R]$ that the procedure will exploit. Third,  we describe the procedure itself. Finally, we combine the procedures for $\aww[2,\R]$ and $\aww[2,\A]$ into a procedure that, given an arbitrary $\aww[2]$, constructs an equivalent deterministic Rabin automaton. 

\subsubsection{Level sequence of a run.} Let $\mathcal{A} = \langle \Sigma, Q, \theta_0, \delta, \alpha \rangle$ be an alternating  Büchi automaton.  A set $U \subseteq Q$ of states is called a \emph{level}. If $U \subseteq \alpha$, then $U$ is an \emph{$\alpha$-level}. Given two levels $U, U'$ and a letter $a \in \Sigma$, we say that $U'$ is a \emph{successor of $U$ w.r.t.\ $a \in \Sigma$}, also called an \emph{$a$-successor} of $U$, if for every $q \in U$ there is a minimal model $S_q$ of $\delta(q, a)$ such that $U' = \bigcup_{q \in U} S_q$. We make two observations:
\begin{itemize}
\item The empty set of states is a level, and moreover an $\alpha$-level for any $\alpha \subseteq Q$.  
The empty level has exactly one $a$-successor for every $a \in \Sigma$, namely the empty level itself.
\item A level $U$ has no $a$-successors if and only if it contains a state $q$ such that $\delta(q, a) = \false$. Indeed, if $\delta(q, a) = \false$ then there is no minimal model of $\delta(q, a)$. Conversely, if $\delta(q, a) \neq \false$ for every $q \in U$, then $\delta(q, a)$ has at least one minimal model $S_q$ for every state $q \in U$, and the set $U':=\bigcup_{q \in U} S_q$ is a (possibly empty) $a$-successor of $U$.
\end{itemize}

Recall that, given a run $G=(V, E)$ on a word $w$, we define for every $l \geq 0$ the set $Q_l := \{q \colon (q,l) \in V\}$. By the definitions of run and level, for every $l \geq 0$ either $\delta(q, a) = \false$ for some $q \in Q_l$, or $Q_{l+1}$ is a $w[l]$-successor of $Q_l$. We define the \emph{level sequence} of $G$ as a certain prefix of the infinite sequence $Q_0 \, Q_1 \, Q_2 \ldots$:
\begin{itemize}
\item  If there exists a smallest $l \geq 0$ such that $Q_l$ has no $w[l]$-successor, then the level sequence is the finite sequence $Q_0 \, Q_1 \cdots Q_l$. 
\item Otherwise, the level sequence of $G$ is the infinite sequence $Q_0 \, Q_1 \, Q_2 \cdots$ itself. 
\end{itemize}

\begin{example}
\label{ex:levels}
Let us compute the level sequences of the four runs of Figure \ref{fig:runs}. 
\begin{itemize}
\item First run: $\{q_0\} \, \{q_1\} \, \{q_1, q_2\}^\omega$. Since $\alpha = \{q_1\}$, the only $\alpha$-level  of the level sequence is $\{q_1\}$.
\item Second run: $\{q_0\} \, \{ q_0\} \, \{q_0\}$.  The level sequence ends at the second level because $\delta(q_0, \{c\}) = \false$.
\item Third run:  $\{q_0\} \, \{q_1\} \, \{q_1, q_2\}^\omega$. Observe that although the first and third runs have the same sequence of levels, the first run is accepting but the second one is not.
\item Fourth run: $\{q_0\} \, \{q_0\} \, \emptyset^\omega$. Indeed, $\emptyset$ is a $c$-successor of $\{q_0\}$ because $\delta(q_0, c)= \true$, and so $\emptyset$ is a minimal model of $\delta(q_0, c)$. While the run itself is finite, its level sequence is infinite.
\end{itemize}
\end{example}

\subsubsection{Fundamental property of $\aww[2,\R]$} 
The first and third runs of \Cref{ex:levels} have the same level sequence, but one is accepting and the other is not. Therefore, the level sequence of a run does not determine in general whether the run is accepting. However, the runs of \Cref{ex:levels} are runs of an automaton in $\aww[3,\R]$. The following lemma shows that for $\aww[2,\R]$ the level sequence does determine acceptance; in other words, in order to decide whether a run is accepting or not it suffices to examine its level sequence:

\begin{lemma}\label{lem:rundag:normalform}
Let $\mathcal{A} = \langle \Sigma, Q, \theta_0, \delta, \alpha \rangle$ be an $\aww[2,\R]$. A run $G = (V, E)$ of $\mathcal{A}$ on a word $w$ is accepting if{}f  
\begin{itemize}
\item[(a)] the level sequence of $G$ is the infinite sequence $Q_0 \, Q_1 \, Q_2 \cdots$, and
\item[(b)] there is a threshold $k \geq 0$ such that  $Q_l$ is an $\alpha$-level for every $l \geq k$. 
\end{itemize}
\end{lemma}
\begin{proof}
We first observe that the level sequence of $G$ satisfies the following property for every $l \geq 0$: If the sets $Q_l$ and $Q_{l+1}$ belong to the sequence, then $Q_{l+1}$ is a $w[l]$-successor of $Q_l$. 
Indeed, if $Q_{l+1}$ belongs to the level sequence, then $\delta(q, a) \neq \false$ for every $q \in Q_l$ and so, by the definition of a run, the set $Q_{l+1}$ is a $w[l]$-successor of $Q_l$. Now we prove the two directions of the lemma.

\medskip\noindent ($\Rightarrow$):  Let $G=(V, E)$ be an accepting run of $\mathcal{A}$ on $w$. By the definition of acceptance, we have $\delta(q, w[l]) \neq \false$ for every $l \geq 0$ and $q \in Q_l$, and so, by the observation above, $Q_{l+1}$ is a $w[l]$-successor of $Q_l$. By the definition of the level sequence of a run,  the level sequence of $G$ is infinite. This proves (a).

Let us now prove (b). Say that a node $(q, l) \in V$ is accepting if $q \in \alpha$, and rejecting otherwise. We prove two claims.

\smallskip\noindent Claim 1: All descendants of an accepting node of $G$ are also accepting. \\
Since  $\mathcal{A}$ is an $\aww[2,\R]$, the initial formula $\theta_0$ of $\mathcal{A}$ satisfies $\theta_0 \in \mathcal{B}^+(Q \setminus \alpha)$ by \Cref{def:awwn}, and so every node $(q, 0)$ of $G$ is rejecting. So every node of $G$ is a descendant of a rejecting node. Since $\mathcal{A}$ has alternation height 2, every path of $G$ has at most \emph{one} alternation of accepting and rejecting nodes. Therefore, no descendant of an accepting node is rejecting. 

\smallskip\noindent Claim 2:  The set $R \subseteq V$ of rejecting nodes of $V$ is finite. \\
 Assume that $R$ is infinite. Let $G_R = (R, E \cap (R\times R))$, and let $(q, l) \in R$. If $l \geq 1$, then by the definition of a run, $(q, l)$ has at least one predecessor in $G$. By Claim 1, all predecessors of a rejecting node are rejecting, and so $(q, l)$ also has at least one predecessor in $G_R$. It follows that the only nodes of $G_R$ without predecessors are those of the form $(q, 0)$, and so finitely many. Since $R$ is infinite and $G_R$ is finitely branching, $G_R$ contains an infinite path by König's lemma. By the definition of $G_R$, this path only contains non-accepting nodes, contradicting that $G$ is an accepting run. This proves the claim.

\smallskip\noindent By Claim 2, there exists a threshold $k$ such that $k \geq m$ for every $(q, m) \in R$. It follows that $(q, l) \in V$ is an accepting node for every $l \geq k$, and so that  $Q_l$ is an $\alpha$-level for every $l \geq k$. 

\medskip\noindent ($\Leftarrow$):  We prove this direction for arbitrary alternating automata. Let $G=(V, E)$ be a run satisfying (a) and (b). We show that it also satisfies the two properties of an accepting run:
\begin{itemize}
\item $\delta(q, w[l]) \not \equiv \false$ for every $l \geq 0$ and every $q \in Q_l$. \\
By (a), the level sequence of $G$ is infinite, and so, by the definition of the level sequence of a run, we are done.
\item  Every infinite path of $G$ contains infinitely many accepting nodes. \\
Every infinite path of $G$ contains exactly one node of the form $(q, l)$ for every $l \geq 0$. By (b), the node $(q, l)$ of the path is accepting for every $l \geq k$. 
\end{itemize}
\end{proof}

\subsubsection{A determinization procedure for $\aww[2,\R]$.} 

Let $\mathcal{A}$ be an $\aww[2,\R]$ with set of states $Q$ and $|Q|=n$. We construct an equivalent deterministic co-Büchi automaton $\mathcal{D}$ whose states  are pairs $(\Levels, \Promising)$, where $\Levels \subseteq 2^Q$ (recall that a level of an $\aww$ is a set of states) and $\Promising \subseteq 2^\alpha \cap \Levels$. Since there exist $3^{2^n}$ pairs satisfying these conditions, 
$\mathcal{D}$ has at most $3^{2^n}$ states. 

Let us first give an informal but hopefully intuitive description of the transitions and acceptance condition of the deterministic co-Büchi automaton. The transitions of $\mathcal{D}$ are chosen to ensure that, after reading a finite word $w_{0i} = a_0 \ldots a_i$, the automaton is in the state $(\Levels_i$, $\Promising_i)$, where 
\begin{itemize}
\item $\Levels_i$ contains the $i$-th levels of all runs of $\mathcal{A}$ on all words having $w_{0i}$ as prefix (when they exist); and 
\item $\Promising_i \subseteq \Levels_i$ contains some $\alpha$-levels of $\Levels_i$. These levels are \enquote{promising}, intuitively meaning that they could belong to the infinite tail of $\alpha$-levels of the level sequence of an accepting run. 
\end{itemize}
For this, when $\mathcal{D}$ reads $a_{i+1}$, it moves from $(\Levels_i$, $\Promising_i)$ to $(\Levels_{i+1}$, $\Promising_{i+1})$, where $\Levels_{i+1}$ contains all the $a_{i+1}$-successors of the levels of $\Levels_i$, and $\Promising_{i+1}$ is defined as follows:
\begin{itemize}
\item If $\Promising_i \neq \emptyset$, then $\Promising_{i+1}$ contains the $a_{i+1}$-successors of $\Promising_i$. (Recall that $\mathcal{A}$ is an $\aww[2,\R]$, and so if $\Promising_i$ contains $\alpha$-levels, then so does  $\Promising_{i+1}$.)
\item If $\Promising_i = \emptyset$, then $\Promising_{i+1}$ contains all $\alpha$-levels of $\Levels_{i+1}$.
\end{itemize}

\noindent Finally, the co-Büchi condition contains the states $(\Levels$, $\Promising)$ such that $\Promising=\emptyset$.

Intuitively, during its run on a word $w$, the automaton $\mathcal{D}$ tracks the promising levels, removing those without successors, because they can not belong to an accepting run. If some run $G$ of $\mathcal{A}$ accepts $w$, then by
\Cref{lem:rundag:normalform} the level sequence $Q_0 \, Q_1 \, Q_2 \cdots$ of $G$ is infinite, and there is $k \geq 0$ such that the levels $Q_k , Q_{k+1} ,Q_{k+2} \cdots$ are all promising. So the sets $\Promising_k, \Promising_{k+1}, \Promising_{k+2} \cdots $ are all nonempty, and $\mathcal{D}$ accepts. Conversely, assume there is $k$ such that  $\Promising_k, \Promising_{k+1}, \Promising_{k+2} \cdots $ are all nonempty.  Then we can construct a run of $\mathcal{A}$ on $w$ by picking a sequence of levels $U_k \, U_{k+1} \, U_{k+2} \, \cdots$ such that $U_k \in \Promising_k$ and $U_l$ is a $w[l]$-successor of $U_{l-1}$ for every $l > k$, which belong by definition to $\Promising_l$, and then picking $U_0 \, U_1 \cdots U_{k-1}$ such that $U_l \in \Levels_l$ for every $0 \leq l \leq k-1$. The level sequence of this run is infinite, and from $U_k$ onwards it only contains $\alpha$-levels, which implies that the run is accepting. 

\begin{lemma}\label{lem:seqkonig}
For every $\mathcal{A} \in \aww[2,\R]$ and infinite word $w$ on $\Sigma$, if there is $k \geq 0$ such that $\Promising_i \neq \emptyset$ for all $i \geq k$, there is an infinite sequence $U_k \, U_{k+1} \cdots U_i \, U_{i+1} \cdots$ such that $U_i \in \Promising_i$ and $U_{i+1}$ is a $w[i+1]$-successor of $U_i$ for all $i \geq k$.
\end{lemma}

\begin{proof}
Let $F = (V, E)$ be the graph with vertices $V = \bigcup_{i \geq k} \{(U_i, i) \mid U_i \in \Promising_i \}$ and edges $E = \bigcup_{i \geq k} \{((U, i),\allowbreak (U', i + 1)) \mid U \in \Promising_i \text{ and $U'$ is a $w[i]$-successor of $U$}\}$. This is well-defined since such a $U'$ belongs to $\Promising_{i+1}$ by definition. $V$ is infinite because $\Promising_i \neq \emptyset$ for all $i \geq k$, and the degree of each vertex is finite since $|\Promising_i| \leq 2^{|Q|}$. Moreover, $F$ is acyclic, because the second entry of the pair is monotonically increasing, and consists of as many connected components as $U_k \in \Promising_k$. Indeed, $U_{i+1} \in \Promising_{i+1}$ is a $w[i+1]$-successor of some $U_i \in \Promising_i$ by definition of the $\Promising$ sets, so inductively, we can go back to a set $U_k \in \Promising_k$. Hence, $F$ is the union of $|\Promising_k| \leq 2^{|Q|}$ trees, and at least one of them must be infinite because $V$ is.

Let $T$ be an infinite tree of $F$. Since it is both finite and finitary, by the König lemma, there is an infinite branch $(U_k, k), (U_{k+1}, k+1), \ldots$. We can then extract the infinite sequence $U_k \, U_{k+1} \cdots$ satisfying the conditions in the statement of the lemma.
\end{proof}

For the formal definition of $\mathcal{D}$ it is convenient to identify subsets of ${2^Q}$ and $2^\alpha$ with formulas of $\mathcal{B}^+(Q)$, $\mathcal{B}^+(\alpha)$ (i.e., we identify a  formula and its set of models). Further, we lift $\delta \colon Q \times \Sigma \mapsto \mathcal{B}^+(Q)$ to $\delta \colon \mathcal{B}^+(Q) \times \Sigma \mapsto \mathcal{B}^+(Q)$ in the canonical way. Finally, given $\varphi \in \mathcal{B}^+(Q)$ and $S \subseteq Q$, we let 
$\varphi[\false / S]$ denote the result of substituting $\false$ for every state of $S$ in $\varphi$. With these notations, the deterministic Büchi automaton $\mathcal{D}$ equivalent to $\mathcal{A}$ can be described in four lines: $\mathcal{D} = \langle \Sigma, Q', q_0', \delta', \alpha' \rangle$,  where $Q' = \mathcal{B}^+(Q) \times \mathcal{B}^+(\alpha)$, $q_0' = (\theta_0, \false)$, $\alpha' = \{(\theta, \false) \colon \theta \in \mathcal{B}^+(Q) \}$, and
\[\delta'((q, p), a) = \begin{cases}
	(\delta(q, a), \delta(p, a)) &  \text{if $p \not \equiv \false$} \\
	(\delta(q, a), \delta(q, a)[\false / Q \setminus \alpha]) & \text{otherwise.}
\end{cases}\]
\begin{lemma}\label{lem:det:1}
For every $\mathcal{A} \in \aww[2,\R]$ with $n$ states, the deterministic co-Büchi automaton $\mathcal{D}$ defined above satisfies $L(\mathcal{A}) = L(\mathcal{D})$, and has $3^{2^n}$ states. Dually, for every $\mathcal{A}' \in \aww[2,\A]$ with $n'$ states, there exists a deterministic Büchi automaton $\mathcal{D}'$ that has $3^{2^{n'}}$ states and that satisfies $L(\mathcal{A}') = L(\mathcal{D}')$.
\end{lemma}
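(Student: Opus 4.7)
The plan is to separate the two claims---the size bound and the language equivalence $L(\mathcal{A}) = L(\mathcal{D})$---and to derive the $\aww[2,\A]$ case at the end by a symmetric argument. For the size bound, I would identify each $\theta \in \mathcal{B}^+(Q)$ and $\phi \in \mathcal{B}^+(\alpha)$ with their sets of minimal models. A reachable state of $\mathcal{D}$ then corresponds to a pair of antichains in $2^Q$ and $2^\alpha$ respectively, and for each $U \subseteq Q$ there are at most three mutually exclusive possibilities: $U$ is not a minimal model of $\theta$; $U$ is a minimal model of $\theta$ but not of $\phi$; or $U$ is a minimal model of both (which forces $U \subseteq \alpha$). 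Multiplying over the $2^n$ subsets of $Q$ yields the claimed bound $3^{2^n}$.

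For the language equivalence I would maintain the invariant that, after reading $w_{0l}$, the models of $\theta_l$ are exactly the sets arising as the $l$-th level of some partial run of $\mathcal{A}$ on $w_{0l}$, while the models of $\phi_l$ are those $l$-th levels that lie entirely in $\alpha$ and are witnessed by a partial run whose levels have remained in $\alpha$ since the most recent \emph{breakpoint}---the last step $m \leq l$ at which $\phi_{m-1} \equiv \false$. The invariant is proved by induction on $l$: the recurrence $\theta_{l+1} = \delta(\theta_l, w[l])$ exactly reflects the passage from a level to any of its successor levels, and the two cases of $\delta'$ either propagate an existing $\alpha$-only subrun via $\delta(\phi_l, w[l])$, or, when $\phi_l \equiv \false$, reinitialize the tracker using $\delta(\theta_l, w[l])[\false / Q \setminus \alpha]$, which is precisely the disjunction of $\alpha$-subsets still forming valid successor levels.

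With the invariant in hand, both inclusions follow using \Cref{lem:rundag:normalform}. For $L(\mathcal{A}) \subseteq L(\mathcal{D})$, an accepting run of $\mathcal{A}$ eventually stays in $\alpha$, and from that threshold on it witnesses every $\phi_l$, keeping $\phi_l \not\equiv \false$, so $\mathcal{D}$ visits its co-Büchi states only finitely often. For $L(\mathcal{D}) \subseteq L(\mathcal{A})$, suppose $\phi_l \not\equiv \false$ for all $l \geq k$. I would apply König's lemma to the finitely branching tree whose nodes are consistent sequences $(S_k, S_{k+1}, \ldots, S_l)$, where each $S_j \subseteq \alpha$ is a model of $\phi_j$ and each $S_{j+1}$ is obtained as the union of minimal models of $\delta(q, w[j])$ over $q \in S_j$. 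The resulting infinite branch yields an $\alpha$-only infinite tail, which I attach to a run prefix of length $k$ obtained by unwinding the invariant backwards from $S_k$ down to a minimal model of $\theta_0$. Combining these pieces produces a run matching the normal form of \Cref{lem:rundag:normalform}, hence an accepting one.

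The step I expect to be the main obstacle is the gluing in this last direction: the $k$-th level of the backwards-constructed prefix need not literally equal $S_k$, so I must argue that any $\alpha$-subset of $S_k$ still admits an $\alpha$-only König tail---which follows from the monotonicity of the positive Boolean transition function, but requires care in choosing the witnessing minimal models. The dual statement for $\aww[2,\A]$ is then obtained by swapping the roles of $\alpha$ and $Q \setminus \alpha$ and of accept and reject: initial levels are already in $\alpha$, rejecting runs are those that eventually leak into $Q \setminus \alpha$, and the analogous breakpoint construction with reversed polarity yields a deterministic Büchi (rather than co-Büchi) automaton with at most $3^{2^{n'}}$ states.
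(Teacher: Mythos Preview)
Your approach is essentially the paper's: both directions of $L(\mathcal{A}) = L(\mathcal{D})$ go through \Cref{lem:rundag:normalform}, and the dual case is obtained by complementation. Your explicit use of K\"onig's lemma for $L(\mathcal{D}) \subseteq L(\mathcal{A})$ is in fact more careful than the paper, which simply writes ``for every $i \geq k$ choose $U_{i+1}$ as a successor of $U_i$'' without justifying that the particular $U_i$ chosen has a successor; your tree argument together with the monotonicity-based gluing you flag fills precisely this gap.

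There is one small error in your size argument. The three-way split ``$U$ is not a minimal model of $\theta$; minimal model of $\theta$ but not of $\phi$; minimal model of both'' is not exhaustive for reachable states: a minimal model of $\phi_l$ need not be a minimal model of $\theta_l$. For instance, if at a breakpoint $\theta_m = q_1 \vee q_2$ with $q_1 \notin \alpha$, $q_2 \in \alpha$, and $\delta(q_1,a)=q_3$, $\delta(q_2,a)=q_3 \wedge q_4$, then $\{q_3,q_4\}$ is a minimal model of $\phi_{m+1}$ but not of $\theta_{m+1} \equiv q_3$. The bound $3^{2^n}$ still holds, but the split should be phrased in terms of \emph{models} rather than minimal models, using the invariant $\phi_l \models \theta_l$ (which does hold for every reachable state, by monotonicity of substitution): for each $U \subseteq Q$ either $U \not\models \theta$, or $U \models \theta$ and $U \not\models \phi$, or $U \models \phi$ (hence also $\theta$).
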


\begin{proof}
Assume $\mathcal{A}$ accepts a word $w$.
Let $G = (V, E)$ be an accepting run of $\mathcal{A}$ on $w$. Since $\mathcal{A} \in \aww[2,\R]$, we can apply 
\Cref{lem:rundag:normalform}, and so there exists an index $k$ such that all levels of $G$ after the $k$-th one are $\alpha$-levels and have at least one successor.  It follows that the run $(\Levels_0, \Promising_0), (\Levels_1, \Promising_1) \ldots$ of $\mathcal{D}$ on $w$ satisfies $\Promising_l\neq \emptyset$ for every $l \geq k$, and so $\mathcal{D}$ accepts.

Assume $\mathcal{D}$ accepts a word $w$. Let $(\Levels_0$, $\Promising_0)$, $(\Levels_1$, $\Promising_1) \ldots$ be the run of $\mathcal{D}$ on $w$. By the definition of the acceptance condition of $\mathcal{D}$, there is $k \geq 0$ such that $\Promising_i \neq \emptyset$ for every $i \geq k$. By \cref{lem:seqkonig}, there is an infinite sequence $U_k \, U_{k+1} \cdots U_{i} \cdots$ such that $U_i \in \Promising_i$ and $U_{i+1}$ is a $w[i+1]$-successor of $U_i$ for every $i \geq k$. Then, choose levels $U_0 \, U_1 \cdots U_{k-1}$ such that for every $1 \leq l \leq k$, the level $U_{l-1}$ is a predecessor of $U_l$ that belongs to $\Levels_{l-1}$ (that is, $U_l$ is a $w[l]$-successor of $U_{l-1}$; this is always possible by the definition of $\delta'$).
Let $G=(V, E)$ be the graph given by
\begin{itemize}
\item for every $l \geq 0$, $(q, l) \in V$ if{}f $q \in U_l$; and
\item $((q,l), (q', l+1)) \in E$ if{}f $q \in U_l$ and $q' \in S_q$, where $S_q$ is a minimal
model of $\delta(q, w[l])$ satisfying $U_{l+1} = \bigcup_{q \in U_l} S_q$, which exists by definition.
\end{itemize}
Since $U_k \in \Promising_k$, and promising sets only contain $\alpha$-levels, $U_k$ is an $\alpha$-level. Since
$\mathcal{A} \in \aww[2,\R]$, successors of $\alpha$-levels are also $\alpha$-levels, and so all of $U_k, U_{k+1}, U_{k+2}, \ldots$ are $\alpha$-levels. By \Cref{lem:rundag:normalform}, $G$ is an accepting run of $\mathcal{A}$.

The dual part of the lemma is proven by complementing $\mathcal{A}'$, applying the construction we have just described, and replacing the co-Büchi condition by a Büchi condition.
\end{proof}

\subsubsection{A determinization procedure for $\aww[2]$.} 

\noindent \Cref{lem:det:1} leads to a determinization procedure for $\aww[2]$. Combining the procedures for $\aww[2,\R]$
and $\aww[2,\A]$, we show how to construct for every automaton in $\aww[2]$ an equivalent deterministic Rabin automata.

Rabin automata generalize both Büchi and co-Büchi  automata. 
A Rabin automaton over an alphabet $\Sigma$ is a tuple $\mathcal{A} = \langle \Sigma, Q, q_0, \delta, \alpha \rangle$, where $\Sigma, Q, q_0, \delta$ are defined as for Büchi or co-Büchi automata, and $\alpha$ is a set of \emph{Rabin pairs} $(F, I) \subseteq Q \times Q$. Deterministic Rabin automata are defined as for Büchi or co-Büchi automata (the definitions do not involve the accepting condition $\alpha$). The same applies to the runs of a Rabin automaton. The only new definition is that of an accepting run. A run $G=(V,E)$ of a Rabin automaton is \emph{accepting} if
\begin{itemize}
\item[(a)] $\delta(q, w[l]) \not\equiv \false$ for every $q \in Q_l$, and
\item[(b)] for every infinite path of $G$ there exists a Rabin pair $(F,I) \in \alpha$ such that the path visits infinitely often nodes $(q, l) \in V$ such that $q \in I$, and only finitely often nodes $(q, l) \in V$ such that $q \in F$.
\end{itemize}
We abbreviate deterministic Rabin automaton to DRW.

\begin{lemma}\label{lem:det:3}
For every $\mathcal{A} = \langle \Sigma, Q, \theta_0, \delta, \alpha \rangle \in \aww[2]$ with $n = |Q|$ states and $m = |\mathcal{M}_{\theta_0}|$ minimal models of $\theta_0$ there exists an equivalent DRW $\mathcal{D}$ with $2^{2^{n + \log_2 m + 2}}$ states and with $m$ Rabin pairs.
\end{lemma}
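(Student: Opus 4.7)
The plan is to reduce the general case to the two restricted cases already solved in \Cref{lem:det:1} by decomposing the initial formula $\theta_0$ into its minimal models. Since $\theta_0 \equiv \bigvee_{S \in \mathcal{M}_{\theta_0}} \bigwedge_{q \in S} q$, a word $w$ is accepted by $\mathcal{A}$ iff there is some $S \in \mathcal{M}_{\theta_0}$ such that $w$ is accepted by the variant $\mathcal{A}_S$ of $\mathcal{A}$ obtained by replacing the initial formula with $\bigwedge_{q \in S} q$. For each such $S$ I would split $S = A_S \uplus R_S$ with $A_S := S \cap \alpha$ and $R_S := S \setminus \alpha$, and associate two sub-automata: $\mathcal{A}_S^{\A}$ with initial formula $\bigwedge_{q \in A_S} q \in \mathcal{B}^+(\alpha)$, which lies in $\aww[2,\A]$; and $\mathcal{A}_S^{\R}$ with initial formula $\bigwedge_{q \in R_S} q \in \mathcal{B}^+(Q \setminus \alpha)$, which lies in $\aww[2,\R]$ (with the convention that an empty conjunction is $\true$).

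The key semantic observation I would establish is $L(\mathcal{A}_S) = L(\mathcal{A}_S^{\A}) \cap L(\mathcal{A}_S^{\R})$: an accepting run of $\mathcal{A}_S$ on $w$ is nothing but a forest of sub-runs, one rooted at each $q \in S$, and each such sub-run is by itself an accepting run of either $\mathcal{A}_S^{\A}$ (when $q \in A_S$) or $\mathcal{A}_S^{\R}$ (when $q \in R_S$); conversely, any pair of accepting runs, one for each sub-automaton, can be combined into an accepting run of $\mathcal{A}_S$ since the two sub-automata share transition function and acceptance set. This is the step that requires the most care and is the main technical obstacle: I would use that the alternation height 2 guarantees that sub-runs from $A_S \subseteq \alpha$ and from $R_S \subseteq Q \setminus \alpha$ evolve independently in the weak partition, so the global Büchi condition decomposes into two separate conditions along these two sub-forests.

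Next I would apply \Cref{lem:det:1} to each sub-automaton, obtaining a deterministic Büchi automaton $\mathcal{D}_S^{\A}$ with $F_S \subseteq Q_S^{\A}$ as accepting set, and a deterministic co-Büchi automaton $\mathcal{D}_S^{\R}$ with rejecting set $R_S' \subseteq Q_S^{\R}$; each has at most $3^{2^n}$ states. Let $\mathcal{D}$ be the product of all $2m$ deterministic automata, equipped with the Rabin condition $\{(R_S', F_S) : S \in \mathcal{M}_{\theta_0}\}$, interpreted on the product state space by projecting on the relevant component. Then $\mathcal{D}$ has $m$ Rabin pairs, and by the observation above it accepts $w$ iff some $(R_S', F_S)$ is satisfied iff some $\mathcal{A}_S$ accepts $w$ iff $\mathcal{A}$ accepts $w$.

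For the size bound, the product has at most $(3^{2^n})^{2m} = 3^{2m \cdot 2^n}$ states, and using $\log_2 3 < 2$ this is bounded by $2^{4m \cdot 2^n} = 2^{2^{n + \log_2 m + 2}}$, which matches the stated bound. This completes the proof up to the verification of the semantic decomposition, which I would spell out carefully by following the definition of accepting runs in \Cref{subsec:aww} and invoking \Cref{lem:rundag:normalform} on each sub-automaton to justify that the acceptance criteria of $\mathcal{A}_S^{\A}$ and $\mathcal{A}_S^{\R}$ coincide with the respective restrictions of the Büchi condition of $\mathcal{A}_S$.
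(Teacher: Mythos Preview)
Your proposal is correct and follows essentially the same approach as the paper: decompose over the minimal models of $\theta_0$, split each model $S$ into $S\cap\alpha$ and $S\setminus\alpha$, apply \Cref{lem:det:1} to each half, and combine the resulting deterministic Büchi and co-Büchi automata into a single deterministic Rabin automaton via a product with one Rabin pair per model. The only cosmetic difference is that the paper first builds an intermediate $\mathcal{D}_S$ for each $S$ and then takes the union, whereas you take the product of all $2m$ automata at once; the arithmetic and the resulting bound are identical.

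One small remark: you flag the identity $L(\mathcal{A}_S) = L(\mathcal{A}_S^{\A}) \cap L(\mathcal{A}_S^{\R})$ as the main obstacle and attribute it to the alternation-height-2 structure, but this decomposition is a general fact about alternating Büchi automata with a conjunctive initial formula and does not depend on the height (the paper uses it without comment). The height-2 assumption is only needed so that $\mathcal{A}_S^{\A}$ and $\mathcal{A}_S^{\R}$ land in $\aww[2,\A]$ and $\aww[2,\R]$ respectively, which is what makes \Cref{lem:det:1} applicable; you do not need \Cref{lem:rundag:normalform} here.
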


\begin{proof}
Let $\mathcal{A}=  \langle \Sigma, Q, \theta_0, \delta, \alpha \rangle$. Given $Q' \subseteq Q$, let $\mathcal{A}_{Q'}$ be the AWW[2] obtaining from $\mathcal{A}$ by substituting $\bigwedge_{q \in Q'} q$ for the initial formula $\theta_0$.  We claim that for each minimal model $S \in \mathcal{M}_{\theta_0}$ we can construct a deterministic Rabin automaton (DRW) $\mathcal{D}_S$ with at most $2^{2^{n+2}}$ states and a single Rabin pair, recognizing the same language as $\mathcal{A}_{S}$. Let us first see how to construct $\mathcal{D}$, assuming the claim holds. By the claim we have $\lang(\mathcal{A}) = \bigcup_{S \in \mathcal{M}_{\theta_0}} \lang(\mathcal{A}_{S})$.
So we define $\mathcal{D}$ as the union of all the automata $\mathcal{D}_S$. 
Recall that given two  DRWs with $n_1, n_2$ states and $p_1, p_2$ Rabin pairs we can construct a DRW for the union of their languages with $n_1 \times n_2$ states and $p_1 + p_2$ pairs. Since $\theta_0$ has $m$ models, $\mathcal{D}$ has at most $m$ Rabin pairs and $\left(2^{2^{n+2}}\right)^m = 2^{2^{n + \log_2 m + 2}}$ states.

It remains to prove the claim. Partition $S$ into $ S \cap \alpha$ and $S \setminus \alpha$. We have $\mathcal{A}_{S \cap \alpha} \in \aww[2,\A]$ and $\mathcal{A}_{S \setminus \alpha} \in \aww[2,\R]$. By \Cref{lem:det:1} there exists a deterministic Büchi automaton $\mathcal{D}_{S \cap \alpha}$ and a deterministic co-Büchi automaton $\mathcal{D}_{S \setminus \alpha}$ equivalent to  $\mathcal{A}_{S \cap \alpha}$ and
$\mathcal{A}_{S \setminus \alpha}$, respectively, both with at most $3^{2^n}$ states. Intersecting these two automata yields a deterministic Rabin automaton with at most $3^{2^{n+1}} \leq 2^{2^{n+2}}$ states and a single Rabin pair, and we are done.
\end{proof}

\begin{remark}
The construction of \Cref{lem:det:1} is close to Miyano and Hayashi's translation of alternating automata to non-de\-ter\-min\-ist\-ic automata \cite{MiyanoH84}, and to Schneider's
translation of $\Sigma_2$ formulas to deterministic co-Büchi automata \cite[p.219]{DBLP:series/txtcs/Schneider04}, all based on the break-point idea.
\end{remark}

\subsection{Translation of LTL to DRW}

Combining a normalization procedure LTL$\rightarrow\Delta_2$, the procedure $\Delta_2\rightarrow\alw[2]$ of \Cref{subsec:LTLtoAWW2} and the determinization procedure of 
\Cref{subsec:det}, we obtain a translation  LTL$\rightarrow$DRW.  Since the procedures involve a single-exponential, 
a linear, and a double-exponential blow-up, respectively, a straightforward composition only yields a triple-exponential bound. However, a closer examination of the closed-form $\Delta_2$-formula provided by \cref{thm:normthm} allows us to reduce the bound to double-exponential. 

Given a formula $\varphi$, by \cref{thm:normthm} we have 
\begin{align}
\label{eq:norm}
\varphi \equiv \bigvee_{\substack{\setmu \subseteq \mubasis\\\setnu \subseteq \nubasis}} \varphi_{\setmu,\setnu}
\quad \text{ where } \quad
 \varphi_{\setmu,\setnu} \coloneqq \bigg( \flatten{\varphi}{M} \wedge \bigwedge_{\F\G \psi \in N} \F\G (\eval{\psi}{M}) \wedge \bigwedge_{\G\F \psi \in M} \G\F (\evalgf{\psi}{N})  \bigg)\ .
 \end{align}
The key result is that each $\varphi_{\setmu,\setnu}$ has a small number of different proper subformulas. (In fact, the number is even linear in the number of subformulas of $\varphi$.) Invoking \cref{lem:aww:translation} we obtain an
$\alw[2]$ with $O(|\subf(\varphi)|)$ states that recognizes $\lang(\varphi_{\setmu,\setnu})$.

\begin{restatable}{lemma}{lemSizeSf}\label{lem:size:sf}
Let $\varphi$ be a formula. For every $M \subseteq \mubasis$ and $N \subseteq \nubasis$, there exists an $\alw[2]$ with $O(|\subf(\varphi)|)$ states that recognizes $\lang(\varphi_{\setmu,\setnu})$.
\end{restatable}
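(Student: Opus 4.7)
The plan is to apply \cref{lem:aww:translation} to $\varphi_{\setmu,\setnu}$ and argue that this formula, despite being exponentially larger in the worst case than $\varphi$, has only $O(|\subf(\varphi)|)$ proper subformulas. Since $\varphi_{\setmu,\setnu} \in \Delta_2$ by \cref{thm:normthm}, the lemma then immediately produces an $\alw[2]$ of the required size.

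First I would catalogue the distinct subformulas that can appear in $\varphi_{\setmu,\setnu}$. Inspecting the construction in \cref{eq:norm}, the top-level formula is a conjunction of $\flatten{\varphi}{M}$ with $\F\G$-formulas drawn from $\{\F\G(\eval{\psi}{M}) : \F\G\psi \in N\}$ and $\G\F$-formulas drawn from $\{\G\F(\evalgf{\psi}{N}) : \G\F\psi \in M\}$. The crucial observation is that the operators $\eval{\cdot}{M}$, $\evalgf{\cdot}{N}$, and $\flatten{\cdot}{M}$ are defined by structural recursion over their argument: from \cref{def:evalnu,def:evalmu,def:evalmunu}, for each subformula $\chi$ of the argument, the recursion produces at most one (or two, counting auxiliary $\G(\eval{\cdot}{M})$ occurrences from \cref{def:evalmunu}) new subformula. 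Hence the set of proper subformulas of $\flatten{\varphi}{M}$ is contained in $\{\flatten{\chi}{M} : \chi \in \subf(\varphi)\} \cup \{\G(\eval{\chi}{M}) : \chi \in \subf(\varphi)\} \cup \{\eval{\chi}{M} : \chi \in \subf(\varphi)\}$, which has size $O(|\subf(\varphi)|)$. The same analysis shows that the $\F\G$- and $\G\F$-conjuncts contribute subformulas of the form $\eval{\chi}{M}$ and $\evalgf{\chi}{N}$ for $\chi$ ranging over subformulas of basis formulas of $\varphi$; since every basis formula is built from a subformula of $\varphi$, this is again bounded by $O(|\subf(\varphi)|)$.

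Second, I would combine these bounds: the total number of distinct proper subformulas of $\varphi_{\setmu,\setnu}$ is $O(|\subf(\varphi)|)$, where the hidden constant is small (roughly five, counting the forms $\flatten{\chi}{M}$, $\eval{\chi}{M}$, $\G(\eval{\chi}{M})$, $\evalgf{\chi}{N}$, $\F\G(\eval{\chi}{M})$, $\G\F(\evalgf{\chi}{N})$). Applying \cref{lem:aww:translation} with $i = 2$ yields an $\alw[2]$ with at most twice as many states recognising $\lang(\varphi_{\setmu,\setnu})$.

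The main obstacle is verifying carefully that the homomorphic nature of the three transformations really does prevent duplication of subformulas beyond a constant factor; in particular, one must check that the special cases for $\U$ and $\M$ in $\eval{\cdot}{M}$ (and dually for $\evalgf{\cdot}{N}$) do not introduce extra unrelated subformulas (they replace by $\false$ or recurse homomorphically), and that the $\W$ and $\R$ cases of $\flatten{\cdot}{M}$ only add a single new $\G(\eval{\cdot}{M})$ per subformula. A short structural induction on $\varphi$ and on basis formulas suffices for both claims.
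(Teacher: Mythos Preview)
Your approach is essentially the same as the paper's: both reduce to bounding $|\subf(\varphi_{M,N})|$ by $O(|\subf(\varphi)|)$ via structural induction on the recursive definitions of $\eval{\cdot}{M}$, $\evalgf{\cdot}{N}$, and $\flatten{\cdot}{M}$, then invoke \cref{lem:aww:translation}. One minor point: your stated containment for $\subf(\flatten{\varphi}{M})$ misses the auxiliary disjunction subformulas $\flatten{\psi_2}{M} \vee \G(\eval{\psi_1}{M})$ arising from the $\W$ case of \cref{def:evalmunu} (and dually for $\R$), but as you anticipate the full induction absorbs these with a small constant---the paper obtains the explicit bound $|\subf(\flatten{\varphi}{M}) \cup \subf(\eval{\varphi}{M})| \leq 4|\subf(\varphi)|$.
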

\begin{proof}
By \Cref{lem:aww:translation}, some $\alw[2]$ with 
$O(|\subf(\varphi_{\!\setmu,\setnu})|)$ states recognizes $\lang(\varphi_{\setmu,\setnu})$. So it suffices to show that $|\subf(\varphi_{\setmu,\setnu}) | \in O(|\subf(\varphi)|)$, but this is \cref{lem:subfbound}.
\end{proof}

\begin{theorem}
\label{thm:mainLTLDRW}
Let $\varphi$ be a formula with $n$ proper subformulas. Let  $\mathcal{A}_\varphi$ be the DRW obtained by
\begin{itemize}
\item normalizing $\varphi$ into  $\displaystyle \bigvee_{\setmu \subseteq \mubasis, \setnu \subseteq \nubasis} \varphi_{\setmu,\setnu}$ as in \cref{eq:norm};\\[-0.1cm]
\item constructing a $\alw[2]$ $\mathcal{A}_{M,N}$ for each $\varphi_{M, N}$ as in Definition \ref{def:aawforltl};
\item transforming each $\mathcal{A}_{M,N}$ into an equivalent DRW $\mathcal{A}'_{M,N}$ as in \Cref{lem:det:3}; and
\item constructing a DRW recognizing the union of the languages of all $\mathcal{A}'_{M,N}$.
\end{itemize}
The DRW  $\mathcal{A}_\varphi$  recognizes $\lang(\varphi)$, and has $2^{2^{\mathcal{O}(n^2)}}$ states and $2^n$ Rabin pairs.
\end{theorem}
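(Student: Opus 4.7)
The plan is to prove correctness by chaining the language-preserving constructions, and then to derive the size bounds from those of the individual steps.

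For correctness, the chain is immediate: by \cref{thm:normthm}, $\lang(\varphi) = \bigcup_{M \subseteq \mubasis,\, N \subseteq \nubasis} \lang(\varphi_{M,N})$; by \cref{lem:size:sf} each $\mathcal{A}_{M,N}$ is an $\alw[2]$ with $\lang(\mathcal{A}_{M,N}) = \lang(\varphi_{M,N})$; by \cref{lem:det:3} each $\mathcal{A}'_{M,N}$ is a DRW with the same language; and the standard product construction for the union of DRWs then yields $\lang(\mathcal{A}_\varphi) = \bigcup_{M,N}\lang(\varphi_{M,N}) = \lang(\varphi)$.

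For the size analysis I would proceed as follows. Each basis formula of $\mubasis \cup \nubasis$ is determined by a distinct $\U$-, $\M$-, $\W$-, or $\R$-subformula of $\varphi$, so $|\mubasis \cup \nubasis| \leq n$, and \cref{eq:norm} has at most $2^n$ disjuncts $\varphi_{M,N}$. By \cref{lem:size:sf}, each $\mathcal{A}_{M,N}$ is an $\alw[2]$ with $O(n)$ states, and \cref{lem:det:3} then gives a DRW $\mathcal{A}'_{M,N}$ with $2^{2^{O(n)}}$ states and a number of Rabin pairs equal to the number of minimal models of its initial formula. Taking the product of the $2^n$ automata $\mathcal{A}'_{M,N}$ for the union multiplies state counts and sums Rabin-pair counts; the state bound becomes $\bigl(2^{2^{O(n)}}\bigr)^{2^n} = 2^{2^{O(n)} \cdot 2^n} = 2^{2^{O(n)}} \subseteq 2^{2^{O(n^2)}}$, as required.

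The main obstacle is pinning down the Rabin-pair count as exactly $2^n$. I would exploit the fact that, by \cref{eq:norm}, each $\varphi_{M,N}$ is a conjunction of a single $\Sigma_2$-formula $\flatten{\varphi}{M}$ with $\F\G$- and $\G\F$-atoms; its marked form $\sta{\varphi_{M,N}}{\Delta_2}$ is therefore a conjunction whose constituents are atoms of the form $\atom{\cdot}{\Sigma_2}$ or $\atom{\cdot}{\Pi_2}$, contributing a single minimal model and hence a single Rabin pair per disjunct. Summing over the $2^n$ disjuncts gives the claimed $2^n$ Rabin pairs. All remaining steps reduce to direct invocations of \cref{thm:normthm}, \cref{lem:size:sf}, and \cref{lem:det:3}.
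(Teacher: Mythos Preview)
Your approach matches the paper's proof essentially step for step: invoke \cref{lem:size:sf} to get an $\alw[2]$ with $O(n)$ states for each $\varphi_{M,N}$, apply \cref{lem:det:3}, then take the union over the at most $2^n$ disjuncts. The correctness chain and the state-count arithmetic are both correct and the same as the paper's.

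The gap is in your justification of the Rabin-pair count. You claim that $\sta{\varphi_{M,N}}{\Delta_2}$ is a conjunction of single atoms and therefore has a unique minimal model. This fails on two counts. First, $\flatten{\varphi}{M}$ need not be proper: if $\varphi$ is a disjunction or conjunction at top level, so is $\flatten{\varphi}{M}$, and its marked form is then a positive Boolean combination rather than a single atom; a top-level disjunction immediately yields multiple minimal models. Second, even for a proper conjunct the smallest class need not be unique (cf.\ the paper's own remark about $\X a$ lying in both $\Sigma_1$ and $\Pi_1$), so the marked form can be a disjunction of two atoms. In either case $|\mathcal{M}_{\theta_0}| > 1$ and a direct appeal to \cref{lem:det:3} does not give one Rabin pair.

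To be fair, the paper's proof is equally terse at this point: it simply asserts ``a single Rabin pair'' after invoking \cref{lem:det:3}. The intended argument presumably bypasses the minimal-model decomposition of \cref{lem:det:3} and instead exploits the conjunctive shape of $\varphi_{M,N}$ directly: split it into the $\Sigma_2$-part $\flatten{\varphi}{M} \wedge \bigwedge \F\G(\cdot)$ and the $\Pi_2$-part $\bigwedge \G\F(\cdot)$, build the two automata, determinize each via \cref{lem:det:1} into a single DCW and a single DBW, and intersect for one Rabin pair. That is the line you should make explicit rather than the minimal-model claim.
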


\begin{proof}
Due to \cref{lem:size:sf}, the alternating automaton 
 $\mathcal{A}_{M,N}$ that recognizes $\lang(\varphi_{M,N})$ belongs to $\alw[2]$ and has $O(n)$ states. Applying the construction of \Cref{lem:det:3} we obtain a DRW with $2^{2^{O(n)}}$ states and a single Rabin pair. Using the union operation for DRWs we obtain a DRW for $\varphi$ with 
 $\left(2^{2^{O(n)}}\right)^{2^{O(n)}} = 2^{2^{O(n)}}$ states and $2^n$ Rabin pairs.
\end{proof}

\subsection{Determinization of Lower Classes}

We now determinize $\aww[1]$.
A deterministic automaton is \emph{terminal-accepting} if all states are rejecting except a single accepting sink with a self-loop, and \emph{terminal-rejecting} if all states are accepting except a single rejecting sink with a self-loop. It is easy to see that terminal-accepting and terminal-rejecting deterministic automata are closed under union and intersection. When applied to $\aww[1,\A]$, the construction of \Cref{lem:det:1}, yields automata whose states have a trivial $\Promising$ set (either the empty set or the complete level). Further, the successor of an $\alpha$-level is also an $\alpha$-level. From these observations we easily get:

\begin{corollary}\label{lem:det:4}
Let $\mathcal{A}$ be an automaton with $n$ states.
\begin{itemize}
   \item If $\mathcal{A} \in \aww[1,\R]$ (resp. $\mathcal{A} \in \aww[1,\A]$), then there exists a deterministic terminal-accepting (resp. termi\-nal-re\-jecting) automaton recognizing $\lang(\mathcal{A})$ with $2^{2^n}$ states.
    \item If $\mathcal{A} \in \aww[1]$, then there exists deterministic weak automaton recognizing $\lang(\mathcal{A})$ with $2^{2^{n + \log_2 |\mathcal{M}_{\theta_0}| + 1}}$ states.
\end{itemize}
\end{corollary}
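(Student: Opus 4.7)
The plan is to derive \Cref{lem:det:4} by specializing the determinization of \Cref{lem:det:1} to the degenerate one-alternation case, where both the $\Promising$ bookkeeping and the acceptance condition collapse. First I would handle the single-component bullet for $\aww[1,\R]$ and $\aww[1,\A]$, and then combine these via a product-over-minimal-models to obtain the general $\aww[1]$ bullet.

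For $\mathcal{A} \in \aww[1,\R]$, no reachable state lies in $\alpha$ (by height $1$ and $\theta_0 \in \mathcal{B}^+(Q \setminus \alpha)$), so an accepting run can only be finite, ending in leaves where $\delta(q, w[l]) \equiv \true$. Running the construction of \Cref{lem:det:1}, the promise $p$ starts at $\false$ and, by the substitution $\delta(\theta, a)[\false/Q \setminus \alpha]$, can become anything other than $\false$ only once $\delta(\theta, a) \equiv \true$; at that point $p$ remains $\true$ forever, so the co-Büchi condition \enquote{$p \ne \false$ eventually} reduces to \enquote{$\theta \equiv \true$ eventually}. Discarding $p$ and keeping just the subset component $\theta \in \mathcal{B}^+(Q)/{\equiv}$, I obtain a deterministic terminal-accepting automaton (single absorbing accepting sink $\theta \equiv \true$) with at most $|\mathcal{B}^+(Q)/{\equiv}| \le 2^{2^n}$ states. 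The $\aww[1,\A]$ case is symmetric via the dual Büchi construction: every reachable level is an $\alpha$-level, the promise collapses to either $\emptyset$ or the whole current level $\theta$, and acceptance reduces to \enquote{$\theta$ never becomes $\false$}, giving a terminal-rejecting automaton of the same size.

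For the general $\mathcal{A} \in \aww[1]$ with $\theta_0 \equiv \bigvee_{S \in \mathcal{M}_{\theta_0}} \bigwedge_{q \in S} q$ and $m = |\mathcal{M}_{\theta_0}|$, I would split each minimal model as $S = (S \cap \alpha) \uplus (S \setminus \alpha)$; by height $1$, the two subautomata with initial formulas $\bigwedge_{q \in S \cap \alpha} q$ and $\bigwedge_{q \in S \setminus \alpha} q$ lie in $\aww[1,\A]$ and $\aww[1,\R]$ respectively, and $\lang(\mathcal{A}_S)$ is their intersection. Determinizing each factor via the first bullet and taking their product yields a deterministic weak automaton with at most $2^{2^{n+1}}$ states for $\mathcal{A}_S$; unioning the $m$ resulting weak automata by a further product construction gives $\bigl(2^{2^{n+1}}\bigr)^{m} = 2^{m \cdot 2^{n+1}} = 2^{2^{n + \log_2 m + 1}}$ states overall. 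The one non-routine check is that the product of a terminal-accepting and a terminal-rejecting deterministic automaton (and more generally the $m$-fold product used for the union) is again \emph{weak}: each SCC of the product is contained in the product of SCCs of the factors, and since both terminal classes are trivially weak (a single absorbing sink plus a uniformly labelled transient part), the product's SCCs inherit uniformity of acceptance under the intersection and union conditions. Everything else reduces to direct arithmetic.
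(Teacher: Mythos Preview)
Your proposal is correct and follows essentially the same approach as the paper: specialize the construction of \Cref{lem:det:1} to height~$1$, where the $\Promising$ component becomes trivial, and then mimic the minimal-model decomposition of \Cref{lem:det:3} for the general $\aww[1]$ case. The paper's own argument is terser (it simply notes that for $\aww[1,\A]$ the $\Promising$ set is either empty or the full level, and that successors of $\alpha$-levels are $\alpha$-levels), but your analysis of how the promise component degenerates, your size arithmetic, and your explicit weakness check for the product are all in line with what the paper leaves implicit.
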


 \section{A Hierarchy of Alternating Weak and Very Weak Automata}
\label{sec:hierarchy-automata}

\newcommand{\awwG}{\aww_{\text{G}}}
\newcommand{\alwPS}{\alw_{\text{PS}}}

The expressive power of weak and very weak alternating automata has been studied
by Gurumurthy \textit{et al.} in \cite{GurumurthyKSV03} and 
by Pel{\'{a}}nek and Strejcek in \cite{PelanekS05},
respectively.  Both papers identify the number of alternations 
between accepting and non-accepting states as an important parameter, and define
a hierarchy of automata classes based on it.  Let $\awwG[k]$ denote the class of $\aww$ with at most $(k-1)$ alternations defined in \cite{GurumurthyKSV03}. Similarly, let $\alwPS[k,\A]$
and $\alwPS[k,\R]$ denote the classes of $\alw$ with at most $(k-1)$ alternations and 
accepting or non-accepting initial state, respectively, defined in \cite{PelanekS05}. Finally, define $\alwPS[k] = \alwPS[k,\A] \cup \alwPS[k,\R]$\footnote{In \cite{PelanekS05} the classes have different names.}.
\Cref{fig:hierarchies} shows the results of \cite{GurumurthyKSV03}  
and \cite{PelanekS05}. We abuse language, and, for example, write $\Pi_2 = \alwPS[2, \A]$
to denote that the class of languages satisfying formulas in $\Pi_2$ and the class of languages recognized by automata 
in $\alwPS[2, \A]$ coincide.

\begin{figure}[bt]
  \small
  \begin{center}
	\begin{tikzpicture}[x=1cm,y=0.60cm,outer sep=0pt,scale=0.9]
	
	\node (m3) at (0, 3.7) {$\omega$-regular}; \node[color=red] (m3l) at (2.5, 3.7) {$= \awwG[3]$};
	\node (m2) at (0, 2.5) {DBW $\cup$ DCW}; \node[color=red] (m2l) at (2.5, 2.5) {$= \awwG[2]$};
	\node (m1) at (0, 1.3) {safety $\cup$ co-safety}; \node[color=red] (m2l) at (2.5, 1.3) {$= \awwG[1]$};
	
	\path[-]
	(m2) edge node{} (m1)
     (m3) edge node{} (m2);

    \node (padding1) at ( 0,0.3) {};
	\node (1) at ( 0, 0) {$\Delta_2$}; \node[color=red] (1l) at (2.4, 0) {$= \alwPS[3,\A] \cap \alwPS[3,\R]$};
	\node (2) at ( 1,-1) {$\Pi_2$}; \node[color=red] (2l) at (2.3, -1) {$= \alwPS[2,\A]$};
	\node (3) at (-1,-1) {$\Sigma_2$}; \node[color=red] (3l) at (-2.3, -1) {$\alwPS[2,\R] = $};
    \node (4) at ( 0,-2) {$\Delta_1$}; \node[color=red] (4l) at (2.4, -2) {$= \alwPS[2,\A] \cap \alwPS[2,\R]$};
	\node (5) at ( 1,-3) {$\Pi_1$}; \node[color=red] (5l) at (2.3, -3) {$= \alwPS[1,\A]$};
	\node (6) at (-1,-3) {$\Sigma_1$}; \node[color=red] (6l) at (-2.3, -3) {$\alwPS[1,\R] = $};

	\path[-]
	(2) edge node{} (1)
    (3) edge node{} (1)
    
    (4) edge node{} (2)
    (4) edge node{} (3)
    
    (5) edge node{} (4)
    (6) edge node{} (4);
	\end{tikzpicture}
  \end{center}
\caption{Expressive power of AWWs after Gurumurthy \textit{et al.} \cite{GurumurthyKSV03}, and 
of A1Ws after Pel{\'{a}}nek and Strejcek \cite{PelanekS05}.}
\label{fig:hierarchies-aww}
\end{figure}
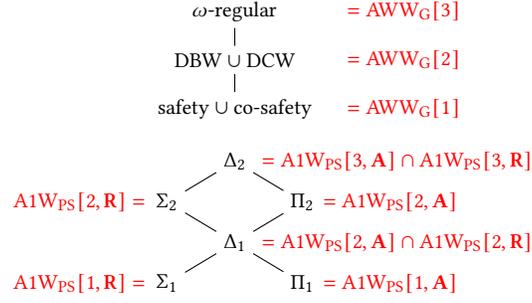

Unfortunately, the results of \cite{GurumurthyKSV03} and \cite{PelanekS05} do not \enquote{match}. Due to slight differences in the definitions of height, e.g.\ the treatment of $\delta(\cdot) = \false$ and $\delta(\cdot) = \true$, the restriction to very weak automata of $\awwG[k]$ does not match any class $\alwPS[k']$ (that is, $\awwG[k] \cap \alw \neq \alwPS[k']$) and, vice versa, extending $\alwPS[k]$ does not yield any $\awwG$ $[k']$. We show that our new definition of height unifies the two hierarchies, yielding the pleasant result shown in \Cref{fig:hierarchies2}.

\begin{figure}[bt]
  \small
  \begin{center}
	\begin{tikzpicture}[x=1cm,y=0.60cm,outer sep=0pt,scale=0.9]
	
	\def\d{1.1}
	
	\node (m3) at (0, 4*\d) {$\omega$-regular}; \node[color=red] (m3l) at (1.6, 4*\d) {$= \aww[2]$};
	\node (m2le) at (-0.7, 3*\d) {\dcw}; \node[color=red] (m2l) at (-2.2, 3*\d) {$\aww[2, \R] =$};
	\node (m2ri) at (0.7, 3*\d) {\dbw}; \node[color=red] (m2l) at (2.2, 3*\d) {$= \aww[2, \A]$};
	\node (m1) at (0, 2*\d) {DWW}; \node[color=red] (m2l) at (1.4, 2*\d) {$= \aww[1]$};
	\node (m0le) at (-0.7, 1*\d) {co-safety}; \node[color=red] (m2l) at (-2.5, \d) {$\aww[1,\R] =$};
	\node (m0ri) at (0.7, 1*\d) {safety}; \node[color=red] (m2l) at (2.3, \d) {$= \aww[1,\A]$};
	
	\path[-]
	(m1) edge node{} (m0ri)
     (m1) edge node{} (m0le)
	(m2le) edge node{} (m1)
	(m2ri) edge node{} (m1)
     (m3) edge node{} (m2ri)
     (m3) edge node{} (m2le);

    \node (padding1) at ( 0,0.3) {};
	\node (1) at ( 0, 0) {$\Delta_2$}; \node[color=red] (1l) at (1, 0) {$= \alw[2]$};
	\node (2) at (0.7,-1*\d) {$\Pi_2$}; \node[color=red] (2l) at (1.9, -1*\d) {$= \alw[2,\A]$};
	\node (3) at (-0.7,-1*\d) {$\Sigma_2$}; \node[color=red] (3l) at (-1.9, -1*\d) {$\alw[2,\R] = $};
    \node (4) at ( 0,-2*\d) {$\Delta_1$}; \node[color=red] (4l) at (1, -2*\d) {$= \alw[1]$};
	\node (5) at ( 0.7,-3*\d) {$\Pi_1$}; \node[color=red] (5l) at (1.9, -3*\d) {$= \alw[1,\A]$};
	\node (6) at (-0.7,-3*\d) {$\Sigma_1$}; \node[color=red] (6l) at (-1.9, -3*\d) {$\alw[1,\R] = $};

	\path[-]
	(2) edge node{} (1)
     (3) edge node{} (1)
    
    (4) edge node{} (2)
    (4) edge node{} (3)
    
    (5) edge node{} (4)
    (6) edge node{} (4);

	\end{tikzpicture}
  \end{center}
\caption{Expressive power of AWWs and A1Ws}
\label{fig:hierarchies2}
\end{figure}
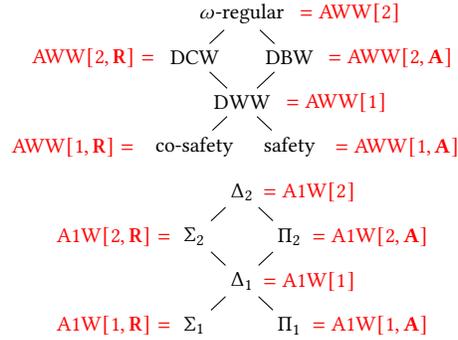

\begin{proposition}\label{prop:hierarchy:aww} 
We have:
\begin{enumerate}
\item $\!\aww[2] =$ $\omega$-regular, $\aww[2,\A]$ $=$ $\dbw$, and $\aww[2,\R]$ $=$ $\dcw$.
\item $\aww[1]$ $\!=$ \emph{DWW}, $\aww[1,\A] =$ safety, and $\aww[1,\R] =$ co-safety.
\item For $i\in \{1,2\}$,  $\alw[i] = \Delta_i$, $\alw[i,\R] = \Sigma_i$, and $\alw[i,\A] = \Pi_i$.
\end{enumerate}
\end{proposition}
\begin{proof} We sketch the proof of (1) and (3). The proof of (2) is analogous to that of (1). 

\medskip

\noindent (1): The $\subseteq$-inclusion follows immediately from \Cref{lem:det:1,lem:det:3} and \Cref{lem:det:4}. The $\supseteq$-inclusion is a slight adaptation of similar proofs in \cite{GurumurthyKSV03}. In order to translate a \dcw{} into a $\aww[2,\R]$ we duplicate the set of states into two sets of marked and unmarked states. We remove from the marked states all rejecting states, and add transitions that allow unmarked states to nondeterministically choose to move to another unmarked state, or to its marked copy. Finally, we define all unmarked states to be rejecting and all marked states to be accepting. The proof of $\aww[2,\A] \supseteq \dbw{}$ is dual. Finally, the inclusion $\aww[2] \supseteq \text{$\omega$-regular}$ follows from the previous two results; indeed, every $\omega$-regular language is recognized by a \drw{} \cite{Safra88}, and every \drw{} is equivalent to a Boolean combination of {\dbw}s and {\dcw}s, which we can express in the initial formula $\theta_0$ of the $\aww[2]$. 

\medskip

\noindent (3): The $\supseteq$-inclusion for $\Delta_i$ is proven in \Cref{lem:aww:translation}. For a formula $\varphi$ that belongs to $\Sigma_i$ ($\Pi_i$) we also rely on \Cref{lem:aww:translation}, but add a new initial state, $\sta{\varphi}{\Sigma_i}$ ($\sta{\varphi}{\Pi_i}$) that is marked as rejecting (accepting) such that the automaton belongs to $\alw[i,\R]$ ($\alw[i,\A]$). 

For the $\subseteq$-inclusion, let $\mathcal{A} = \langle \Sigma, Q, \theta_0, \delta, \alpha \rangle$ be a very weak alternating automaton with $\Sigma = 2^{Ap}$. We use the translation from $\alw$ to LTL presented in \cite[Thm. 6]{LodingT00}, with minimal modifications, to define a formula $\psi_\mathcal{A}$ such that $\lang(\psi_\mathcal{A})=\lang(\mathcal{A})$. Then, we show that when $\mathcal{A}$ belongs to one of the classes in the hierarchy,
$\psi_\mathcal{A}$ belongs to the corresponding class of formulas. For the proof of correctness of the translation we refer the reader to \cite{LodingT00}.

For the definition of $\psi_\mathcal{A}$, we assign to every $\theta \in \mathcal{B}^+(Q)$ an LTL formula $\psi(\theta)$ such that $\lang(\psi(\theta)) = \lang(\mathcal{A}_\theta)$, where $\mathcal{A}_\theta$ denotes $\mathcal{A}$ with $\theta$ as initial formula, and set $\psi_\mathcal{A} := \psi(\theta_0)$. Similarly, for the definition of $\psi(\theta)$, we first assign a formula $\psi(q)$ to every state $q$, and then define $\psi(\theta)$ as the result of substituting $\psi(q)$ for $q$ in $\theta$, for every state $q$. It remains to define $\psi(q)$. Using that $\mathcal{A}$ is very weak, we proceed inductively, i.e., we assume that $\psi(q')$ has already been defined for all $q'$ such that $q \rightarrow q'$ and $q \neq q'$.  For every $q \in Q$ and $\sigma \in 2^{Ap}$, let $\theta_{q,\sigma}$ and $\theta'_{q,\sigma}$ be formulas such that  $\delta(q, \sigma) \equiv (q \wedge \theta_{q,\sigma}) \vee \theta'_{q,\sigma}$ (it is easy to see that they exist). Define
\[\psi(q) := \begin{cases}
	\varphi_q \U \varphi_q' & \text{if $q \notin \alpha$} \\
	\varphi_q \W \varphi_q' & \text{if $q \in \alpha$} \\
\end{cases}\]
\noindent where  \[ \varphi_q  := \bigvee_{\sigma \subseteq \Sigma} \left(\bigwedge_{a \in \sigma} a \wedge \bigwedge_{a \notin \sigma} \neg a \wedge \X\, \psi(\theta_{q,\sigma})\right)  \qquad \varphi_q'  := \bigvee_{\sigma \subseteq \Sigma} \left(\bigwedge_{a \in \sigma} a \wedge \bigwedge_{a \notin \sigma} \neg a \wedge \X\, \psi(\theta'_{q,\sigma})\right)  \ . \]

Since this translation assigns to each $\U$-formula a rejecting state and to each $\W$-formula an accepting state, the syntax tree of $\psi_\mathcal{A}$ has an alternation between $\U$ and $\W$ exactly when there is an alternation between accepting and non-accepting states. This yields all the desired inclusions in $\Sigma_1$, $\Pi_1, \ldots,$ $\Delta_2$. 
\end{proof}

\noindent Moreover, our single exponential normalization procedure for LTL transfers to a single exponential normalization procedure for $\alw$:
 
\begin{lemma}
Let $\mathcal{A}$ be an $\alw$ with $n$ states over an alphabet with $m$ letters. There exists $\mathcal{A}' \in \alw[2]$ with $2^{\mathcal{O}(nm)}$ states such that $\lang(\mathcal{A}) = \lang(\mathcal{A}')$.
\end{lemma}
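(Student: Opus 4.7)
The plan is to chain three constructions: translate $\mathcal{A}$ into an equivalent LTL formula, normalize the formula to lie in $\Delta_2$, and convert the normalized formula back into an $\alw[2]$. These steps use, respectively, the inclusion $\alw \subseteq \mathrm{LTL}$ (classical, and consistent with \Cref{prop:hierarchy:aww}), the Normalization Theorem (\Cref{thm:rewriting:main}), and the $\Delta_2 \to \alw[2]$ translation provided by \Cref{def:aawforltl} and \Cref{lem:aww:translation}.

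First, I would define an LTL formula $\varphi_q$ for each state $q$ of $\mathcal{A}$ by induction on the topological ordering of the singleton SCCs guaranteed by very-weakness. For a state $q$ without a self-loop, I set $\varphi_q \coloneqq \bigvee_{a \in \Sigma} (\chi_a \wedge \X \widehat{\delta}(q,a))$, where $\chi_a$ is the conjunction of literals describing the letter $a$ and $\widehat{\delta}(q,a)$ substitutes $\varphi_{q'}$ for every state $q'$ appearing in $\delta(q,a)$. For a self-looping $q$, the analogous substitution (omitting $q$ itself) yields an ``exit'' formula, which I then wrap with $\U$ if $q \notin \alpha$ or with $\W$ if $q \in \alpha$, capturing the self-loop semantics. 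Finally $\varphi$ is obtained from $\theta_0$ by the same substitution. With the convention that each transition $(q,a)$ contributes $\mathcal{O}(1)$ to the size (equivalently, with subformula sharing), $|\varphi| \in \mathcal{O}(nm)$.

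Second, applying the Normalization Theorem (\Cref{thm:rewriting:main}) to $\varphi$ yields an equivalent $\varphi' \in \Delta_2$ with at most $4^{7|\varphi|} = 2^{\mathcal{O}(nm)}$ nodes. Third, applying \Cref{def:aawforltl} and \Cref{lem:aww:translation} to $\varphi'$ yields an $\alw[2]$ $\mathcal{A}'$ with $\mathcal{O}(|\varphi'|) = 2^{\mathcal{O}(nm)}$ states and $\lang(\mathcal{A}') = \lang(\varphi') = \lang(\varphi) = \lang(\mathcal{A})$.

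The main obstacle will be the first step: a naive recursive substitution duplicates subformulas along branches of the state DAG and can yield a tree-formula of size exponential in $n$. This is resolved either by working with a DAG representation of $\varphi$ in which shared subformulas are counted only once, or by appealing to the classical linear-size $\alw \to \mathrm{LTL}$ translations (as used to establish the automata-logic correspondences of \Cref{prop:hierarchy:aww}), which give $|\varphi| \in \mathcal{O}(nm)$ directly. The other two steps are direct applications of results already proven in the paper.
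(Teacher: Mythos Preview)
Your overall plan is exactly the paper's: chain $\alw \to \mathrm{LTL} \to \Delta_2 \to \alw[2]$. The paper's proof is three sentences invoking, in order, the L{\"o}ding--Thomas style translation, the normalization procedure, and \Cref{lem:aww:translation}.

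There is one genuine bookkeeping mismatch in your version. You invoke \Cref{thm:rewriting:main}, whose bound $4^{7\nnodes{\varphi}}$ is stated in terms of \emph{syntax-tree nodes} $\nnodes{\varphi}$, and then claim $\nnodes{\varphi}\in\mathcal{O}(nm)$. But the $\alw\to\mathrm{LTL}$ translation you describe (and the L{\"o}ding--Thomas translation the paper cites) only guarantees $\mathcal{O}(nm)$ \emph{distinct proper subformulas}; the unfolded syntax tree can be exponential in $n$, as you yourself note. Your proposed fixes do not close the gap: ``DAG representation'' means $|\subf(\varphi)|\in\mathcal{O}(nm)$, not $\nnodes{\varphi}\in\mathcal{O}(nm)$, and \Cref{thm:rewriting:main} as stated does not apply to DAG size; the ``classical linear-size translations'' likewise bound subformula count, not tree size.

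The paper sidesteps this by working with subformula counts throughout and invoking \Cref{lem:size:sf} (the finer analysis of the closed-form normalization of \Cref{thm:normthm}) rather than \Cref{thm:rewriting:main}: the original formula has $\mathcal{O}(mn)$ proper subformulas, \Cref{lem:size:sf} yields a normalized formula with $2^{\mathcal{O}(mn)}$ proper subformulas, and \Cref{lem:aww:translation} (which also counts proper subformulas, not nodes) then gives the $\alw[2]$ with $2^{\mathcal{O}(mn)}$ states. Swapping your appeal to \Cref{thm:rewriting:main} for \Cref{lem:size:sf} fixes your argument.
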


\begin{proof}
The translation from $\alw$ to LTL used in \Cref{prop:hierarchy:aww} (an adaption of \cite{LodingT00}) yields a formula $\psi_\mathcal{A}$ with at most $\mathcal{O}(mn)$ proper subformulas. Applying our normalization procedure to $\psi_\mathcal{A}$ yields an equivalent formula in $\Delta_2$ with at most $2^{\mathcal{O}(mn)}$ proper subformulas (\Cref{lem:size:sf}). 
Applying \Cref{lem:aww:translation} we obtain the postulated automaton $\mathcal{A}'$.
\end{proof} 
 \section{Conclusion}
\label{sec:concl}

We have presented two purely syntactic normalization procedures for LTL that transform a given formula into an equivalent formula in $\Delta_2$, i.e., a formula with at most one alternation
between least- and greatest-fixpoint operators. The procedure has single exponential blow-up,
improving on the prohibitive non-elementary cost of previous constructions. The much better complexity of the new procedure (recall that normalization procedures for CNF and DNF are
also exponential) makes it attractive for its implementation and use in tools. We have presented 
a first promising application, namely a novel translation from LTL to DRW with double exponential blow-up. Finally, we have shown that the normalization procedure for LTL can be transferred to a 
normalization procedure for very weak alternating automata.

We think that these results demystify the Normalization Theorem of Chang, Manna, and Pnueli, which heavily relied on automata-theoretic results, and involved a nonelementary blowup. Indeed, the only conceptual difference between our rewrite system and the one for bringing Boolean formulas in CNF is the use of rewrite rules with contexts.

Our normalization procedure  has already found applications to the translation of LTL formulas into deterministic or limit-deterministic $\omega$-automata \cite{KretinskyMS18,MeyerSL18}. Until now normalization had not been considered, because of the non-elementary blow-up, much higher than the double exponential blow-up of existing constructions. With our new procedure, translations that first normalize the formula, and then apply efficient formula-to-automaton procedures specifically designed for formulas in normal form, have become competitive. Our system of rewriting rules makes this even more attractive. More generally, we think that the design of analysis procedures for formulas in normal form (to check satisfiability, equivalence, or other properties) should be further studied in the coming years.

\begin{acks}
We thank Nathana{\"{e}}l Fijalkow, J\"urgen Giesl, Marcin Jurdcinski, Jan K\v{r}et{\'{\i}}nsk{\'{y}}, and Orna Kupferman for very helpful comments and remarks. This work was partially supported by the \grantsponsor{DFG}{Deutsche Forschungsgemeinschaft (DFG)}{https://doi.org/10.13039/501100001659} under projects \grantnum{DFG1}{183790222}, \grantnum{CAVA2}{317422601}, and \grantnum{DFG3}{436811179}; by the \grantsponsor{ERC}{European Research Council (ERC)}{https://doi.org/10.13039/501100000781} under the European Union's Horizon 2020 research and innovation programme under grant agreement No~\grantnum{PaVeS}{787367} (PaVeS); by the \grantsponsor{AEI}{Agencia Estatal de Investigación (AEI)}{https://doi.org/10.13039/501100011033} under project \grantnum{ProCode}{PID2019-108528RB-C22}; and by the \grantsponsor{MU}{Spanish Ministry of Universities}{https://doi.org/10.13039/100014440} under grants FPU17/02319 and EST21/00536.
\end{acks}

\bibliographystyle{ACM-Reference-Format}

\end{document}